\newtheorem{definition}{Definition}
\newtheorem{lemma}[definition]{Lemma}
\newtheorem{theorem}[definition]{Theorem}
\newtheorem{lemma1}[definition]{Lemma}
\newtheorem{corollary}[definition]{Corollary}
\newtheorem{example}[definition]{Example}
\newtheorem{conjecture}[definition]{Conjecture}
\newcommand{\pcclone}[1]{\ensuremath{\langle #1 \rangle_{\not \exists}}}
\newcommand{\ar}{\ensuremath{\mathrm{ar}}}
\newcommand{\problemFont}[1]{\textsc{#1}}
\newcommand{\SAT}{\protect\ensuremath\problemFont{SAT}}
\newcommand{\sat}[1]{\ensuremath{\Gamma^{\scriptscriptstyle #1}_{\mathit{\scriptscriptstyle \mathrm{SAT}}}}}
\newcommand{\CSP}{\protect\ensuremath\problemFont{CSP}}
\newcommand{\cclone}[1]{\ensuremath{\langle #1 \rangle}}
\newcommand{\eq}{\ensuremath{{\rm Eq}}}
\newcommand{\ppol}{\ensuremath{{\rm pPol}}}
\newcommand{\pol}{\ensuremath{{\rm Pol}}}
\newcommand{\inv}{\ensuremath{{\rm Inv}}}
\newcommand{\B}{\ensuremath{\mathbb{B}}}
\newcommand{\Rdddp}{\ensuremath{R^{\scriptscriptstyle \neq \neq \neq 
    0 1}_{\scriptscriptstyle 1/3}}}
\newcommand{\Rddd}{\ensuremath{R^{\scriptscriptstyle \neq \neq \neq}_{\scriptscriptstyle 1/3}}}
\newcommand{\RD}[1]{\ensuremath{S_{#1}}}
\newcommand{\RB}{\ensuremath{S_{\mathbb{B}}}}
\newcommand{\pro}{\ensuremath{{\mathrm{Proj}}}}
\newcommand{\reduces}{\ensuremath{\leq^{\mathrm{CV}}}}
\newcommand{\setcolumns}{\ensuremath{\mathop{SetCols}}}
\newcommand{\domain}{\ensuremath{{\mathrm{dom}}}}
\newcommand{\const}[1]{\ensuremath{R^{#1}}}
\begin{document}
\title{Time Complexity of Constraint Satisfaction via Universal Algebra}

\author[1]{Peter Jonsson\thanks{peter.jonsson@liu.se}}
\author[2]{Victor
  Lagerkvist\thanks{victor.lagerqvist@tu-dresden.de}}
\author[3]{Biman Roy\thanks{biman.roy@liu.se}}
\affil[1]{\small Department of Computer and Information Science, Link\"oping
  University, Link\"oping, Sweden}
\affil[2]{\small Institut f\"ur Algebra, TU Dresden, Dresden, Germany}
\affil[3]{\small Department of Computer and Information Science, Link\"oping
  University, Link\"oping, Sweden}

\date{}
\maketitle

\abstract{The {\em
    exponential-time hypothesis} (ETH) states that $3$-SAT is 
  not solvable in subexponential time, i.e.\ not solvable in $O(c^{n})$ time for
  arbitrary $c > 1$, where $n$ denotes the number of variables.
  Problems like $k$-SAT can be viewed as special cases of the {\em
    constraint satisfaction problem} (CSP), which is the problem of determining
  whether a set of constraints is satisfiable. In this paper we study 
  the worst-case time complexity of NP-complete
  CSPs. Our main interest is in the CSP problem parameterized by a constraint
  language $\Gamma$ (CSP$(\Gamma)$), and how the choice of $\Gamma$
  affects the time complexity.
  It is believed that $\CSP(\Gamma)$ is
  either tractable or NP-complete, and the {\em algebraic CSP dichotomy
    conjecture} gives a sharp delineation of these two classes based on
  algebraic properties of constraint languages. Under this conjecture
  and the ETH, we first rule out the existence of subexponential
  algorithms for finite-domain NP-complete CSP$(\Gamma)$ problems. This
  result also extends to certain infinite-domain CSPs and structurally
  restricted $\CSP(\Gamma)$ problems. We then begin a study of the
  complexity of NP-complete CSPs where one is allowed to arbitrarily
  restrict the values of individual variables, which is a very well-studied
  subclass of CSPs. For such CSPs with finite
  domain $D$, we identify a relation $\RD{D}$ such that
  (1) CSP$(\{\RD{D}\})$ is NP-complete and (2) if $\CSP(\Gamma)$ over
  $D$ is
  NP-complete and solvable in $O(c^n)$ time, then $\CSP(\{\RD{D}\})$ is
  solvable in $O(c^n)$ time, too. Hence, the time complexity of
  $\CSP(\{\RD{D}\})$ is a lower bound for all CSPs of this particular
  kind. We also prove that the complexity of $\CSP(\{\RD{D}\})$ is
  decreasing when $|D|$ increases, unless the ETH is
  false. This implies, for instance, that for every $c>1$ there
  exists a finite-domain
  $\Gamma$  such that CSP$(\Gamma)$ is NP-complete and solvable in $O(c^{n})$ time.}

\section{Introduction}
The {\em constraint satisfaction problem} over a constraint language
$\Gamma$ ($\CSP(\Gamma)$) is the computational decision problem of
verifying whether a set of constraints over $\Gamma$ is
satisfiable or not. This problem is widely studied from both a theoretical
and a practical standpoint. From a practical point of view this
problem can be used to model many natural problems occurring in
real-world applications. From a more theoretical point of view the
$\CSP$ problem is (among several other things) 
of great interest due to its connections with {\em
universal algebra}. It is widely believed that finite-domain CSP
problems admit a dichotomy between tractable and NP-complete problems,
and the so-called {\em algebraic approach} has been used to conjecture an exact
borderline between tractable and NP-complete
problems~\cite{bulatov2005}. This conjectured borderline is sometimes
called the {\em algebraic CSP dichotomy conjecture}. 
The gist of the algebraic approach is to associate an algebra, a set
of functions satisfying a certain closure property, to each constraint
language. This associated algebra is usually referred to as the {\em
polymorphisms} of a constraint language, and is known to determine
the complexity of a $\CSP$ problem up to polynomial-time many-one
reductions~\cite{jeavons1997}.  However, the mere fact that two CSPs
are polynomial-time interreducible does not offer much insight into
their relative worst-case time complexity. For example, on the one
hand, it has been conjectured that the Boolean satisfiability problem
with unrestricted clause length, SAT, is not solvable strictly faster
than $O(2^n)$, where $n$ denotes the number of
variables~\cite{impagliazzo2001}. On the other hand, $k$-SAT is known
to be solvable strictly faster than $O(2^n)$ for every $k \geq
1$~\cite{hertli2014}, and even more efficient algorithms are known for
severely restricted satisfiability problems such as
1-in-3-SAT~\cite{wahlstrom2007}. This discrepancy in complexity stems
from the fact that a polynomial time reduction can change the
structure of an instance and e.g.\ introduce a large number of
fresh variables.
Hence, it is worthwhile to study the complexity of NP-complete CSPs
using more fine-grained notions of reductions.
To make this a bit more precise, given a constraint language
$\Gamma$ we let \[{\sf T}(\Gamma) = \inf\{c \mid \CSP(\Gamma) \textrm{ is solvable in
  time } 2^{cn}\}\]
where $n$ denotes the number of variables. 
If ${\sf T}(\Gamma) = 0$ then $\CSP(\Gamma)$ is said to be solvable in {\em
subexponential time}, and the conjecture that 3-SAT is not solvable in
subexponential time is known as the {\em exponential-time
  hypothesis} (ETH)~\cite{impagliazzo2001}. It is worth remarking that
no concrete values of ${\sf T}(\Gamma)$ are known when $\CSP(\Gamma)$
is NP-complete. Despite
this, studying properties of the function ${\sf T}$ can still be of great
interest since such properties can be used to compare and relate the worst-case
running times of NP-complete $\CSP$ problems. Moreover,
for Boolean constraint languages, several properties of the function
${\sf T}$ are known.
For example, it is
known that there exists a finite Boolean constraint language $\Gamma$
such that $\CSP(\Gamma)$ is NP-complete and ${\sf T}(\Gamma) = 0$ if
and only if ${\sf T}(\Gamma) = 0$ for {\em every} Boolean constraint
language $\Gamma$~\cite{jonsson2017}. Hence, even though the status of the ETH is unclear
at the moment, finding a subexponential time algorithm for one
NP-complete Boolean $\CSP$ problem is tantamount to being able to
solve every Boolean $\CSP$ problem in subexponential time. It is also
known that there exists a Boolean relation $R$ such that $\CSP(\{R\})$
is NP-complete but ${\sf T}(\{R\}) \leq {\sf T}(\Gamma)$ for {\em
every} Boolean constraint language $\Gamma$ such that $\CSP(\Gamma)$
is NP-complete. In Jonsson et al.~\cite{jonsson2017} this problem is
referred to as the {\em easiest NP-complete Boolean CSP problem}. The
existence of this relation e.g.\ rules out the possibility that 
for each Boolean constraint language $\Gamma$ there exists $\Delta$ such
that ${\sf T}(\Delta) < {\sf T}(\Gamma)$ --- a scenario which
otherwise would have been compatible with the ETH. These results were
obtained by considering more refined algebras than polymorphisms,
so-called {\em partial polymorphisms}. We will describe this algebraic
approach in greater detail later on, but the most important property
is that the partial polymorphisms of finite constraint languages
give rise to a partial order $\sqsubseteq$ with the property that if $\Gamma \sqsubseteq
\Delta$, then ${\sf T}(\Gamma) \leq {\sf T}(\Delta)$. We remark that
partial polymorphisms are not only useful when studying CSPs with this
very fine-grained notion of complexity, but have also
been used to study the classical complexity of many different computational problems where
polymorphisms are not applicable~\cite{herman2015,herman2016,bohler2002,bulatov2012,ham2017}.

Hence, even though no concrete values are known for ${\sf T}(\Gamma)$
when $\CSP(\Gamma)$ is NP-complete, quite a lot is known concerning
the relationship between ${\sf T}(\Gamma)$ and ${\sf T}(\Delta)$ for
Boolean $\Gamma$ and $\Delta$. In this paper we study similar
properties of the function ${\sf T}$ for constraint languages defined
over arbitrary finite domains. 
After having introduced the necessary definitions in
Section~\ref{section:preliminaries}, in Section~\ref{section:eth}
we consider the existence of subexponential time algorithms
for NP-complete CSP problems, in light of the ETH and the algebraic
CSP dichotomy conjecture. For this question we obtain a complete
understanding and prove that, assuming the algebraic CSP dichotomy
conjecture, the ETH is false if and only if (1) there exists a finite
constraint language $\Gamma$ over a finite domain such that
$\CSP(\Gamma)$ is NP-complete and ${\sf T}(\Gamma) = 0$, if and only
if (2) ${\sf T}(\Gamma) = 0$ for every finite constraint language
$\Gamma$ defined over a finite domain. In other words, finding a
subexponential time algorithm for a single NP-complete, finite-domain
$\CSP$ problem is tantamount to being able to solve all $\CSP$
problems in subexponential time. We also study structurally restricted
$\CSP$s where the maximum number of constraints a variable may
appear in is bounded by a constant $B$ ($\CSP(\Gamma)$-$B$). For
problems of this form our results are not as sharp, but we prove that,
again assuming the algebraic CSP dichotomy conjecture, that if
$\CSP(\Gamma)$ is NP-complete and $\Gamma$ satisfies an additional
algebraic condition, then there exists a constant $B$ such that
$\CSP(\Gamma)$-$B$ is not solvable in subexponential time (unless the
ETH is false). We also
remark that our proof extends to certain constraint languages defined
over infinite domain, and give several examples of infinite-domain
NP-complete CSP problems that are not solvable in subexponential time,
unless the ETH is false. These results may be interesting to compare
to those of De Haan et al.~\cite{szeider2015}, who study 
subexponential algorithms for structurally
restricted CSPs. One crucial difference to our results is that De Haan
et al.\ do not consider constraint language restrictions. For example,
it is proven that $\CSP(\Delta)$-$2$, where $\Delta$ is the set of all
finitary relations of finite cardinality, is not solvable in subexponential
time unless the ETH is false. However, a result of this form tells us
very little about the complexity of $\CSP(\Gamma)$-2 for specific
constraint languages, since it does not imply that
$\CSP(\Gamma)$-2 is not solvable in subexponential time for every
$\Gamma$ such that $\CSP(\Gamma)$-2 is NP-complete.

We have thus established that ${\sf T}(\Gamma) > 0$ for every
NP-complete, finite-domain $\CSP(\Gamma)$, assuming the ETH and the
algebraic CSP dichotomy conjecture. This immediately raises the
question of which further insights can be gained concerning the
behaviour of the function ${\sf T}$.  For example, for a fixed finite
domain, is it possible to construct an infinite chain of NP-complete
CSPs with strictly decreasing complexity such that ${\sf T}$ tends to
0?  We study such questions in Section~\ref{section:easiest_csp} for
CSPs where one in an instance is allowed to restrict the values of
individual variables arbitrarily. This restricted CSP problem is
particularly well-studied, and it is used as {\em the} definition of
CSPs in many cases: see, for instance, the textbook by Russell and
Norvig~\cite[Section 3.7]{DBLP:books/daglib/0023820} and the handbook
by Rossi et al.~\cite[Section 2]{DBLP:reference/fai/2}.  This may be
viewed as restricting oneself to constraint languages that contain all
unary relations.  A closely related restriction (that is typically
used when studying CSPs from the algebraic viewpoint) is that every
unary relation is primitively positively definable in $\Gamma$ (see
Section~\ref{section:preliminaries}).  Such constraint languages are
known as {\em conservative}.  These two restrictions are
computationally equivalent up to polynomial-time many-one reductions
but it is not known whether they are equivalent under reductions that
preserve time complexity.  Thus, we need to separate them, so we say
that a constraint language that contains all unary relations is {\em
ultraconservative}.  We note that the algebraic CSP dichotomy
conjecture has been verified to hold for the conservative
CSPs~\cite{bulatov2011} so it holds for ultraconservative CSPs, too.
We show that for every finite domain $D$ there exists a relation
$\RD{D}$ such that $\CSP(\{\RD{D}\})$ is NP-complete and ${\sf
T}(\{\RD{D}\}) = {\sf T}(\{\RD{D}\} \cup 2^D) \leq {\sf T}(\Gamma)$
for every ultraconservative and NP-complete $\CSP(\Gamma)$ over $D$. This
relation will be formally defined in Section~\ref{sec:extensions}, but
is worth pointing out that $\RD{D}$ contains only three tuples and
that $\CSP(\{\RD{D}\})$ can be viewed as a higher-domain variant of the
monotone 1-in-3-SAT problem.  We refer to $\CSP(\{\RD{D}\} \cup 2^D)$
as the {\em easiest NP-complete ultraconservative CSP problem over
$D$}\footnote{Note that $2^D$ is the set of all unary relations over $D$.}. Note that the properties of the relation $\RD{D}$ rule out the
possibility of an infinite sequence of ultraconservative languages
$\Gamma_1, \Gamma_2, \ldots$ such that each $\CSP(\Gamma_i)$ is
NP-complete and ${\sf T}(\Gamma_i)$ tends to 0, but also have stronger
implications, since the value ${\sf T}(\{\RD{D}\})$ is a conditional
lower bound for the complexity of {\em all} NP-complete,
ultraconservative CSPs over $D$.

To prove these results we have to overcome several major obstacles.
Similar to Jonsson et al.~\cite{jonsson2017}) we use partial
polymorphisms instead of total polymorphisms in order to achieve more
fine-grained notions of reductions. However, the proof strategy used
in Jonsson et al.~\cite{jonsson2017} does not work for arbitrary
finite domains since it requires a comprehensive understanding of the
polymorphisms of constraint languages resulting in NP-complete CSPs,
which is only known for the Boolean domain~\cite{pos41}.  Our first
observation to tackle this difficulty is that the reformulation of
conservative CSP dichotomy theorem making use of {\em primitive
positive interpretations} (pp-interpretations) is useful in our
context. At the moment, we may think of a pp-interpretation as a tool
which allows us to compare the expressitivity of constraint languages
defined over diferent domains, modulo logical formulas consisting of
existential quantification, conjunction, and equality constraints. It
is well-known that pp-interpretations can be used to obtain
polynomial-time reductions between CSPs, and that a conservative
$\CSP(\Gamma)$ problem is NP-complete if and only if $\Gamma$
pp-interprets 3-SAT~\cite{barto2014,bulatov2011}.  However, as already
pointed out, such reductions are not useful when studying CSPs with
respect to the function ${\sf T}$, and it is a priori not evident how
the assumption that $\Gamma$ can pp-interpret 3-SAT can be used to
show that ${\sf T}(\{\RD{D}\}) \leq {\sf T}(\Gamma)$. Using properties of conservative
constraint languages and quantifier-elimination techniques we in
Section~\ref{sec:extensions} first show
that this assumption can be used to prove there exists a
relation $R$ over $D$ of cardinality 3 such that (1) $\CSP(\{R\})$ is
NP-complete and (2) ${\sf T}(\{R\}) \leq {\sf T}(\Gamma)$. However, this is
not enough in order to isolate a unique easiest problem, since there
for every finite domain exists a large number of such relations. In
Section~\ref{section:extension_complexity}, using a combination of
partial clone theory and size-preserving reductions, we show that ${\sf
T}(\{\RD{D}\}) \leq {\sf T}(\{R\})$ for every such relation $R$ of
cardinality 3. We then analyse the time complexity of the
problem $\CSP(\{\RD{D}\})$ and prove that ${\sf T}(\{\RD{D}\})$ tends
to 0 for increasing values of $|D|$. This also shows, despite the fact
that no finite-domain NP-complete $\CSP(\Gamma)$ is solvable in
subexponential time (if the algebraic CSP dichotomy conjecture and
the ETH are true), that one for every $c > 0$ can find $\Gamma$ over a
finite domain such that $\CSP(\Gamma)$ is NP-complete and solvable in
$O(2^{cn})$ time.
When all of these results are adjoined, they demonstrate that the function
${\sf T}$ can indeed be analysed without an extensive knowledge of the
polymorphisms related to a constraint language.

\section{Preliminaries}
\label{section:preliminaries}

\noindent
\underline{{\bf Relations and constraint languages.}}
A $k$-ary {\em relation} $R$ over a set $D$ is a subset of $D^k$, and
we write $\ar(R) = k$ to denote its arity. A finite set of relations $\Gamma$ over a set $D$ is called a {\em constraint
language}. 
Given two tuples $s$ and $t$ we let $s^{\frown} t$ denote the
concatenation of $s$ and $t$, i.e., if $s = (s_1, \ldots, s_{k_1})$
and $t = (t_1, \ldots, t_{k_2})$ then $s^{\frown}t = (s_1, \ldots,
s_{k_1}, t_1, \ldots, t_{k_2})$. 
If $t$ is an $n$-ary tuple we let $t[i]$ denote its $i$th element and $\pro_{i_1, \ldots, i_{n'}}(t) = (t[i_1], \ldots, t[i_{n'}])$, $n' \leq n$,
denote the {\em projection} of $t$ on the coordinates $i_1, \ldots,
i_{n'} \in \{1, \ldots, n\}$. Similarly, if $R$ is an $n$-ary relation
we let $\pro_{i_1, \ldots, i_{n'}}(R) = \{\pro_{i_1, \ldots,
  i_{n'}}(t) \mid t \in R\}$. 
We write $\eq_D$ for the equality relation $\{(x,x) \mid x \in D\}$. If there is no risk for confusion we omit the
subscript and simply write $\eq$. For each $d \in D$ we write
$\const{d}$ for the unary, constant relation $\{(d)\}$. We will occasionally represent
relations by first-order formulas, and if $\varphi(x_1, \ldots, x_k)$
is a first-order formula with free variables $x_1, \ldots, x_k$ then
we write $R(x_1, \ldots, x_k) \equiv \varphi(x_1, \ldots, x_k)$ to
define the relation $R = \{(f(x_1), \ldots, f(x_k)) \mid f$ is a model
of $\varphi(x_1,\ldots,x_k)\}$. As a graphical representation, we
will sometimes view a $k$-ary relation $R = \{t_1, \ldots,
t_m\}$ as an $m \times k$ matrix where the columns of the matrix
enumerate the arguments of the relation (in some fixed ordering). For
example, %\[
$
\bigl(
\begin{smallmatrix}
0 & 0 & 1 & 1\\
0 & 1 & 0 & 1 
\end{smallmatrix} \bigr)
$
represents the relation $\{(0,0,1,1),(0,1,0,1)\}$.

\medskip

\noindent
\underline{{\bf The constraint satisfaction problem.}}
The {\em constraint satisfaction problem} over a
constraint language $\Gamma$ over $D$
($\CSP(\Gamma)$) is the computational decision problem defined as
follows.
\smallskip

\noindent
{\sc Instance:} A set $V$ of variables and a set $C$ of constraint
applications $R(x_1,\ldots,x_{k})$ where $R \in \Gamma$, $\ar(R) = k$,
and $x_1,\ldots, x_{k} \in V$.

\noindent
{\sc Question:} Does there exist $f : V \rightarrow D$ such
that $(f(x_1),\ldots,f(x_{k})) \in R$ for each 
$R(x_1,\ldots,x_{k})$~in~$C$?

\smallskip
If $\Gamma = \{R\}$ is singleton then we write $\CSP(R)$ instead of
$\CSP(\{R\})$, and if $\Gamma$ is Boolean we
typically write $\SAT(\Gamma)$ instead of $\CSP(\Gamma)$. We
let $\B = \{0,1\}$. For example, let $\Rdddp =
\{(0,0,1,1,1,0,0,1), (0,1,0,1,0,1,0,1), (1,0,0,0,1,1,0,1)\}$. The
$\SAT$ problem over $\Rdddp$ can be seen as a variant of 1-in-3-SAT
where each variable in each constraint has a complementary
variable. We will return to this $\SAT$ problem several times in the
sequel. For each $k \geq 3$ let $\sat{k}$ be the constraint language
which for every $t \in \B^{k}$ contains the relation $\B^{k}
\setminus \{t\}$. Hence, $\SAT(\sat{k})$ can be viewed as an
alternative formulation of $k$-SAT.

\medskip

\noindent
\underline{{\bf Primitive positive definitions and interpretations.}}
Let $\Gamma$ be a constraint language. A $k$-ary relation $R$ is said
to have a {\em primitive positive definition} (pp-definition) over
$\Gamma$ if $R(x_1, \ldots, x_{k}) \equiv \exists y_1, \ldots,
y_{k'}\, . \, R_1(\mathbf{x_1}) \wedge \ldots \wedge
R_m({\mathbf{x_m}}),$ where each $R_i \in \Gamma \cup \{\eq\}$ and
each $\mathbf{x_i}$ is an $\ar(R_i)$-ary tuple of variables over
$x_1,\ldots, x_{k}$, $y_1, \ldots, y_{k'}$. In addition, if the
primitive positive formula does not contain any existentially
quantified variables, we say that it is a {\em quantifier-free
primitive positive formula} (qfpp), and if it does not contain any
equality constraints we say that it is a {\em equality-free primitive
positive formula} (efpp). For example, the reader can verify that the
textbook reduction from $k$-SAT to $(k-1)$-SAT, where a clause of
length $k$ is replaced by clauses of length $k-1$ making use of one
fresh variable, can be formulated as a pp-definition but not as a qfpp-definition.
We write $\cclone{\Gamma}$ (respectively
$\pcclone{\Gamma}$) to denote the smallest set of relations containing
$\Gamma$ and which is closed under pp-definitions (respectively
qfpp-definitions). If $\Gamma = \{R\}$ is singleton then we instead
write $\cclone{R}$ and $\pcclone{R}$. Note that $\cclone{\Gamma}$ is
closed under projections, in the sense that if $R \in \cclone{\Gamma}$
then $\pro_{i_1, \ldots, i_n}(R) \in \cclone{\Gamma}$ for all $i_1, \ldots, i_n \in \{1,
\ldots, \ar(R)\}$, but that this does not necessarily hold for
$\pcclone{\Gamma}$.  
Jeavons~\cite{Jeavons1998} proved the following important result.

\begin{theorem} \label{ppdefcomplexity}
If $\Gamma$ is a constraint language and $\Delta$ is a finite
subset of $\cclone{\Gamma}$, then CSP$(\Delta)$ is polynomial-time
reducible to CSP$(\Gamma)$.
\end{theorem}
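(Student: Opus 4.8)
The plan is to show that a $\CSP(\Delta)$ instance can be rewritten, in polynomial time, as an equivalent $\CSP(\Gamma)$ instance by unfolding the pp-definitions of the relations in $\Delta$ over $\Gamma$. Since $\Delta$ is finite and each $R \in \Delta$ lies in $\cclone{\Gamma}$, fix for each such $R$ a pp-definition $R(x_1,\dots,x_{k}) \equiv \exists y_1,\dots,y_{k'}\, .\, R_1(\mathbf{x_1}) \wedge \dots \wedge R_m(\mathbf{x_m})$ with each $R_i \in \Gamma \cup \{\eq\}$. These definitions are finite objects, so storing all of them costs only a constant amount of space (independent of the input instance), and each has some bounded number of conjuncts and existential variables.

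First I would describe the reduction. Given an instance $(V,C)$ of $\CSP(\Delta)$, process each constraint application $R(z_1,\dots,z_k) \in C$ as follows: introduce fresh variables $y_1^{(c)},\dots,y_{k'}^{(c)}$ (one private copy per constraint application $c$, so that the existential variables of different constraints do not interfere), and replace the single constraint by the conjuncts $R_i(\mathbf{x_i})$ of the fixed pp-definition of $R$, with the formal variables $x_1,\dots,x_k$ substituted by $z_1,\dots,z_k$ and $y_1,\dots,y_{k'}$ by the fresh copies. This yields a set $C'$ of constraint applications over $\Gamma \cup \{\eq\}$ on an enlarged variable set $V'$. To remove the equality constraints: whenever a conjunct $\eq(u,v)$ appears, identify $u$ and $v$ (e.g.\ replace all occurrences of $v$ by $u$ throughout), or equivalently take the quotient of $V'$ by the equivalence relation generated by the equality conjuncts; this is a standard union–find computation. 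The result is an instance $(V'',C'')$ of $\CSP(\Gamma)$. All of this is clearly polynomial time: the number of new constraints and variables per original constraint is bounded by a constant depending only on $\Delta$ and the chosen definitions, and the equality elimination is near-linear.

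Next I would verify correctness, i.e.\ that $(V,C)$ is satisfiable over $D$ iff $(V'',C'')$ is. For the forward direction, a satisfying assignment $f : V \to D$ for $(V,C)$ can be extended to the fresh variables: for each constraint application $c$ using $R$, since $(f(z_1),\dots,f(z_k)) \in R$, the pp-definition guarantees witnesses for $y_1,\dots,y_{k'}$, and we assign the private copies accordingly; one checks every conjunct is satisfied, and equality conjuncts force only that identified variables receive equal values, which is consistent after the quotient. For the backward direction, a satisfying assignment $g$ for $(V'',C'')$ restricts to $V$ (noting that, after the quotient, each original variable still corresponds to a variable of $V''$, possibly merged with others, and consistently so because original variables of $V$ are only ever identified with each other or with fresh variables), and for each original constraint the conjuncts of its pp-definition witness, via the existential quantifier, that the projection onto $z_1,\dots,z_k$ lies in $R$. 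Hence $g$ restricted to $V$ satisfies $(V,C)$.

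The main point requiring care — rather than a deep obstacle — is bookkeeping around variable renaming and equality elimination: one must use \emph{fresh} existential variables per constraint application (reusing them would incorrectly couple distinct constraints), and one must check that collapsing equality constraints never forces two original variables of $V$ to take the same value unless the definition genuinely requires it. Because $\Delta$ is finite, the pp-definitions are fixed in advance and their sizes are absolute constants, so there is no subtlety about the size of the reduction; the argument is entirely a matter of carefully tracking the substitution. This completes the plan.
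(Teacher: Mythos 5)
Your proposal is correct and is exactly the standard argument: the paper cites this result from Jeavons without reproducing a proof, and the classical proof proceeds precisely as you describe — unfold each fixed pp-definition with fresh private copies of the existentially quantified variables per constraint application, eliminate equality conjuncts by identifying variables, and verify both directions of satisfiability equivalence. The bookkeeping points you flag (fresh witnesses per constraint, and that equality collapse only merges original variables when the definition forces it) are the right ones, and your handling of them is sound.
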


Theorem~\ref{ppdefcomplexity} naturally holds also for relations
defined by qfpp- or efpp-formulas.  However, there are additional
advantages of these more restricted ways of defining relations and we
will return to them later on.  We are now ready to
define the concept of primitive positive interpretations.

\begin{definition} \label{def:ppi}
  Let $D$ and $E$ be two domains and let $\Gamma$ and $\Delta$
  be two constraint languages over $D$ and $E$, respectively. A {\em
    primitive positive interpretation} (pp-interpretation) of $\Delta$
  over $\Gamma$ consists of a $d$-ary relation $F \subseteq D^{d}$ and
  a surjective function $f : F \rightarrow E$ such that
    $F, f^{-1}(\eq_E) \in \cclone{\Gamma}$ and
    $f^{-1}(R) \in \cclone{\Gamma}$ for every $R \in \Delta$, 
  where $f^{-1}(R)$, $\ar(R) = k$, denotes the $(k \cdot d)$-ary relation
  \[
\{(x_{1,1}, \ldots, x_{1,d}, \ldots, x_{k, 1},
  \ldots, x_{k, d})  \in D^{k \cdot d} \mid (f(x_{1,1},
  \ldots, x_{1,d}), \ldots, f(x_{k, 1}, \ldots, 
  x_{k, d})) \in R\}.
\]
\end{definition}

 The main purpose of pp-interpretations is to relate constraint
 languages which might be incomparable
 with respect to pp-definitions. For an example, let us consider the
 relation $R_{\neq} = \{(x,y) \in \{0,1,2\}^2 \mid x \neq
 y\}$, and observe that $\CSP(\{R_{\neq}\})$ corresponds to the
 3-coloring problem. We invite the reader to verify that
 the standard reduction from 3-coloring to 3-SAT can be phrased as a
 pp-interpretation of $R_{\neq}$ over $\sat{3}$, but that this
 reduction cannot be expressed via pp-definitions due to the different domains.
% For example, 
  % let $R_1 = \{(0,0,1), (0,1,0), (1,0,0)\}$ and
  % $R_2 = \{(0,0,2), (0,2,0), (2,0,0)\}$. Then neither $R_1 \in
  % \cclone{R_2}$ nor $R_2 \in \cclone{R_1}$, even though $\CSP(R_1)$
  % and $\CSP(R_2)$ are trivially reducible to each other. However, it is easy to
  % show that $R_2$ can pp-interpret $R_1$, and vice versa: simply
  % choose $d = 1$, $F =
  % \{(0),(2)\}$ and the parameter $f(0) = 0$, $f(2) = 1$. Then
  % $F \in \cclone{R_2}$ since $F(x) \equiv \exists y, z . R_2(x, y, z)$,
  % $f^{-1}(R_1) = R_2$ and $f^{-1}(\eq_{\B}) = \eq_{\{0,2\}}$,
  % from which it follows that $R_2$ can pp-interpret $R_1$. 
  Hence, pp-interpretations are generalizations of pp-definitions, and can be used to obtain polynomial-time reductions
  between $\CSP$s.

\begin{theorem}[cf. Theorem 5.5.6 in Bodirsky~\protect\cite{Bodirsky:habil}]
If $\Gamma,\Delta$ are constraint languages and there is a pp-interpretation
of $\Delta$ over $\Gamma$, then CSP$(\Delta)$ is polynomial-time
reducible to CSP$(\Gamma)$.
\end{theorem}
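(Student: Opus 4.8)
The plan is to carry out the reduction in two stages: first absorb the domain change into an auxiliary constraint language over $D$, and then invoke Theorem~\ref{ppdefcomplexity}. Let the pp-interpretation consist of the $d$-ary relation $F\subseteq D^d$ and the surjection $f:F\to E$ as in Definition~\ref{def:ppi}, and set $\Delta' = \{F,\, f^{-1}(\eq_E)\}\cup\{f^{-1}(R)\mid R\in\Delta\}$. Since $\Delta$ is finite, $\Delta'$ is a finite subset of $\cclone{\Gamma}$, so Theorem~\ref{ppdefcomplexity} already gives that $\CSP(\Delta')$ is polynomial-time reducible to $\CSP(\Gamma)$. It therefore suffices to exhibit a polynomial-time reduction from $\CSP(\Delta)$ to $\CSP(\Delta')$, and this is where the different domains are reconciled.

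For the reduction, take an instance $(V,C)$ of $\CSP(\Delta)$. I would replace every variable $v\in V$ by a block $v^{1},\dots,v^{d}$ of $d$ fresh variables over $D$, the intended meaning being that the block encodes a tuple of $F$ and hence, via $f$, a value in $E$. To force the block to lie in $F$, add a constraint $F(v^{1},\dots,v^{d})$ for each $v\in V$. For each constraint $R(y_1,\dots,y_k)\in C$, add the constraint $f^{-1}(R)$ applied to the concatenation of the blocks of $y_1,\dots,y_k$, i.e.\ to $(y_1^{1},\dots,y_1^{d},\,y_2^{1},\dots,\,y_k^{d})$. Since $d=\ar(F)$ is a fixed constant, the new instance has size linear in that of $(V,C)$ and is computed in polynomial time.

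Correctness is then a routine unwinding of the definition of $f^{-1}(R)$. Given a solution $g:V\to E$ of $(V,C)$, surjectivity of $f$ lets us choose, for each $v$, a tuple $a^{v}\in F$ with $f(a^{v})=g(v)$; setting $h(v^{i})=a^{v}[i]$ satisfies every $F$-constraint and, since $(g(y_1),\dots,g(y_k))\in R$ translates to membership of the concatenated block tuple in $f^{-1}(R)$, every translated constraint as well. Conversely, given a solution $h$ of the new instance, the $F$-constraints guarantee that each block $a^{v}=(h(v^{1}),\dots,h(v^{d}))$ lies in $F$, so $f(a^{v})$ is defined; putting $g(v)=f(a^{v})$ and reading the definition of $f^{-1}(R)$ backwards shows $g$ satisfies $(V,C)$. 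Note that the relation $f^{-1}(\eq_E)$ is not actually used here, because equality between variables of a $\CSP(\Delta)$ instance is realised simply by reusing the same block; it is included in Definition~\ref{def:ppi} only so that pp-interpretations compose and mesh with the algebraic framework.

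There is no serious obstacle: the only thing that needs care is the indexing in the block encoding and the observation that the membership constraints $F(v^{1},\dots,v^{d})$ are exactly what makes $f$ defined on every block, which is what makes the backward direction of the equivalence go through. Everything else is bookkeeping together with the single application of Theorem~\ref{ppdefcomplexity}.
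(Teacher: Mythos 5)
The paper does not prove this statement itself—it is cited from Bodirsky's Theorem 5.5.6—so there is no in-paper proof to compare against; your argument is the standard one and is correct: encode each $E$-variable as a $d$-block of $D$-variables constrained to lie in $F$, translate each constraint $R$ to $f^{-1}(R)$ on the concatenated blocks, observe that $f^{-1}(\eq_E)$ is not needed because a variable is represented by a single block, and finish with Theorem~\ref{ppdefcomplexity} applied to the finite set $\Delta' \subseteq \cclone{\Gamma}$. This is in fact exactly the construction the paper carries out concretely (for $\Rddd$, with the additional bookkeeping needed to count variables) in the proof of Theorem~\ref{thm:subexp} in Appendix~\ref{a0}, so your proof is consistent with how the paper uses pp-interpretations elsewhere.
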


\medskip

\noindent
\underline{{\bf Polymorphisms and partial polymorphisms.}}
Let $f$ be a $k$-ary function over a finite domain $D$. We say that
$f$ is a {\em polymorphism} of an $n$-ary relation $R$ over $D$ if $f(t_1,
\ldots, t_k) \in R$ for each $k$-ary sequence of tuples $t_1, \ldots,
t_k \in R$. Here, and in the sequel, we use $f(t_1, \ldots, t_k)$ to
denote the componentwise application of the function $f$ to the tuples
$t_1, \ldots, t_k$, i.e., $f(t_1, \ldots, t_k)$ is a shorthand for
the $n$-ary tuple $(f(t_1[1], \ldots, t_k[1]), \ldots, f(t_1[n],
\ldots, t_k[n]))$. Similarly, if $f$ is a partial function over $D$,
we say that $f$ is a {\em partial polymorphism} of an $n$-ary relation
$R$ over $D$ if $f(t_1, \ldots, t_k) \in R$ for every sequence $t_1,
\ldots, t_k$ such that $f(t_1, \ldots, t_k)$ is defined for each
componentwise application. If $f$ is a polymorphism or a partial
polymorphism of a relation $R$ then we occasionally also say that $R$
is {\em invariant} under $f$. We let $\pol(R)$ and $\ppol(R)$ denote the
set of all polymorphisms, respectively partial polymorphisms, of the
relation $R$. Similarly, for a constraint language $\Gamma$, we write
$\pol(\Gamma)$ for the set $\bigcap_{R \in \Gamma} \pol(R)$, and
$\ppol(\Gamma)$ for the set $\bigcap_{R \in \Gamma} \ppol(R)$.
We write $\inv(F)$ to denote the set of all relations invariant
under the set of total or partial functions $F$. It is known that $\inv(\pol(\Gamma)) = \cclone{\Gamma}$
and that $\inv(\ppol(\Gamma)) = \pcclone{\Gamma}$, giving rise to the
following {\em Galois connections}.

\begin{theorem}[\cite{BKKR69i,BKKR69ii,Gei68,romov1981}]
  \label{theorem:galois}
  Let $\Gamma$ and $\Gamma'$ be two constraint languages. Then 
   $\Gamma \subseteq \cclone{\Gamma'}$ if and only if  $\pol(\Gamma') \subseteq
    \pol(\Gamma)$ and $\Gamma \subseteq \pcclone{\Gamma'}$ if and only if $\ppol(\Gamma') \subseteq
    \ppol(\Gamma)$.
\end{theorem}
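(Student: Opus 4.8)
The plan is to deduce both biconditionals from the two identities $\inv(\pol(\Gamma'))=\cclone{\Gamma'}$ and $\inv(\ppol(\Gamma'))=\pcclone{\Gamma'}$ recorded just before the statement, and then to indicate how those identities are themselves established. The ``only if'' directions are the routine ones: a primitive positive formula preserves polymorphisms, since conjunction corresponds to intersection, existential quantification to projection, and $\eq$ is preserved by every operation, so $R\in\cclone{\Gamma'}$ implies $\pol(\Gamma')\subseteq\pol(R)$; the partial analogue holds for quantifier-free pp-formulas, giving $\ppol(\Gamma')\subseteq\ppol(R)$ whenever $R\in\pcclone{\Gamma'}$. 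Here it is essential that the formula be quantifier-free: a partial operation applied componentwise to several satisfying tuples of a conjunct stays defined and stays in the conjunct, whereas existential witnesses chosen for different tuples need not be compatible with a partial operation --- this is exactly why $\pcclone{\cdot}$ rather than $\cclone{\cdot}$ is the closure matched by partial polymorphisms. Hence $\Gamma\subseteq\cclone{\Gamma'}$ implies $\pol(\Gamma')\subseteq\bigcap_{R\in\Gamma}\pol(R)=\pol(\Gamma)$, and similarly for the partial case; these give the ``only if'' halves.

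For the ``if'' directions I would use the reverse inclusions $\inv(\pol(\Gamma'))\subseteq\cclone{\Gamma'}$ and $\inv(\ppol(\Gamma'))\subseteq\pcclone{\Gamma'}$: granting them, if $\ppol(\Gamma')\subseteq\ppol(\Gamma)$ then every $R\in\Gamma$ is invariant under all of $\ppol(\Gamma')$, hence $R\in\inv(\ppol(\Gamma'))=\pcclone{\Gamma'}$, so $\Gamma\subseteq\pcclone{\Gamma'}$; the total case is symmetric. The substantive content is therefore the standard ``diagonal'' construction. Take an $n$-ary relation $R=\{t_1,\dots,t_m\}$ invariant under $\pol(\Gamma')$, regard it as an $m\times n$ matrix, and let $a_i=(t_1[i],\dots,t_m[i])\in D^m$ be its $i$-th column. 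Build a pp-formula with one variable $z_c$ per $c\in D^m$, conjoining $S(z_{c_1},\dots,z_{c_k})$ for every $S\in\Gamma'$ of arity $k$ and every $(c_1,\dots,c_k)$ all of whose rows $(c_1[j],\dots,c_k[j])$ lie in $S$; its satisfying assignments are precisely the $m$-ary polymorphisms of $\Gamma'$, seen as maps $D^m\to D$. Keeping $z_{a_1},\dots,z_{a_n}$ free and existentially quantifying the rest yields a pp-formula over $\Gamma'$ whose solution set is $\{(g(a_1),\dots,g(a_n))\mid g\in\pol(\Gamma'),\ g\text{ of arity }m\}$; the coordinate maps $c\mapsto c[\ell]$ witness that this set contains every $t_\ell$, and invariance of $R$ under any such $g$ --- applied componentwise to $t_1,\dots,t_m$, whose $i$-th column is exactly $a_i$ --- shows it is contained in $R$. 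Hence $R\in\cclone{\Gamma'}$. For the partial version one performs the same construction but without existential quantification: conjoin over index tuples $(i_1,\dots,i_k)\in\{1,\dots,n\}^k$ such that $(a_{i_1}[j],\dots,a_{i_k}[j])\in S$ for all $j\le m$, add $\eq(x_i,x_{i'})$ whenever $a_i=a_{i'}$, and, for a candidate solution $b=(b_1,\dots,b_n)$, use the partial map $g$ defined only on $\{a_1,\dots,a_n\}$ by $g(a_i)=b_i$ (well defined thanks to the equality conjuncts); the conjuncts say precisely that $g\in\ppol(\Gamma')$, so invariance of $R$ forces $b=g(t_1,\dots,t_m)\in R$, and thus $R\in\pcclone{\Gamma'}$.

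I expect the main obstacle to be the bookkeeping in the partial converse: one must check that $g$ is well defined, that the conjuncts of the formula translate faithfully into ``$g$ preserves each $S\in\Gamma'$'' (a matching of the row index $j$ against the arity of $S$), and that $g$ is simultaneously defined on the whole list $t_1,\dots,t_m$ so that invariance of $R$ under $g$ can actually be invoked --- nothing deep, but easy to get slightly wrong. Finiteness of $D$ (and of $\Gamma'$) enters only to guarantee that the constructed formula is a genuine finite (quantifier-free) primitive positive formula.
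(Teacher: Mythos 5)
Your proof is correct. The paper does not actually prove Theorem~\ref{theorem:galois} --- it is quoted from the cited literature (Bodnar\v{c}uk et al., Geiger, Romov) --- and your argument is a faithful reconstruction of the standard proof from those sources: the easy preservation direction for (qf)pp-formulas, plus the canonical ``indicator'' construction on the columns $a_1,\dots,a_m$ of $R$, with existential quantification over the remaining variables of $D^m$ in the total case and the quantifier-free variant (including the $\eq$-conjuncts for repeated columns) in the partial case. All the key verifications --- that satisfying assignments of the big formula are exactly the $m$-ary (partial) polymorphisms, that projections give $R\subseteq$ the defined relation, and that invariance gives the reverse inclusion --- are stated correctly.
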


\medskip

\noindent
\underline{{\bf Time complexity and size-preserving reductions.}}
Given a constraint language $\Gamma$ we let
${\sf T}(\Gamma) = \inf\{c \mid \CSP(\Gamma) \textrm{ is solvable in
  time } 2^{cn}\}$ where $n$ denotes the number of variables in a
given instance.
If ${\sf T}(\Gamma) = 0$ then $\CSP(\Gamma)$ is said to be solvable in
{\em subexponential time}. The conjecture that $\SAT(\sat{3}) > 0$ is
known as the {\em exponential-time hypothesis}
(ETH)~\cite{impagliazzo98}. We now introduce a type of
reduction useful for studying the complexity of CSPs 
with respect to the function ${\sf T}$.

\begin{definition}
  Let $\Gamma$ and $\Delta$ be two constraint languages. 
 The function $f$  from the instances of $\CSP(\Gamma)$ to the
 instances of $\CSP(\Delta)$ is a {\em many-one
   linear variable reduction} (LV-reduction) with parameter $d \geq 0$
 if (1) $f$ is a polynomial-time many-one reduction from
 $\CSP(\Gamma)$ to $\CSP(\Delta)$ and (2) $|V'| = d \cdot |V| + O(1)$ where $V$, $V'$ are the set of
  variables in $I$ and $f(I)$, respectively.
\end{definition}

The term CV-reduction, short for {\em constant variable reduction}, is
used to denote LV-reductions with parameter 1, and we write
$\CSP(\Gamma) \reduces \CSP(\Delta)$ when
$\CSP(\Gamma)$ has a CV-reduction to $\CSP(\Delta)$. It follows that if $\CSP(\Gamma) \reduces \CSP(\Delta)$
then ${\sf T}(\Gamma) \leq {\sf T}(\Delta)$, and if $\CSP(\Gamma)$
LV-reduces to $\CSP(\Delta)$ then ${\sf T}(\Gamma) = 0$ 
if ${\sf T}(\Delta) = 0$. We have the following theorem from Jonsson et
al.~\cite{jonsson2017}, relating the partial polymorphisms of constraint languages
with the existence of CV-reductions.

\begin{theorem}[\cite{jonsson2017}] \label{theorem:cvred}
  Let $D$ be a finite domain and let $\Gamma$ and $\Delta$ be two
  constraint languages over $D$. If $\ppol(\Delta) \subseteq
  \ppol(\Gamma)$ then $\CSP(\Gamma) \reduces \CSP(\Delta)$.
\end{theorem}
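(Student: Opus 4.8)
The plan is to combine the Galois connection of Theorem~\ref{theorem:galois} with the fact that quantifier-free primitive positive definitions, unlike general pp-definitions, introduce no fresh variables. First I would feed the hypothesis $\ppol(\Delta) \subseteq \ppol(\Gamma)$ into Theorem~\ref{theorem:galois} to obtain $\Gamma \subseteq \pcclone{\Delta}$. Since $\Gamma$ is finite, I would fix for each $R \in \Gamma$ a qfpp-definition $R(x_1, \ldots, x_{\ar(R)}) \equiv \bigwedge_j R_j(\mathbf{x}_j)$, where each $R_j \in \Delta \cup \{\eq\}$ and each $\mathbf{x}_j$ is a tuple of variables from among $x_1, \ldots, x_{\ar(R)}$ only (there are no existential variables, since the formula is quantifier-free). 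Collectively these finitely many formulas have size bounded by a constant depending only on $\Gamma$ and $\Delta$.

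Next I would define the reduction $f$ in the natural way. Given an instance $I = (V, C)$ of $\CSP(\Gamma)$, replace each constraint $R(y_1, \ldots, y_k) \in C$ by the constraints obtained from the chosen qfpp-definition of $R$ under the renaming $x_i \mapsto y_i$; this produces an instance $I_1 = (V, C_1)$ over $\Delta \cup \{\eq\}$ with the same variable set, and a map $g : V \to D$ satisfies $I$ precisely when it satisfies $I_1$ --- here it is essential that the definitions are quantifier-free, so that the equivalence holds verbatim over the variables of $V$. To reach a genuine $\CSP(\Delta)$-instance I would eliminate the equality constraints: let $\sim$ be the equivalence relation on $V$ generated by the constraints $\eq(u,v)$ appearing in $C_1$, choose one representative per $\sim$-class, discard the equality constraints, and in every surviving $\Delta$-constraint replace each variable by the representative of its class. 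Call the result $C_2$, and output $f(I) = (V, C_2)$, retaining the full original variable set $V$ (variables that were contracted away simply occur in no constraint of $C_2$). Then $I_1$ is satisfiable iff $f(I)$ is: a satisfying assignment of $f(I)$ lifts to $I_1$ by copying values along $\sim$-classes, and conversely every satisfying assignment of $I_1$ is constant on each $\sim$-class and hence already satisfies $C_2$. All of this is polynomial-time computable since $\Gamma$ and $\Delta$ are fixed and finite ($|C_1| = O(|C|)$, and $\sim$ is computed by union-find); moreover $f(I)$ has exactly $|V|$ variables, so $f$ is an LV-reduction with parameter $1$, which is exactly $\CSP(\Gamma) \reduces \CSP(\Delta)$.

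I expect the equality-elimination step to be the only delicate point. It is the sole place where the variable count could move: using qfpp-definitions (rather than the general pp-definitions of Theorem~\ref{ppdefcomplexity}) is what keeps it from growing, while contracting along $\sim$ but keeping $V$ as the variable set is what prevents it from shrinking below $|V|$. The remaining ingredients --- the Galois connection, and the correctness of substituting qfpp-definitions for constraints --- are routine. (Degenerate cases, e.g.\ an empty relation $R \in \Gamma$, cause no trouble: by the Galois connection such $R$ still lies in $\pcclone{\Delta}$ and hence has a qfpp-definition that the reduction expands like any other.)
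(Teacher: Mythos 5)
Your argument is correct, and it is essentially the standard proof: the paper itself does not reprove this theorem but imports it from Jonsson et al.~\cite{jonsson2017}, and the argument there is exactly your route --- the Galois connection gives $\Gamma \subseteq \pcclone{\Delta}$, quantifier-free pp-definitions are substituted in place of constraints without introducing fresh variables, and equality atoms are removed by identifying variables, yielding an instance with (at most) the original variable count. Your care in retaining the full variable set $V$ so that the reduction literally satisfies $|V'| = |V| + O(1)$, and your handling of the degenerate cases, are both sound.
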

We remark that the original proof
only concerned Boolean constraint languages but that the same proof
also works for arbitrary finite domains.
Using Theorem~\ref{theorem:cvred} and algebraic techniques from
Schnoor and Schnoor~\cite{schnoor2008a}, Jonsson et
al.~\cite{jonsson2017} proved that ${\sf T}(\{\Rdddp\}) \leq {\sf
  T}(\Gamma)$ for any finite $\Gamma$ such that $\SAT(\Gamma)$ is
NP-complete. This problem was referred to as the {\em easiest
  NP-complete $\SAT$ problem}. We will not go into the details but remark
that the proof idea does not work for arbitrary finite domains since it
requires a characterisation of every $\pol(\Gamma)$ such that
$\CSP(\Gamma)$ is NP-complete. Such a list is known for the Boolean
domain due to Post~\cite{pos41} and Schaefer~\cite{sch78}, but not
for larger domains.

\label{sec:algcspconj}

\medskip

\noindent
\underline{{\bf Complexity of CSP.}}
Let $\Gamma$ be a constraint language over a finite domain $D$. We say
that $\Gamma$ is {\em idempotent} if $\const{d} \in \cclone{\Gamma}$
for every $d \in D$, {\em conservative} if $2^{D} \subseteq
\cclone{\Gamma}$, and {\em ultraconservative} if $2^{D} \subseteq \Gamma$.
A unary function $f \in \pol(\Gamma)$ is said to be an {\em
endomorphism}, and if $f$ in addition is bijective it is said to be an
{\em automorphism}. A constraint language $\Gamma$ is a {\em core} if
every endomorphism is an automorphism. The following theorem is
well-known, see e.g.\ Barto~\cite{barto2014}, but is usually expressed
in term of polynomial-time many-one reductions instead of CV-reductions.

\begin{theorem}
  \label{thm:core}
  Let $\Gamma$ be a core constraint language over the domain $\{d_0, \ldots,
  d_{k-1}\}$. Then $\CSP(\Gamma \cup \{\const{d_0}, \ldots,
  \const{d_{k-1}}\}) \reduces \CSP(\Gamma)$.
\end{theorem}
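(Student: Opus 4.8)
The plan is to observe that the classical fact ``a core constraint language can simulate its own constant relations'' already yields a CV-reduction, because the gadget it uses needs only $|D|$ auxiliary variables and, crucially, no existential quantification. Write $D=\{d_0,\dots,d_{k-1}\}$ and $\Gamma_c=\Gamma\cup\{\const{d_0},\dots,\const{d_{k-1}}\}$. First I would isolate the gadget. Take $k$ fresh variables $w_0,\dots,w_{k-1}$, one per domain element, and let $\Phi$ be the conjunction that, for every $R\in\Gamma$ of arity $m$ and every tuple $(d_{j_1},\dots,d_{j_m})\in R$, contains the constraint $R(w_{j_1},\dots,w_{j_m})$. A map $g$ on $\{w_0,\dots,w_{k-1}\}$ satisfies $\Phi$ exactly when $h\colon d_i\mapsto g(w_i)$ is a unary polymorphism of $\Gamma$, i.e.\ an endomorphism; since $\Gamma$ is a core every such $h$ is an automorphism, hence a bijection, and (as $D$, and thus each $R\in\Gamma$, is finite) $h(R)=R$ for every $R\in\Gamma$, so $h^{-1}\in\pol(\Gamma)$ as well. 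Also note that the identity assignment $w_i\mapsto d_i$ satisfies $\Phi$. The essential point is that $\Phi$ is a quantifier-free conjunction of $\Gamma$-constraints on precisely $k$ variables, so using it as a gadget costs only $O(1)$ variables; in the language of the preliminaries, the relation defined by $\Phi$ lies in $\pcclone{\Gamma}$, and a qfpp-definition introduces no new variables.

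The reduction sends an instance $(V,C)$ of $\CSP(\Gamma_c)$ to an instance of $\CSP(\Gamma)$ as follows. If some variable $x$ carries two constant constraints $\const{d_i}(x),\const{d_j}(x)$ with $i\neq j$, output a fixed unsatisfiable instance. Otherwise every variable occurring in a constant constraint has a unique index $\iota(x)$; let $U$ be the set of such variables and let $\sigma$ be the substitution that fixes $V\setminus U$ and sends $x\mapsto w_{\iota(x)}$ for $x\in U$. The output instance has variable set $(V\setminus U)\cup\{w_0,\dots,w_{k-1}\}$ (plus, if one wants $|V'|=|V|+O(1)$ to hold with equality rather than just ``$\leq$'', $|U|$ further isolated dummy variables), and constraint set $\Phi$ together with $R(\sigma(x_1),\dots,\sigma(x_m))$ for each original $\Gamma$-constraint $R(x_1,\dots,x_m)\in C$; the constant constraints are discarded. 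This is computable in polynomial time and the number of variables is at most $|V|+k$, so it is a CV-reduction provided it is correct.

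For correctness, a solution $f$ of $(V,C)$ extends to the output by putting $f'(w_i)=d_i$ (and assigning dummies arbitrarily): $\Phi$ holds because $w_i\mapsto d_i$ is the identity endomorphism, and for a $\Gamma$-constraint $R(x_1,\dots,x_m)\in C$ the equality $f(x)=d_{\iota(x)}$ on $U$ gives $f'(\sigma(x_j))=f(x_j)$ for all $j$, so $(f'(\sigma(x_1)),\dots,f'(\sigma(x_m)))=(f(x_1),\dots,f(x_m))\in R$. Conversely, given a solution $f'$ of the output, its restriction to the $w_i$ yields an automorphism $h$ of $\Gamma$ with $h(d_i)=f'(w_i)$; set $f(x)=h^{-1}(f'(\sigma(x)))$ for $x\in V$. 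Each $\const{d_i}(x)$ is satisfied since $f(x)=h^{-1}(f'(w_i))=h^{-1}(h(d_i))=d_i$, and each $R(x_1,\dots,x_m)\in C$ is satisfied because $(f'(\sigma(x_1)),\dots,f'(\sigma(x_m)))\in R$ and $h^{-1}\in\pol(R)$. Coreness is used only in this backward direction, where it supplies the inverse automorphism that ``undoes'' the renaming at the marked variables.

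I expect the only delicate point to be the variable bookkeeping: one must resist enforcing $x=w_{\iota(x)}$ with an explicit equality constraint or an existentially quantified gadget — either would break the parameter-$1$ bound, since $\eq$ need not lie in $\Gamma$ and existential variables would multiply the count — and instead perform the identification by literal substitution, which is what keeps the instance within $|V|+|D|$ variables. A second point worth stating is why Theorem~\ref{theorem:cvred} cannot simply be invoked here: in general $\ppol(\Gamma)\not\subseteq\ppol(\Gamma_c)$, so the CV-reduction has to be built directly rather than read off from a partial-polymorphism inclusion. Everything else (the reduction being polynomial time, the handling of conflicting constant constraints, and the preservation of relations under $h^{-1}$) is routine.
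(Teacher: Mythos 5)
Your proof is correct. The paper states Theorem~\ref{thm:core} without proof (citing Barto's survey for the usual polynomial-time version), and your argument is exactly the standard indicator-structure construction behind that result, augmented with the two observations that make it a CV-reduction: the gadget $\Phi$ needs only $k=|D|$ fresh variables, and the constant-constrained variables must be merged with the $w_i$ by literal substitution rather than via equality constraints or quantified gadgets. The only assertion you leave unjustified is that a fixed unsatisfiable instance of $\CSP(\Gamma)$ exists to output in the conflicting-constants case; this is immediate from coreness (for $k\geq 2$, the instance $\Phi$ with $w_i$ and $w_j$ identified is unsatisfiable, since a solution would yield a non-injective endomorphism), so it is a cosmetic omission rather than a gap.
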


If $\Gamma$ is a constraint language over $D=\{d_0,\dots,d_{k-1}\}$, then
$\Gamma \cup \{\const{d_0}, \ldots,\const{d_{k-1}}\}$ is both idempotent
and a core since its only endomorphism is the identity function on $D$.
The {\em CSP dichotomy conjecture} states that for any
$\Gamma$ over a finite domain, $\CSP(\Gamma)$ is either
tractable or
NP-complete~\cite{FV98}. This conjecture was later refined by Bulatov
et al.~\cite{bulatov2005} to also
induce a sharp characterization of the tractable and intractable
cases, expressed in terms of algebraic properties of the constraint
language, and is usually called the {\em algebraic CSP dichotomy
  conjecture}. We will use the following variant of the conjecture
which is expressed in terms of pp-interpretations.

\begin{conjecture} \cite{barto2014,bulatov2005}
  Let $\Gamma$ be an idempotent constraint language over a finite
  domain. Then $\CSP(\Gamma)$ is NP-complete if $\Gamma$ pp-interprets
  $\sat{3}$ and tractable otherwise.
\end{conjecture}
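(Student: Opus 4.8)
The plan is to handle the two directions of the biconditional separately, since they are of completely different character. The direction ``$\CSP(\Gamma)$ is NP-complete if $\Gamma$ pp-interprets $\sat{3}$'' is not conjectural and is the easy half: membership of $\CSP(\Gamma)$ in NP is immediate, since a satisfying assignment is a polynomial-size certificate checkable in polynomial time, and NP-hardness follows by combining the pp-interpretation transfer theorem stated above with the observation that $\SAT(\sat{3})$ is nothing but $3$-SAT and is therefore NP-complete; composing the two reductions yields NP-hardness of $\CSP(\Gamma)$. Hence the real content is the converse --- the tractability claim --- which is the part that was open when this paper was written.

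For the tractability direction I would first translate the hypothesis ``$\Gamma$ does not pp-interpret $\sat{3}$'' into a statement about $\pol(\Gamma)$. Combining the Galois connection of Theorem~\ref{theorem:galois} with the standard correspondence between pp-interpretations and the variety generated by the polymorphism algebra $(D;\pol(\Gamma))$, the failure to pp-interpret $\sat{3}$ is equivalent to $(D;\pol(\Gamma))$ having a Taylor operation (equivalently, to its not pp-interpreting the two-element algebra whose only operations are the projections); and by the reduction of Mar\'oti and McKenzie this is in turn equivalent to $\pol(\Gamma)$ containing a \emph{weak near-unanimity} operation $w$, one satisfying $w(y,x,\dots,x) = w(x,y,x,\dots,x) = \dots = w(x,\dots,x,y)$. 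Since $\Gamma$ is idempotent this $w$ may be taken idempotent, and the problem then reduces to exhibiting a polynomial-time algorithm for $\CSP(\Gamma)$ whenever $\pol(\Gamma)$ contains an idempotent weak near-unanimity operation.

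That final reduction is the main obstacle, and it is precisely the content of the Bulatov/Zhuk resolution of the conjecture; nothing earlier in this paper provides it, so here it must be invoked as a black box. If one wanted to actually carry it out, there are two established routes. In Zhuk's approach one repeatedly simplifies a consistency-propagated instance by deleting from each variable's domain every value that fails to belong to some strong subalgebra (binary absorbing, central, or polynomially complete) of a relevant algebra; when no such reduction applies one locates a ``linear'' (affine) congruence, solves the associated linear system over a module by Gaussian elimination, and recurses on each solution class, the delicate part being to control the interaction of these steps --- via the notion of a centralizer --- and to prove completeness of the whole procedure. In Bulatov's approach one instead analyses the local structure of the subalgebras occurring in the instance through the coloured graph of the algebra, classifying its edges as of semilattice, majority, or affine type, and uses absorption theory to decompose the instance accordingly. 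Either way, the crux --- and essentially all of the difficulty --- is to show that the mere absence of a pp-interpretation of $\sat{3}$ leaves enough algebraic structure, well beyond the bounded-width case, to drive such a recursion to completion.
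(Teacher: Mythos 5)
There is nothing in the paper to compare your attempt against: the statement is presented as a \emph{conjecture}, cited to the literature, and the paper never proves it --- indeed every main result of the paper that relies on it is stated conditionally (``assuming the algebraic CSP dichotomy conjecture''), and the concluding remarks only mention that solutions have recently been \emph{announced}. Your handling of the hardness direction is correct and is the standard argument: membership in NP is immediate for finite domains, and NP-hardness follows by composing the pp-interpretation reduction theorem (Theorem 5.5.6 of Bodirsky, quoted in the paper) with the NP-completeness of $\SAT(\sat{3})$. Your translation of the tractability hypothesis into the existence of an idempotent weak near-unanimity operation in $\pol(\Gamma)$ (via the Taylor condition and Mar\'oti--McKenzie) is also the standard and correct reformulation, and your identification of the remaining step with the Bulatov/Zhuk algorithms is the only viable route --- but that step is precisely the conjecture's entire content, and invoking those results as a black box means you have not produced a proof so much as a correct reduction of the conjecture to its known (post-2017) resolutions. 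That is a reasonable thing to write, and it is arguably more than the paper offers, but be aware that within the logical framework of this paper the statement must remain an unproved hypothesis: nothing in Sections 2--4 supplies the tractability half, and the paper deliberately treats it as an assumption rather than a theorem.
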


It is worth remarking that if $\Gamma$ pp-interprets $\sat{3}$ then
$\Gamma$ can pp-interpret every finite-domain
relation~\cite[Theorem 5.5.17]{Bodirsky:habil}.

\section{Subexponential Time Complexity}
\label{section:eth}
For Boolean constraint languages it has been proven
that $\SAT(\sat{3})$ is solvable in subexponential time if and only if
there exists a finite Boolean constraint language $\Gamma$ such that
$\SAT(\Gamma)$ is NP-complete and solvable in subexponential time~\cite{jonsson2017}. We
will strengthen this result to arbitrary domains and prove that
$\CSP(\Gamma)$ is never solvable in subexponential time if $\Gamma$ can
pp-interpret $\sat{3}$, unless the ETH is false. The result can also be extended to
certain structurally restricted $\CSP$s.
The {\em degree} of a variable $x \in V$ of an instance $(V,C)$ of
$\CSP(\Gamma)$ is the number of constraints in $C$ containing $x$. We
let $\CSP(\Gamma)$-$B$, $B \geq 1$, denote the  restricted
$\CSP(\Gamma)$ problem where each variable occurring in an instance has degree at most
$B$. We then obtain the following theorem, whose proof can be found in Appendix~\ref{a0}.

\begin{restatable}{theorem}{thmsubexp} \label{thm:subexp}
Assume that the ETH is true and let $\Gamma$ be a  
finite constraint language over a domain $D$ such that  $\Gamma$
pp-interprets $\sat{3}$. Then $\CSP(\Gamma)$ is not solvable in
subexponential time, and if $\Gamma$ efpp-defines $\eq_D$
then there exists a constant $B$, depending only on $\Gamma$, such that
$\CSP(\Gamma)$-$B$ is not solvable in subexponential time.
\end{restatable}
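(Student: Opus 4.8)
The plan is to establish the two assertions in sequence, with the second building on the first through a self-reduction that controls variable degree. First I would prove that $\CSP(\Gamma)$ is not subexponential. Since $\Gamma$ pp-interprets $\sat{3}$, there is a pp-interpretation consisting of a $d$-ary relation $F \subseteq D^d$ and a surjection $f : F \to \B$ such that $F$, $f^{-1}(\eq_\B)$, and $f^{-1}(R)$ for $R \in \sat{3}$ all lie in $\cclone{\Gamma}$. Starting from an instance of $\SAT(\sat{3})$ on $n$ variables, I would replace each Boolean variable by a block of $d$ variables over $D$, translate each constraint $R(x_{i},x_{j},x_{k})$ into a constraint application of $f^{-1}(R)$ on the corresponding $3d$ variables, and add a copy of $F$ on each block to enforce membership in the domain of $f$. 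Equality of two blocks is handled via $f^{-1}(\eq_\B)$. This is the standard pp-interpretation reduction (the theorem cited after Definition~\ref{def:ppi}), and crucially it produces an instance with exactly $dn + O(1)$ variables, where $d$ is a constant depending only on $\Gamma$; the relations $f^{-1}(R)$, $F$, $f^{-1}(\eq_\B)$ are fixed members of $\cclone{\Gamma}$, so by Theorem~\ref{ppdefcomplexity} a constraint over any of them can be simulated (in polynomial time) by a conjunction of $\Gamma$-constraints plus $O(1)$ fresh variables per constraint application. Composing, we get an LV-reduction from $\SAT(\sat{3})$ to $\CSP(\Gamma)$ with some constant parameter $d'$. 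Hence if $\CSP(\Gamma)$ were solvable in $2^{cn}$ time for every $c>0$, then $\SAT(\sat{3})$ would be solvable in $2^{cd'n}$ time for every $c>0$, i.e.\ in subexponential time, contradicting the ETH.

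For the bounded-degree statement, the extra hypothesis that $\Gamma$ efpp-defines $\eq_D$ is what makes a sparsification-style argument go through while staying inside the constraint language $\Gamma$ itself. The plan is: apply the classical sparsification lemma of Impagliazzo, Paturi and Zane to reduce $\SAT(\sat{3})$ to a disjunction of subexponentially many instances of $3$-SAT each with a linear number of clauses, hence of bounded average degree; a standard cleanup then yields bounded maximum degree at the cost of more linear blow-up, so that (subexponential lower bounds being preserved under such reductions) $\SAT(\sat{3})$-$B_0$ is not subexponential for some constant $B_0$. Now run the pp-interpretation reduction above on a $\SAT(\sat{3})$-$B_0$ instance: each original variable becomes a block of $d$ domain-variables, and each such variable now occurs in at most $B_0$ translated constraints, times the number of $\Gamma$-constraints needed to simulate one application of $f^{-1}(R)$ or $F$, which is a constant depending only on $\Gamma$. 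The only remaining uncontrolled degree comes from the gadget for simulating $f^{-1}(\eq_\B)$ and, more importantly, from any equality constraints $\eq$ introduced inside the pp-definitions of the interpreting relations: a variable could appear in many such equality constraints. This is precisely where efpp-definability of $\eq_D$ enters — replace each use of $\eq_D$ by its equality-free pp-definition over $\Gamma$, and then use the usual degree-reduction trick for equality chains (a variable of high degree $x$ is split into copies $x_1,\dots,x_m$ linked by a ``path'' of equality-free $\eq_D$-defining gadgets, each copy inheriting a bounded number of the original occurrences). Because the $\eq_D$-definition is equality-free, this substitution does not reintroduce raw $\eq$-constraints, and because it is a pp-definition of a binary relation of linear size, the whole construction is still an LV-reduction. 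The outcome is a constant $B$, depending only on $\Gamma$ (through $d$, the simulation cost, $B_0$, and the arity of the efpp-definition of $\eq_D$), and an LV-reduction from $\SAT(\sat{3})$-$B_0$ to $\CSP(\Gamma)$-$B$, which transfers the non-subexponentiality.

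I expect the main obstacle to be the degree bookkeeping in the second part: one must verify simultaneously that (i) the pp-interpretation of $\sat{3}$ over $\Gamma$ can be realised so that every interpreting relation has a \emph{bounded-degree} realisation in terms of $\Gamma$-constraints, which requires that simulating a single constraint application of a fixed relation from $\cclone{\Gamma}$ costs only $O(1)$ fresh variables \emph{and} touches each variable $O(1)$ times, and (ii) the degree-reduction gadget for $\eq_D$, built from its efpp-definition, itself has bounded degree and preserves the linear-variable-blowup property when chained. The interplay — using $\eq_D$-gadgets to lower the degree of variables that became high-degree precisely because of $\eq$-constraints in the original pp-definitions — is the delicate point, and checking that it does not cause a cascading blow-up (each round of degree reduction potentially creating new constraints of its own) is the crux of the argument. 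A clean way to organise this is to first normalise the pp-interpretation so that $\eq$ appears only in a controlled way, absorb it via the efpp-definition of $\eq_D$, and only then apply a single uniform degree-reduction step whose parameters are computed from the (finitely many) gadget sizes.
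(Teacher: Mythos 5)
Your overall strategy --- an LV-reduction from an ETH-hard SAT problem through the pp-interpretation, with efpp-definitions of the interpreting relations used to keep variable degrees bounded --- matches the paper's, which reduces from $\SAT(\Rddd)$-$2$ (where $\Rddd = \pro_{1,\ldots,6}(\Rdddp)$), a problem already known to be NP-complete under a degree-$2$ bound and not subexponential unless the ETH fails. However, your first part has a genuine gap. You reduce from an \emph{unrestricted} instance of $\SAT(\sat{3})$ and introduce $O(1)$ fresh variables \emph{per constraint application} when simulating the fixed relations $f^{-1}(R)$, $F$, $f^{-1}(\eq_{\B}) \in \cclone{\Gamma}$ by conjunctions of $\Gamma$-constraints. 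A 3-SAT instance on $n$ variables can have $\Theta(n^3)$ clauses, so the resulting $\CSP(\Gamma)$ instance has $\Theta(n^3)$ variables, and the claim that the composition is ``an LV-reduction from $\SAT(\sat{3})$ to $\CSP(\Gamma)$ with some constant parameter $d'$'' is false: a $2^{cn'}$ algorithm for $\CSP(\Gamma)$ then only yields a $2^{O(cn^3)}$ algorithm for 3-SAT, which contradicts nothing. The fix is precisely the sparsification (or reduction to a degree-bounded ETH-hard SAT variant) that you defer to the second part; it is needed for \emph{both} claims, since only then is $|C| \leq B_0|V|$ and the per-constraint fresh variables sum to $O(|V|)$. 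The paper sidesteps this by taking $\SAT(\Rddd)$-$2$ as the source, where $|C| \leq 2|V|$ holds automatically.

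For the bounded-degree claim your plan is workable but heavier than necessary. Once the source instance has maximum degree $B_0$ and every interpreting relation is given by an efpp-definition over $\Gamma$ (which follows from efpp-definability of $\eq_D$ by substituting its equality-free definition into the pp-definitions wherever an $\eq$ atom occurs), no raw equality atoms remain anywhere in the output; each block of variables corresponding to a source variable meets at most $B_0$ constraint gadgets plus one $F$-gadget, and each gadget has some constant maximum internal degree $L$ depending only on $\Gamma$, so the output is directly an instance of $\CSP(\Gamma)$-$B$ with $B = (B_0+1)L$. The copy-splitting machinery for high-degree variables linked by chains of $\eq_D$-gadgets that you describe is not needed once the efpp-substitution is done up front, and avoiding it also dissolves the cascading-blow-up worry you raise at the end.
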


We have now obtained a complete understanding of subexponential solvability
of finite-domain CSPs modulo the ETH. 

\begin{corollary}
  Assume that the algebraic CSP dichotomy conjecture is true. Then the
  following statements are equivalent. 
  \begin{enumerate}
  \item
    The ETH is false.
  \item
    $\CSP(\Gamma)$ is solvable in subexponential time for every
    finite $\Gamma$ over a finite domain.
  \item
    There exists a finite constraint language $\Gamma$ over a
    finite domain $D$ such that $\CSP(\Gamma)$ is NP-complete and subexponential.
  \end{enumerate}
\end{corollary}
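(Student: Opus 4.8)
The plan is to establish the cycle of implications $(1)\Rightarrow(2)\Rightarrow(3)\Rightarrow(1)$. The implication $(2)\Rightarrow(3)$ is immediate: $\SAT(\sat{3})$ is just $3$-SAT, which is NP-complete and defined by a finite Boolean constraint language, so by $(2)$ it is subexponential, and $\sat{3}$ witnesses $(3)$. For $(3)\Rightarrow(1)$, suppose $\Gamma$ is finite over a finite domain $D$ with $\CSP(\Gamma)$ NP-complete and ${\sf T}(\Gamma)=0$. First I would pass to the core $\Gamma_c$ of $\Gamma$: choosing an endomorphism $e$ of $\Gamma$ with image of minimal size $D_c\subseteq D$ and setting $\Gamma_c=\{e(R)\mid R\in\Gamma\}$, the fact that $e$ is a unary polymorphism gives $e(R)\subseteq R$, so a $\CSP(\Gamma)$-instance $(V,C)$ and the $\CSP(\Gamma_c)$-instance obtained by replacing each $R$ by $e(R)$ are satisfiability-equivalent and have the same variable set; hence ${\sf T}(\Gamma_c)={\sf T}(\Gamma)=0$ and $\CSP(\Gamma_c)$ is still NP-complete. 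Next set $\Gamma'=\Gamma_c\cup\{\const{d}\mid d\in D_c\}$; by Theorem~\ref{thm:core} we have $\CSP(\Gamma')\reduces\CSP(\Gamma_c)$, so ${\sf T}(\Gamma')\le{\sf T}(\Gamma_c)=0$, and $\Gamma'$ is idempotent and a core (as remarked after Theorem~\ref{thm:core}), while $\CSP(\Gamma')$ is NP-complete because $\Gamma_c\subseteq\Gamma'$. Applying the algebraic CSP dichotomy conjecture to the idempotent language $\Gamma'$ yields that $\Gamma'$ pp-interprets $\sat{3}$. Finally, invoking the contrapositive of the first assertion of Theorem~\ref{thm:subexp} --- if $\Gamma'$ pp-interprets $\sat{3}$ and $\CSP(\Gamma')$ is solvable in subexponential time, then the ETH fails --- gives $(1)$.

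For $(1)\Rightarrow(2)$, assume the ETH is false, i.e.\ ${\sf T}(\sat{3})=0$, so for every $c>0$ there is an algorithm for $3$-SAT running in $2^{cN}$ time on $N$ variables. Let $\Gamma$ be an arbitrary finite constraint language over a finite domain $D$ with maximum arity $r$ (if $\CSP(\Gamma)$ is not NP-complete it is polynomial-time solvable and ${\sf T}(\Gamma)=0$ trivially, so the interesting case is the NP-complete one). Given an instance on $n$ variables I would encode it over $\B$ by replacing each domain variable with $\lceil\log_2|D|\rceil$ Boolean variables (adding one bounded-arity "domain-validity" constraint per variable and no fresh variables), turning each constraint into a Boolean relation of arity at most $w:=r\lceil\log_2|D|\rceil$ and writing each such relation as a $w$-CNF; this produces a $w$-SAT instance with $n\lceil\log_2|D|\rceil$ variables and polynomially many clauses. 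Applying the Sparsification Lemma of Impagliazzo--Paturi--Zane, for any $\varepsilon>0$ this reduces to at most $2^{\varepsilon n\lceil\log_2|D|\rceil}$ many $w$-SAT instances each with at most $C_\varepsilon\cdot n\lceil\log_2|D|\rceil$ clauses, hence --- via the textbook $w$-SAT to $3$-SAT transformation, which adds $O(1)$ variables per clause since $w$ is a constant --- to $3$-SAT instances on at most $C'_\varepsilon n$ variables. Solving each in $2^{cC'_\varepsilon n}$ time and summing over all branches yields total running time $2^{(\varepsilon\lceil\log_2|D|\rceil + cC'_\varepsilon)n}$ up to polynomial factors. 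To beat a prescribed $\delta>0$ I would first choose $\varepsilon$ so that $\varepsilon\lceil\log_2|D|\rceil<\delta/2$ (which fixes the constant $C'_\varepsilon$) and then choose $c$ so that $cC'_\varepsilon<\delta/2$; since $\delta$ is arbitrary, ${\sf T}(\Gamma)=0$, proving $(2)$.

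The implications $(2)\Rightarrow(3)$ and $(3)\Rightarrow(1)$ are essentially bookkeeping on top of Theorems~\ref{thm:core} and~\ref{thm:subexp} and the idempotent form of the dichotomy conjecture; the only points that need a little care are checking that passing to the core and then adjoining all constant relations preserves both NP-completeness and the value ${\sf T}(\cdot)=0$, and in particular that the core transformation leaves the variable set untouched so that ${\sf T}$ is genuinely preserved (not merely the classical complexity). The hard part is $(1)\Rightarrow(2)$: the naive Boolean encoding of $\CSP(\Gamma)$ is \emph{not} a variable-linear reduction, because the number of constraints --- and therefore the number of auxiliary variables created when cutting clauses down to width $3$ --- is polynomial rather than linear in $n$. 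Bringing the number of constraints down to linear is exactly what the Sparsification Lemma accomplishes, and this is the one ingredient that lies outside the algebraic machinery developed in the paper; a self-contained treatment would either cite the Sparsification Lemma (for bounded-width CNF, or its bounded-arity CSP analogue) or reprove it directly in the $\CSP(\Gamma)$ setting.
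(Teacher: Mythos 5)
Your proposal is correct and follows essentially the same route as the paper: the cycle $(1)\Rightarrow(2)\Rightarrow(3)\Rightarrow(1)$, with $(3)\Rightarrow(1)$ handled by passing to the core, adjoining all constants via Theorem~\ref{thm:core}, invoking the idempotent form of the algebraic dichotomy conjecture to obtain a pp-interpretation of $\sat{3}$, and applying the contrapositive of Theorem~\ref{thm:subexp}. The only difference is that for $(1)\Rightarrow(2)$ the paper simply cites Theorem~3 of Impagliazzo et al., whereas you unfold that citation into an explicit Boolean-encoding-plus-sparsification argument; this is precisely the content of the cited result, so your writeup is the same proof spelled out in full rather than a different one.
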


\begin{proof}
  The implication from (1) to (2) follows from Impagliazzo et
  al.~\cite[Theorem 3]{impagliazzo98}. The implication from (2) to (3) is
  trivial. For the implication from (3) to (1), we first note that
  $\CSP(\Gamma^c) \reduces \CSP(\Gamma)$, where $\Gamma^c$ is the core
  of $\Gamma$~\cite[Theorem 3.5]{barto2014}. If $\Gamma^c$ is expanded
  with all constants, then Theorem~\ref{thm:core} shows that the
  complexity does not change, and, last, this language can pp-interpret
  $\sat{3}$, due to the assumption that the algebraic
  CSP dichotomy conjecture is true, which via 
  Theorem~\ref{thm:subexp} implies that 3-SAT is solvable in
  subexponential time, and thus that the ETH is false.
\end{proof}

For $\CSP(\Gamma)$-$B$ our results are not as precise since we need
the additional assumption that the equality relation is
efpp-definable. This is not surprising since the most powerful
dichotomy results for $\CSP$s are usually concerned with either
constraint language restrictions~\cite{bulatov2011,bulatov2005},
structural restrictions~\cite{szeider2015,grohe2006}, but rarely both
simultaneously. However, in the Boolean domain there are plenty of
examples which illustrates how the equality relation may be
efpp-defined~\cite{creignou2014,jonsson2017}, suggesting that similar
techniques may also exist for larger domains.

Theorem~\ref{thm:subexp} also applies to many interesting classes of
infinite-domain CSPs. For example, if we consider $\Gamma$ such that
each $R \in \Gamma$ has a first-order definition over the structure
$(\mathbb{Q}; <)$, it is known that $\CSP(\Gamma)$ is NP-complete if
and only if $\Gamma$ can pp-interpret
$\sat{3}$~\cite{Bodirsky:habil,bodirskykara2010}. Hence,
Theorem~\ref{thm:subexp} is applicable, implying that if
$\CSP(\Gamma)$ is not solvable in subexponential time if it is
NP-complete, unless the ETH fails. More examples of infinite-domain
CSPs where Theorem~\ref{thm:subexp} is applicable includes graph
satisfiability problems~\cite{bodirsky2015} and phylogeny
constraints~\cite{jonsson2016b}. Note that all of these results hold
independently of whether the algebraic CSP dichotomy is true or
not. We also remark that the intractable cases of the CSP dichotomy
conjecture for certain infinite-domain CSPs are all based on
pp-interpretability of $\sat{3}$~\cite{barto2016}. If this conjecture
is correct, Theorem~\ref{thm:subexp} and the ETH implies that none of
these problems are solvable in subexponential time.

\section{The Easiest NP-Complete Ultraconservative CSP Problem}
\label{section:easiest_csp}
The results from
Section~\ref{section:eth}, assuming the algebraic CSP dichotomy
conjecture and the ETH, implies that ${\sf T}(\Gamma) > 0$ for any
finite-domain and NP-complete $\CSP(\Gamma)$. However, it is safe
to say that very little is known about the behaviour of the
function ${\sf T}$ in more general terms. For example, is there for 
an arbitrary NP-complete
$\CSP(\Gamma)$ possible to find an NP-complete $\CSP(\Delta)$ such that ${\sf
  T}(\Delta) < {\sf T}(\Gamma)$? Such a scenario would be compatible
with the consequences of Theorem~\ref{thm:subexp}.
We will show that this is unlikely, and prove that there for every
finite domain $D$ exists a relation $\RD{D}$ such that $\CSP(\RD{D})$ is
NP-complete but ${\sf T}(\{\RD{D}\}) \leq
{\sf T}(\Gamma)$ for any ultraconservative $\Gamma$ over $D$ such that
$\CSP(\Gamma)$ is NP-complete. 
To prove this we
have divided this section into two parts. In
Section~\ref{sec:extensions} we show that if $\Gamma$ is
ultraconservative and $\CSP(\Gamma)$ is NP-complete, then there exists
a relation $R \in \pcclone{\Gamma}$ which shares certain properties
with the relation $\Rdddp$. In
Section~\ref{section:extension_complexity} we use properties of these relations in
order to prove that there for every finite domain $D$ is possible to find
a relation $\RD{D}$ such that $\CSP(\RD{D})$ is CV-reducible to any other
NP-complete and ultraconservative $\CSP(\Gamma)$ problem.

\subsection{$\RB$-Extensions}
\label{sec:extensions}
The columns of the matrix representation of the relation $\Rdddp$ from
Jonsson et al.~\cite{jonsson2017} (resulting in the easiest
NP-complete $\SAT$ problem) enumerates all Boolean ternary tuples.  We
generalize this relation to arbitrary finite domains as follows.

\begin{definition}
\label{definition:easiest}
  For each finite $D$ let $\RD{D} = \{t_1, t_2, t_3\}$ denote the $|D|^{3}$-ary
relation such that there for every
$(d_1, d_2, d_3) \in D^3$ exists $1 \leq i \leq |D|^3$ such that
$(t_1[i], t_2[i], t_3[i]) = (d_1, d_2, d_3)$. 
\end{definition}

Hence, similar to $\Rdddp$, the columns of the matrix representation of $\RD{D}$ enumerates
all ternary tuples over $D$. For each $D$ the relation $\RD{D}$ is unique up to permutation
of arguments, and although we will usually not be concerned with the exact
ordering, we sometimes assume that $\RB = \Rdddp$ and
that $\pro_{1, \ldots, 8}(\RD{D}) = \RB$.
The notation $\RD{D}$ is a mnemonic for {\em saturated},
and the reason behind this will become evident in
Section~\ref{section:saturation}. For example, for $\{0,1,2\}$ we
obtain a relation $\{t_1, t_2, t_3\}$ with $27$ distinct arguments
such that $(t_1[i], t_2[i], t_3[i]) \in \{0,1,2\}^3$ for each $1 \leq
i \leq 27$.
Jonsson et al.~\cite{jonsson2017} proved that
$\RB \in \pcclone{\Gamma}$ for every Boolean and idempotent
constraint language $\Gamma$ such that $\SAT(\Gamma)$ is NP-complete. This is not true for arbitrary finite
domains, and in order to prove an analogous result we will need the following definition.

\begin{definition}
  Let $R$ be an $n$-ary relation of cardinality 3 over a domain $D$,
  $|D| \geq 2$. Let $a,b \in D$ be two distinct
  values. If 
  there exists $i_1, \ldots, i_8 \in \{1, \ldots, n\}$ such that
  \[\pro_{i_1, \ldots, i_8}(R) = \{(a, a, b, b, b, a, a, b), (a, b, a, b,
  a, b, a, b), (b, a, a, a, b, b, a, b)\},\]
then we say that $R$ is an {\em $\RB$-extension}.
\end{definition}

For example, $\RD{D}$ is an $\RB$-extension for every domain $D$.
Note that $\CSP(R)$ is always NP-complete when $R$ is an
$\RB$-extension. We will now prove that if $\CSP(\Gamma)$ is
NP-complete and $\Gamma$ is ultraconservative, then $\Gamma$ can
pp-define an $\RB$-extension.

\begin{lemma1} \label{lem:ppi}
  Let $\Gamma$ be an ultraconservative constraint language over a
  finite domain $D$ such 
  that $\CSP(\Gamma)$ is
  NP-complete. Then there
  exists a relation $R \in \cclone{\Gamma}$ which is an $\RB$-extension.
\end{lemma1}

\begin{proof}
  Since $\CSP(\Gamma)$ is NP-complete and $\Gamma$ is ultraconservative, $\Gamma$ can pp-interpret every
  Boolean relation. Therefore let $f : F \rightarrow
  \B$, $F \subseteq D^d$ denote the parameters in the pp-interpretation
  of $\RB$, and note that $f^{-1}(\RB) \in \cclone{\Gamma}$, but
  that $f^{-1}(\RB)$ is not necessarily an $\RB$-extension since
  it could be the case that $|f^{-1}(\RB)| > 3$. 
  Pick two tuples $s$ and $t$ in $F$ such that $f(s) = 0$ and $f(t) = 1$. Such
  tuples must exist since $f$ is surjective. Now consider the relation $F_1(x_1, \ldots, x_d) \equiv
  F(x_1, \ldots, x_d) \land \{(s[1]), (t[1])\}(x_1) \land \ldots \land
  \{(s[d], t[d])\}(x_d)$. This relation is pp-definable over $\Gamma$ since
  $\Gamma$ is ultraconservative and since $F \in \cclone{\Gamma}$. 
  By construction, it is clear that $s, t \in F_1$. Assume
  furthermore than $|F_1| > 2$, i.e., that there exists $u \in F_1
  \setminus \{s,t\}$. Assume without loss of generality that $f(u) =
  0$, and observe that there for each $i \in \{1,\ldots, d\}$ holds
  that $u[i] \in \{s[i], t[i]\}$. We claim that there exists some $i
  \in \{1, \ldots, d\}$ such that $u[i] = t[i] \neq s[i]$. To see
  this, observe that there must exist $i$ such that $u[i] \neq
  s[i]$, since otherwise $u = s$, and it then follows that $u[i] = t[i]$.
  Construct the relation $F_2(x_1, \ldots, x_d) \equiv F_1(x_1,
  \ldots, x_d) \land \{(u[1]), (t[1])\}(x_1) \land \ldots \land
  \{(u[d]),(t[d])\}(x_d)$, and note that $F_2 \subset F_1$ since $s \notin
  F_2$.
  By repeating this procedure we will obtain a relation $F' \subseteq F$
  such that $F' = \{s_0, s_1\}$ and such that $f(s_0) = 0$, $f(s_1) =
  1$. Using the relation $F'$ we can then pp-define the relation
  \[
  \begin{aligned}
R(x_{1,1}, \ldots, x_{1,d}, \ldots, x_{8,1}, \ldots, x_{8,d}) \equiv &
f^{-1}(\RB)(x_{1,1}, \ldots, x_{1,
    d}, \ldots, x_{8,1}, \ldots, x_{8,d}) \land \\ & F'(x_{1,1}, \ldots,
  x_{1,d}) \land \ldots \land  F'(x_{8,1}, \ldots, x_{8,d}).
\end{aligned}
\]
Clearly, if $(a_{1,1}, \ldots, a_{1,d}, \ldots, a_{8,1},
\ldots, a_{8,d}) \in R$, then $(a_{i,1}, \ldots, a_{i,d}) \in \{s_0,
s_1\}$ for each $1 \leq i \leq 8$, and $(f(a_{1,1}, \ldots, a_{1,d}), \ldots, f(a_{8,1},
\ldots, a_{8,d})) \in \RB$ if and only if $(a_{1,1}, \ldots, a_{1,d}, \ldots, a_{8,1},
\ldots, a_{8,d}) \in f^{-1}(\RB)$. Since $R \subseteq
f^{-1}(\RB)$, this implies that $(f(a_{1,1}, \ldots, a_{1,d}), \ldots, f(a_{8,1},
\ldots, a_{8,d})) \in \RB$ if and only if $(a_{1,1}, \ldots, a_{1,d}, \ldots, a_{8,1},
\ldots, a_{8,d}) \in R$ and each ($a_{i,1}, \ldots, a_{i,d}) \in
\{s_0, s_1\}$. In other words each element $f(a_{i,1}, \ldots,
a_{i,d})$ in a tuple of $\RB$ uniquely correponds to $d$ arguments $a_{i,1}, \ldots,
a_{i,d}$ in the corresponding tuple of $R$, since $(a_{i,1}, \ldots,
a_{i,d}) = s_0$ if $f(a_{i,1}, \ldots, a_{i,d}) = 0$, and $(a_{i,1}, \ldots,
a_{i,d}) = s_1$ if $f(a_{i,1}, \ldots, a_{i,d}) = 1$.
It follows that \[ R = \{s_0^{\frown}s_0^{\frown}s_1^{\frown}s_1^{\frown}s_1^{\frown}s_0^{\frown}s_0^{\frown}s_1,
   s_0^{\frown}s_1^{\frown}s_0^{\frown}s_1^{\frown}s_0^{\frown}s_1^{\frown}s_0^{\frown}s_1,
   s_1^{\frown}s_0^{\frown}s_0^{\frown}s_0^{\frown}s_1^{\frown}s_1^{\frown}s_0^{\frown}s_1\},\]
   and therefore also that $R$ is an $\RB$-extension.
\end{proof}

Observe that the existence of an $\RB$-extension $R \in
\cclone{\Gamma}$ does not imply that
$\CSP(R) \reduces \CSP(\Gamma)$. To accomplish this, we need to show
that $\Gamma$ can also qfpp-define an $\RB$-extension. 

\begin{restatable}{lemma1}{lemeasiestultra}\label{lem:easiest_ultra}
  Let $\Gamma$ be an ultraconservative constraint language over a finite domain
  $D$ such that $\CSP(\Gamma)$ is NP-complete. Then there exists a relation in $\pcclone{\Gamma}$ which is
  an $\RB$-extension. 
\end{restatable}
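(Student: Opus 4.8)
The plan is to upgrade Lemma~\ref{lem:ppi}, which only gives an $\RB$-extension in $\cclone{\Gamma}$, to one in $\pcclone{\Gamma}$, i.e.\ a relation defined without existential quantification. The starting point is the relation $R \in \cclone{\Gamma}$ produced by Lemma~\ref{lem:ppi}: it has exactly three tuples and contains eight coordinates witnessing the $\RB$-pattern on two values $a,b$. Since $R$ is pp-definable over $\Gamma$, it is defined by some primitive positive formula $\exists \bar y\,.\,\psi(\bar x,\bar y)$ where $\psi$ is a conjunction of atoms from $\Gamma \cup \{\eq\}$. The goal is to eliminate the existential quantifiers, exploiting that $\Gamma$ is ultraconservative (so all unary relations, in particular all singletons $\const{d}$, are available as constraints) and that the relation we are defining has only three tuples.

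\textbf{Key steps.} First I would fix the three tuples $t_1,t_2,t_3$ of $R$ and, for each existentially quantified variable $y_j$ in the pp-definition, observe that in any satisfying assignment realizing $t_\ell$ the variable $y_j$ takes \emph{some} value $d_{j,\ell} \in D$; there are at most $3^{|\bar y|}$ many combined choices. I would then argue that one can restrict attention to a single such witnessing function: intersect the pp-formula with unary constraints pinning each $y_j$ to the set $\{d_{j,1},d_{j,2},d_{j,3}\}$ (legal since $\Gamma$ is ultraconservative), and then, as in the pruning argument of Lemma~\ref{lem:ppi}, iteratively peel away spurious tuples of the defined relation until only the three intended tuples survive, keeping the witnesses for $y_j$ synchronized with the three target tuples. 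The upshot is a pp-definition in which each $y_j$ is \emph{functionally determined} by (a projection of) $\bar x$ on the three-element relation. At that point the existential quantifiers are redundant: a pp-formula in which every quantified variable is forced, on every tuple of the defined relation, to a value determined by the free variables can be rewritten as a qfpp-formula by substituting, for each occurrence of $y_j$, the free variable (or constant) carrying the same value — using that the three tuples $s_0,s_1$ blocks constructed in Lemma~\ref{lem:ppi} let us read off, from the value of a single coordinate, the whole block. Concretely, since each block $(a_{i,1},\dots,a_{i,d})$ is either $s_0$ or $s_1$, the first coordinate $a_{i,1}$ already determines the block; so any quantified variable whose value across the three tuples agrees with some existing coordinate can be identified with that coordinate via equality atoms, and then the equalities themselves can be contracted away by variable renaming, yielding a qfpp-definition. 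The resulting relation still has the eight $\RB$-witnessing coordinates, hence is an $\RB$-extension in $\pcclone{\Gamma}$.

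\textbf{Main obstacle.} The delicate part is not the quantifier elimination in the abstract — that a cardinality-3 relation with all quantified variables pinned to 3-element sets can be made quantifier-free is essentially bookkeeping — but rather ensuring that after pinning the $y_j$'s and pruning, the resulting three-tuple relation is \emph{exactly} the intended $\RB$-extension rather than some other cardinality-3 relation, and in particular that the eight distinguished coordinates still carry the pattern $\{(a,a,b,b,b,a,a,b),(a,b,a,b,a,b,a,b),(b,a,a,a,b,b,a,b)\}$. This requires running the pruning of Lemma~\ref{lem:ppi} \emph{simultaneously} on the free variables and the quantified ones, so that the bijection between the three surviving tuples and the three tuples of $\RB$ is preserved throughout; I expect this synchronization to be the technical heart of the argument, and it is likely handled by the authors through the machinery of ``$\RB$-extensions'' and the saturation construction promised for Section~\ref{section:saturation}, rather than by a naive repetition of the Lemma~\ref{lem:ppi} proof.
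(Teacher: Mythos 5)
There is a genuine gap. Your plan is ``pin each quantified variable $y_j$ to its three witness values with a unary constraint, then prune spurious tuples as in Lemma~\ref{lem:ppi}.'' The pruning in Lemma~\ref{lem:ppi} works coordinatewise with unary constraints, and that mechanism provably cannot resolve the hard case here. Concretely, write $S_i$ for the set of values $y_1$ takes in the extensions of the tuple $t_i$ of $R$. If (up to symmetry) $S_1=\{d_1\}$, $S_2=\{d_2\}$, $S_3=\{d_1,d_2\}$, then the unquantified relation contains the four tuples $t_1^{\frown}(d_1)$, $t_2^{\frown}(d_2)$, $t_3^{\frown}(d_1)$, $t_3^{\frown}(d_2)$. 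Any system of unary constraints that keeps three of these (one extension of each $t_i$) must admit $t_3[i]$ at every old coordinate $i$ and both $d_1,d_2$ at the new coordinate, hence also keeps the fourth tuple. So ``pin and prune'' stalls at four tuples, and $y_1$ is \emph{not} functionally determined by the tuple index after unary restriction. This is exactly the case the paper singles out: it pp-defines the binary relation $F=\{(a,d_1),(b,d_2)\}$ by substituting a constant into one of the eight $\RB$-pattern coordinates (which kills $t_3$ and leaves a two-tuple relation relating the coordinate $x_{i_3}$ to $y_1$), lifts $F$ to a qfpp-definable two-tuple relation $F'$ via a separate lemma for cardinality-2 relations (Lemma~\ref{lem:2tuples}), and conjoins $F'(x_{i_3},y_1,\dots)$ to force the functional dependence. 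That gadget is the missing idea in your proposal.

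Two smaller remarks. First, you locate the difficulty in preserving the eight distinguished coordinates through the pruning; that part is actually automatic, since quantifier elimination only \emph{appends} coordinates and never touches the original ones, and an $\RB$-extension only requires the pattern to appear on \emph{some} eight coordinates. Second, your closing step of replacing each $y_j$ by an existing free variable via equality atoms is both unnecessary and not always available: the value pattern of $y_j$ across the three tuples need not coincide with that of any existing coordinate. The correct move (and the one the paper makes) is simply to promote $y_j$ to a new coordinate of a higher-arity relation, which remains an $\RB$-extension.
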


\begin{proof}
  We provide a short sketch of the most important ideas. For the full
  proof the reader may consult Appendix~\ref{a1}. Via
  Lemma~\ref{lem:ppi} there exists an $\RB$-extension $R \in
  \cclone{\Gamma}$. It is not necessarily the case that $R \in
  \pcclone{\Gamma}$, but it is possible to construct an
  $\RB$-extension by gradually converting the pp-definition of
  $R$ over $\Gamma$ to a qfpp-definition. To do this, let $\ar(R) = n$
  and assume e.g.\ that $R'(x_1, \ldots, x_n) \equiv \exists y
  . \varphi(x_1, \ldots, x_n, y)$, where $\exists y . \varphi(x_1, \ldots, x_n,
  y)$ is a pp-formula over $\Gamma$. Consider the relation $R'(x_1, \ldots,
  x_n, y) \equiv \varphi(x_1, \ldots, x_n, y)$. This relation is
  qfpp-definable over $\Gamma$, and if $|R'|
  > 3$ (and $R'$ is not an $\RB$-extension) one can prove that there either exists a unary
  constraint $E \in \Gamma$ such that $R''(x_1, \ldots, x_n, y) \equiv
  \varphi(x_1, \ldots, x_n, y) \land E(y)$ is an $\RB$-extension, or that there exists $i
  \in \{1, \ldots, n\}$ and a relation $F \in \pcclone{\Gamma}$ such
  that $R''(x_1, \ldots, x_i, \ldots, x_n, y, z_1, \ldots, z_{\ar(F)})
  \equiv \varphi(x_1, \ldots, x_n, y) \land F(x_i, y, z_1, \ldots,
  z_{\ar(F)})$ defines an $\RB$-extension.
\end{proof}

\subsection{Properties of and Reductions between $\RB$-Extensions}
\label{section:extension_complexity}

By Lemma~\ref{lem:easiest_ultra}, we can completely concentrate on $\RB$-extensions.
We will prove that
${\sf T}(\{\RD{D}\}) \leq {\sf T}(\Gamma)$ for every
ultraconservative $\Gamma$ over $D$ such that $\CSP(\Gamma)$ is NP-complete. To prove this, we
begin in Section~\ref{section:saturation} by investigating properties
of $\RB$-extensions, which we use to simplify the total number of
distinct cases we need to consider. With the help of these results we
in Section~\ref{section:reductions} develop techniques in order
to show that $\CSP(\RD{D}) \reduces \CSP(R)$ for every
$\RB$-extension over $D$.

\subsubsection{Saturated $\RB$-Extensions}
\label{section:saturation}
In this section we simplify the number of cases we need to consider
in Section~\ref{section:reductions}. First note that
if $R = \{t_1, t_2, t_3\}$ over $D$ is a relation with
$\ar(R) > |D|^3$ then there exists $i$ and $j$ such that $(t_1[i],
t_2[i], t_3[i]) = (t_1[j], t_2[j], t_3[j])$. We say that the $j$th
argument is {\em redundant}, and it is possible to get rid of this by
identifying the $i$th and $j$th argument with the
qfpp-definition \[R'(x_1, \ldots, x_{i}, \ldots, x_{j-1}, x_{j+1},
\ldots, x_n) \equiv R(x_1, \ldots, x_i, \ldots, x_{j-1}, x_i, x_{j+1},
\ldots, x_n).\] This procedure can be repeated until no redundant
arguments exist, and we will therefore always implicitly assume that $\ar(R) \leq
|D|^3$ and that $R$ has no redundant arguments. If $R$ is an $n$-ary
$\RB$-extension then the argument $i \in \{1, \ldots, n\}$ is said to
be {\em 1-choice}, or {\em constant}, if $|\pro_i(R)| = 1$, {\em
2-choice} if $|\pro_i(R)| = 2$, and {\em 3-choice} if $|\pro_i(R)| =
3$.

\begin{definition}
  An $n$-ary $\RB$-extension $R = \{t_1, t_2, t_3\}$ is said to be
  {\em saturated} if there for each $1 \leq i \leq n$ and every function
  $\tau : \{1, 2, 3\} \rightarrow \{1, 2, 3\}$, exists $1 \leq j \leq n$
  such that $(t_{\tau(1)}[i], t_{\tau(2)}[i], t_{\tau(3)}[i]) = (t_1[j],
  t_{2}[j], t_{3}[j])$.
\end{definition}

\begin{example} \label{ex:saturation}
The relation $\RD{D}$ is saturated for every
$D$, but if we consider the relations $R$ and $R'$ defined by the matrices
$
\bigl( \begin{smallmatrix} 
0 & 0 & 1 & 1 & 1 & 0 & 0 & 0 & 1 & 2 \\
0 & 1 & 0 & 1 & 0 & 1 & 0 & 0 & 1 & 2 \\ 
1 & 0 & 0 & 0 & 1 & 1 & 2 & 0 & 1 & 2 
\end{smallmatrix} \bigr)
\text{ and } \bigl(
\begin{smallmatrix}
0 & 0 & 1 & 1 & 1 & 0 & 0 & 0 & 1 & 2 \\
0 & 1 & 0 & 1 & 0 & 1 & 1 & 0 & 1 & 2 \\ 
1 & 0 & 0 & 0 & 1 & 1 & 2 & 0 & 1 & 2
\end{smallmatrix} \bigr)
$
then neither relation is saturated. First, $R$ is not saturated since its matrix representation, for example,
does not contain the column
$(0,2,0)$. Second, $R'$ is not saturated due to the 3-choice argument
in position 7. 
\end{example}

We now prove that we without loss of
generality may assume that an $\RB$-extension is saturated.

\begin{restatable}{lemma1}{lemtwosaturated} \label{lem:2saturated}
  Let $R$ be an $\RB$-extension. Then there exists a saturated
  $\RB$-extension $R' \in \pcclone{R}$.
\end{restatable}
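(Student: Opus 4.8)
The goal is to show that from an arbitrary $\RB$-extension $R$ we can qfpp-define (i.e., produce an element of $\pcclone{R}$) a saturated one. The natural plan is to attack the two ways saturation can fail, as illustrated in Example~\ref{ex:saturation}: either there is a $3$-choice argument, or some "column type" $(t_{\tau(1)}[i],t_{\tau(2)}[i],t_{\tau(3)}[i])$ obtained by permuting the roles of the three tuples is missing while the untwisted column $(t_1[i],t_2[i],t_3[i])$ is present.

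\paragraph{Step 1: Eliminate $3$-choice arguments.}
First I would argue that no saturated $\RB$-extension can have a $3$-choice argument, and more importantly that we can get rid of them. Since $R$ is an $\RB$-extension, it has an $8$-argument projection equal to $\RB=\Rdddp$; in particular $R$ has three distinct tuples $t_1,t_2,t_3$, and on the $\RB$-coordinates they are pairwise distinct. The idea is: if argument $i$ is $3$-choice, then $t_1[i],t_2[i],t_3[i]$ are three distinct domain values. Because $\Gamma$ here is just $\{R\}$ and we only have qfpp-definitions available (no unary relations, no equality-introduction), we cannot simply conjoin a unary constraint. Instead, I would use the $\RB$-coordinates to "pin down" the pattern: conjoining $R$ with a copy of itself on suitably chosen shared variables lets us force an argument to take only two of its three values, or to identify arguments. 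Concretely, take an $\RB$-coordinate $a$ where $(t_1[a],t_2[a],t_3[a])=(0,1,0)$ say, and identify argument $i$ with nothing but use the fact that on coordinate $a$ the tuples $t_1$ and $t_3$ agree. A cleaner route: since $R\subseteq D^n$ has exactly three tuples, any qfpp-definition that keeps all three tuples and merely relabels/identifies arguments stays an $\RB$-extension as long as the $8$ marker coordinates survive; so the plan is to identify the $3$-choice argument $i$ with an $\RB$-marker argument of the same "shape" if one exists, and otherwise to split the analysis — but since $\RB$ itself has no $3$-choice argument, identifying $i$ with the marker coordinate whose column on $(t_1,t_2,t_3)$ matches any two of $\{t_1[i],t_2[i],t_3[i]\}$ is impossible, so instead we simply \emph{delete} argument $i$: the projection $\pro_{\{1,\dots,n\}\setminus\{i\}}(R)$ is qfpp-definable from $R$ (it is an existential quantification, hence pp but — caution — not qfpp in general). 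This is the subtle point.

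\paragraph{Step 2: Add the missing twisted columns.}
Assuming all arguments are $1$- or $2$-choice, the remaining task is to ensure every twisted column appears. Fix an argument $i$ with $|\pro_i(R)|=2$, say $\{t_1[i],t_2[i],t_3[i]\}=\{a,b\}$, realizing one of the three patterns $(a,a,b)$, $(a,b,a)$, $(b,a,a)$ (up to swapping $a,b$). For a target permutation $\tau$, the column $(t_{\tau(1)}[i],t_{\tau(2)}[i],t_{\tau(3)}[i])$ is again one of these three $2$-choice patterns. The plan is to realize it by a qfpp-definition of the form $R'(\dots) \equiv R(\dots) \wedge R(\text{permuted/shared variables})$: conjoining $R$ with the relation $R$ whose arguments have been permuted so that on the three tuples the roles of $t_1,t_2,t_3$ are cyclically shifted. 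The key observation is that $R$ has an $\RB$-marker, and $\RB=\Rdddp$ has the property that every one of the three $2$-choice patterns over $\{0,1\}$ already occurs among its columns; hence the permuted copy of $R$ can be glued to the original along the marker coordinates (matching each new tuple to the correct old tuple), producing a relation on the same three tuples that now contains the missing column in position $i$. Iterating over all $i$ and all $\tau$ — finitely many steps, and the arity stays bounded by $|D|^3$ after removing redundant arguments each time — yields a saturated $\RB$-extension in $\pcclone{R}$.

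\paragraph{Main obstacle.}
The hard part is Step 1, specifically the tension between needing to remove $3$-choice arguments and the fact that projection (existential quantification) is a pp- but not a qfpp-operation, so $\pro_i$ alone does not keep us inside $\pcclone{R}$. The resolution must exploit the rigid structure of a cardinality-$3$ relation: I expect the argument to be that a $3$-choice argument can always be \emph{simulated} qfpp-freely by conjoining $R$ with a relabelled copy of itself that "folds" the three values down to two while preserving all three tuples (the three tuples remain pairwise distinct because of the intact $\RB$-marker), and only then, once every argument is $2$-choice, does one identify equal columns via the harmless argument-identification qfpp-rule. Getting this folding to work — i.e., checking that the conjoined relabelled copy does not accidentally kill a tuple or merge two tuples — is the delicate verification, and it is exactly where the cardinality-$3$ hypothesis and the presence of the $\RB$-marker coordinates do the work. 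Everything else is bookkeeping: counting arguments, removing redundancies, and iterating finitely often.
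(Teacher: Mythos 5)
There is a genuine gap, and it starts with two misreadings of the definition of saturation. First, a saturated $\RB$-extension \emph{can} have $3$-choice arguments: $\RD{D}$ itself is saturated for every $D$ and has $3$-choice arguments whenever $|D|\geq 3$ (e.g.\ the column $(0,1,2)$). The relation $R'$ in Example~\ref{ex:saturation} fails to be saturated not because a $3$-choice argument is present but because the twisted columns it forces (such as $(1,0,2)$) are absent. So your entire Step 1 --- eliminating $3$-choice arguments --- is aimed at a non-problem, and the mechanism you propose for it (projection, or ``folding'' via conjoined relabelled copies) is exactly the move you yourself flag as not qfpp-legal and never verify. Second, saturation quantifies over \emph{all} functions $\tau:\{1,2,3\}\to\{1,2,3\}$, not just permutations, so Step 2 also misses the constant and ``merged'' columns; for non-injective $\tau$ your gluing construction cannot work as stated, since a conjunction of two copies of a $3$-tuple relation cannot collapse two tuples onto one tuple of the second copy unless the shared coordinates already witness that collapse --- which is precisely the saturation property you are trying to establish, making the argument circular even in the permutation case.

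The paper avoids all of this by never writing down a qfpp-formula. It defines $R'$ semantically: append to $R=\{t_1,t_2,t_3\}$, for every position $i$ and every $\tau:\{1,2,3\}\to\{1,2,3\}$, a fresh argument with column $(t_{\tau(1)}[i],t_{\tau(2)}[i],t_{\tau(3)}[i])$. That $R'$ is saturated is then immediate. Membership $R'\in\pcclone{R}$ is proved \emph{indirectly} through the Galois connection of Theorem~\ref{theorem:galois}: if $R'\notin\pcclone{R}$ there is a partial polymorphism $f$ of $R$ violating $R'$, of arity at most $3$ by Lemma~\ref{lem:contraction}; since $\pro_{1,\ldots,n}(R')=R$, the tuples of $R$ lie in the domain of the relevant application of $f$, and since every extra column of $R'$ is a $\tau$-twist of a column of $R$, the coordinate where $f$ fails to act as a projection on $R'$ can be pulled back to a coordinate of $R$ under a reordering $t_{\tau'(1)},t_{\tau'(2)},t_{\tau'(3)}$ of the tuples of $R$, yielding $f(t_{\tau'(1)},t_{\tau'(2)},t_{\tau'(3)})\notin R$ and contradicting $f\in\ppol(R)$. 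This dual, polymorphism-side argument is the key idea your proposal is missing; without it, the syntactic construction you sketch does not go through.
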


\begin{proof}
We provide a short proof sketch illustrating the most important
ideas. See Appendix~\ref{a1} for a full proof. Let $n = \ar(R)$ and
define $R'$ such that $\pro_{1, \ldots, n}(R') = R$, and then add the
minimum number of arguments which makes $R'$ saturated. Via
Theorem~\ref{theorem:galois} it follows that if $R' \notin
\pcclone{R}$ then this can be witnessed by a partial function $f$
preserving $R$ but not $R'$. Therefore, there exists tuples
$t'_1, t'_2, t'_3 \in R'$ such that $f(t'_1, t'_2, t'_3) \notin R'$,
but since $\pro_{1, \ldots, n}(R') = R$ and since $R'$ is saturated,
one can prove that there must exist tuples $t_1, t_2, t_3 \in R$ such
that $f(t_1, t_2, t_3) \notin R$, contradicting the assumption that $f$
preserves $R$. Hence, $R' \in \pcclone{\Gamma}$.
\end{proof}

\begin{example}
If $R$ is the relation from Example~\ref{ex:saturation}
then the saturated relation $R'$ in $\pcclone{R}$ from
Lemma~\ref{lem:2saturated} is given by 
$R' =
\bigl(
\begin{smallmatrix}
0 & 0 & 1 & 1 & 1 & 0 & 0 & 0 & 2 & 2 & 2 & 0 & 0 & 1 & 2 \\
0 & 1 & 0 & 1 & 0 & 1 & 0 & 2 & 0 & 2 & 0 & 2 & 0 & 1 & 2 \\ 
1 & 0 & 0 & 0 & 1 & 1 & 2 & 0 & 0 & 0 & 2 & 2 & 0 & 1 & 2
\end{smallmatrix} \bigr).
$
\end{example}

\subsubsection{Reductions Between $\RB$-Extensions}
\label{section:reductions}

The main result of this section 
(Theorem~\ref{thm:mainresult} and Theorem~\ref{thm:mainresult-ext}) show that
${\sf T}(\{\RD{D}\}) = {\sf T}(\{\RD{D}\} \cup 2^D) \leq {\sf T}(\Gamma)$
whenever $\Gamma$ is an ultraconservative constraint language
over $D$ such that CSP$(\Gamma)$ is NP-complete.
The result is proven by a series of CV-reductions that we present in
Lemmas~\ref{lem:3choice}--\ref{lem:3choice3}. Due to space
constraints, we only present the proof of Lemma~\ref{lem:3choice2} which
illustrates several useful techniques, and the
remaining proofs can be found in Appendix~\ref{a1}.
Before we begin, we note
that if $R$ is an $\RB$-extension over $D$ then $\{R\}$ is not necessarily a
core. For a simple counterexample, $\{\RB\}$ is not a core 
over $\{0,1,2\}$ since the endomorphism $e(0) = 0$, $e(1) =
1$, $e(2) = 0$, is not an automorphism. However, if $R$ is an $\RB$-extension and $E =
\{d_1, \ldots, d_{m}\}$ the set $\bigcup_{1 \leq i \leq \ar(R)}
\pro_i(R)$, every endomorphism $e : E
\rightarrow E$ of $R$ must be an automorphism. Hence,
Theorem~\ref{thm:core} is applicable, and we conclude that $\CSP(\{R,
\const{d_1}, \ldots, \const{d_m}\}) \reduces \CSP(R)$. When working
with reductions between $\RB$-extensions we may therefore freely make use of constant relations.
Given an instance $(V,C)$ of $\CSP(R)$, where $R$ is an
$\RB$-extension, we say that a variable $x \in V$ occurring in a
$k$-choice position in a constraint in $C$, $1 \leq k \leq 3$, is a
{\em $k$-choice variable}.

\begin{restatable}{lemma1}{lemthreechoice} 
  \label{lem:3choice}
  Let $R$ be a saturated $\RB$-extension. Then there exists a CV-reduction $f$
  from $\CSP(R)$ to $\CSP(R)$ such that for every instance $I$ of
  $\CSP(R)$, each variable in $f(I)$ occurs as a
  3-choice variable in at most one constraint. 
\end{restatable}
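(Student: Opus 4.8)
The plan is to produce the reduction $f$ by a ``splitting'' construction that isolates each $3$-choice occurrence of a variable into its own constraint, using fresh variables and equality-enforcing gadgets that do not increase the variable count. First I would fix a saturated $\RB$-extension $R = \{t_1,t_2,t_3\}$ of arity $n$, and observe that saturation is exactly what guarantees that for any two arguments $i,j$ the ``compound'' argument behaviour $(t_1[i],t_2[i],t_3[i])$ and $(t_1[j],t_2[j],t_3[j])$ can be matched up to any permutation $\tau$ of the three tuples; in particular every $2$-choice column pattern and every $3$-choice column pattern is realised by many arguments, and the relation obtained by permuting rows of $R$ is again (a subrelation of) $R$ up to argument reordering. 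The key consequence I would extract is an \emph{equality gadget}: using a single constraint $R(\cdots)$ with appropriate repeated variables one can qfpp-define, over $\{R\}$ together with the constant relations (which we argued above are freely available for $\RB$-extensions), a relation $E(x,y)$ forcing $x=y$ on the relevant sub-domain $E=\bigcup_i\pro_i(R)$ — or more carefully, a relation that lets us ``copy'' the value of a $3$-choice variable into a fresh variable while only using positions that are not $3$-choice. This is the heart of the matter.

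Given such a copying gadget, the reduction $f$ is then defined constraint-by-constraint. Take an instance $I=(V,C)$ of $\CSP(R)$. For each variable $x\in V$, if $x$ occurs in $3$-choice positions in constraints $c_1,\dots,c_m$ of $C$ with $m\ge 2$, I would leave $x$ in $c_1$, introduce fresh variables $x_2,\dots,x_m$ (one per extra $3$-choice occurrence), replace the $3$-choice occurrence of $x$ in $c_j$ by $x_j$, and add copying constraints tying $x_2,\dots,x_m$ back to $x$. The subtle point is the variable count: naively this introduces $\Theta(|C|)$ new variables, which is fine for a \emph{linear}-variable reduction only if $|C|=O(|V|)$, which need not hold. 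So the construction must be cleverer — the fresh variables $x_j$ should be \emph{reused} across the whole instance, or the copying must be arranged so that the fresh variables can be eliminated by existential quantification inside a qfpp-style gadget that keeps the total count at $|V|+O(1)$. Concretely I expect one uses that a CV-reduction only needs $|V'| = |V| + O(1)$, so the gadget copies must be realised without genuinely new \emph{global} variables: e.g.\ by a single auxiliary block of constantly many variables through which all copies are routed, exploiting that $3$-choice positions can simulate a ``fan-out'' when combined with constants and the saturated structure.

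The main obstacle, then, is precisely this variable-economy: showing that one can reorganise every multiply-occurring $3$-choice variable into single occurrences \emph{without} blowing up $|V|$ by more than an additive constant. I would handle it by first proving a local lemma — over $\{R\}\cup\{\const{d}: d\in E\}$ one can qfpp-define a relation $R^\star$ in which some designated argument is guaranteed to be ``$3$-choice once'' while the copies are folded into arguments that, after the reduction's standard CV-bookkeeping (identifying each copy-variable with an already-present variable), contribute nothing new. Saturation is used repeatedly here: it ensures the needed column patterns exist in $R$ so that the folding constraints are genuinely expressible as qfpp-definitions over $\{R\}$ (hence give CV-reductions by Theorem~\ref{theorem:cvred} via $\ppol(R)\subseteq\ppol$ of the gadget relation). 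Once the local gadget is in place, $f$ is its instance-wide application, polynomial-time computability is immediate, correctness (an assignment satisfies $I$ iff it extends to a satisfying assignment of $f(I)$) follows from the defining property of the copy gadget, and the variable bound $|V'|=|V|+O(1)$ follows from the reuse argument, completing the proof.
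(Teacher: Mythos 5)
There is a genuine gap, and you have in fact located it yourself: your construction goes in the wrong direction. You propose to \emph{split} a variable $x$ that occurs in $3$-choice positions of constraints $c_1,\dots,c_m$ into fresh copies $x_2,\dots,x_m$ tied back to $x$ by a copying/equality gadget. As you note, this creates $\Theta(|C|)$ fresh variables, and $|C|$ need not be $O(|V|)$, so the map is not a CV-reduction. Your proposed repair --- routing all copies through ``a single auxiliary block of constantly many variables'' --- cannot work: the copies of distinct variables must carry independent values, and unboundedly many independent values cannot be funneled through $O(1)$ auxiliary variables while preserving satisfiability. The auxiliary ``local lemma'' you invoke is never made precise, and the equality gadget itself is not established to be qfpp-definable over $\{R\}$ (pp-definitions with existential quantifiers are exactly what must be avoided here, since each quantified variable reappears as a fresh variable per constraint).

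The idea you are missing is that a shared $3$-choice occurrence lets you \emph{merge} the two constraints rather than split the variable. If $x$ sits at a $3$-choice position $i_1$ of $c$ and a $3$-choice position $i_2$ of $c'$, then the value of $x$ determines which of the three tuples $t_1,t_2,t_3$ is selected in $c$ \emph{and} which is selected in $c'$; the two selections are therefore synchronized by a fixed bijection $\tau$ on $\{1,2,3\}$. Saturation guarantees that reordering the rows of $R$ by $\tau$ is realized by a permutation $\rho$ of the argument positions, so one can identify each variable at position $\rho(i)$ of $c'$ with the variable at position $i$ of $c$ and delete $c'$ outright, preserving satisfiability. (When $\pro_{i_1}(R)$ and $\pro_{i_2}(R)$ overlap in only $2$ or $1$ values, $x$ is effectively forced into a $2$-choice or constant role, and saturation again supplies a position of $c$ with the matching column pattern to identify $x$ with.) Each step removes a constraint and only identifies variables, so the process terminates in at most $|C|$ iterations with $|V'| \leq |V|$, which trivially meets the CV-reduction bound. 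No gadgets, fresh variables, or equality relations are needed.
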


\begin{lemma} \label{lem:3choice2}
  Let $R$ be a saturated $\RB$-extension and let $R'$ be
  $R$ with one or more 3-choice arguments removed, such that $R'$ is
  still saturated. Then $\CSP(R) \reduces \CSP(R')$.
\end{lemma}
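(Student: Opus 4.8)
The plan is to build the CV-reduction by local constraint replacement, after two preprocessing steps, and then to check correctness by a short case analysis. First, since $R$ is an $\RB$-extension, every endomorphism of $R$ restricted to $E=\bigcup_{1\le i\le\ar(R)}\pro_i(R)$ is an automorphism, so by Theorem~\ref{thm:core} it suffices to give a CV-reduction from $\CSP(R)$ to $\CSP(\{R'\}\cup\{\const d:d\in E\})$ and then append the CV-reduction $\CSP(\{R',\const{d_1},\dots,\const{d_m}\})\reduces\CSP(R')$; CV-reductions compose. Second, by Lemma~\ref{lem:3choice} I may assume the input instance $I=(V,C)$ has the property that every variable is a $3$-choice variable in at most one constraint -- this is precisely the hypothesis that makes the argument go through.

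Two structural facts, both consequences of saturation, are used. If a constraint $R(x_1,\dots,x_n)$ has a $3$-choice position $i$ with pattern $(p,q,r)$, then $R$ also contains positions realising every $2$-choice pattern over $\{p,q\}$, $\{p,r\}$, $\{q,r\}$ and the eight core patterns over $\{a,b\}$, none of which is $3$-choice, so all of them survive into $R'$; moreover, because $R'$ is saturated, the removed $3$-choice positions come in complete orbits. The reduction replaces each constraint $R(x_1,\dots,x_n)$ of $I$ by $R'$ applied to its surviving arguments -- which, since the core positions survive and separate $t_1,t_2,t_3$, already forces those variables to agree with a unique tuple $t_k$ -- together with a ``repair'' for each removed position $i$, with variable $x_i$ and pattern $(p,q,r)$. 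Let $T_i$ be the intersection of $\{p,q,r\}$ with the value-set of the pattern of every occurrence of $x_i$ in $I$ other than at position $i$; by Lemma~\ref{lem:3choice} each such other occurrence is at a position that is at most $2$-choice for $x_i$, so either $x_i$ has no ``restricting'' occurrence and $T_i=\{p,q,r\}$, or $|T_i|\le 2$. If $x_i$ has no restricting occurrence outside removed positions of its own constraint, I add nothing for it except constant atoms on the constraint's core variables ruling out any tuple index incompatible with another removed position occupied by $x_i$. If $|T_i|\le 2$, say $T_i\subseteq\{p,q\}$, I add the constant atom forcing the core variable of that constraint whose pattern is $(a,a,b)$ to take value $a$ (which rules out the tuple whose entry at $i$ equals $r$) together with the equality $x_i=v$, where $v$ is the surviving variable of the same constraint whose pattern is $(p,q,q)$; on the two remaining tuples this pins $x_i$ to its correct value. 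The cases $|T_i|\le 1$ (pin the whole constraint to one tuple) and $T_i=\emptyset$ (output a fixed no-instance) are immediate.

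For correctness: a satisfying assignment of $I$ satisfies every $R$-constraint, hence every surviving projection and every added atom, so $f(I)$ is satisfiable. Conversely, given a satisfying assignment $g$ of $f(I)$, the surviving part of each constraint pins its tuple index $k$, the repair atoms force $g$ to give every restricted removed variable its $t_k$-value and make $k$ compatible with the unrestricted removed variables of that constraint, and the latter occur nowhere else in $f(I)$; redefining them to their $t_k$-values therefore yields a satisfying assignment of $I$. No variables are created -- the equalities only identify existing variables -- so $|V'|\le|V|$, and composing with the constant-elimination reduction gives the required CV-reduction.

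The main obstacle is that one cannot take the obvious shortcut of showing $R\in\pcclone{R'}$ and invoking Theorem~\ref{theorem:cvred}. A removed $3$-choice column takes three distinct values over the three tuples of $R$, but once every $3$-choice column over that triple has been deleted, no qfpp-definition over $R'$ -- even with all constant relations available -- can force a variable to be $3$-valued, so $R\notin\pcclone{\{R'\}\cup\{\const d:d\in E\}}$ in general. The reduction therefore has to be engineered so that each removed variable effectively ranges over at most two values, which is exactly what Lemma~\ref{lem:3choice} buys; carrying out the resulting case analysis -- especially for a removed variable that genuinely occurs in several constraints, and for several removed positions within one constraint -- is the delicate part, and is presumably why this lemma is the one singled out to illustrate the techniques.
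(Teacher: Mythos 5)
Your proof follows the same strategy as the paper's: preprocess with Lemma~\ref{lem:3choice} and the constant relations licensed by Theorem~\ref{thm:core}, use saturation to locate surviving $2$-choice columns whose patterns agree with a removed $3$-choice column on the admissible tuples, substitute those surrogate variables for the removed-position variables, kill incompatible tuples with constant atoms, and truncate to $R'$ so that no extra variables survive. The differences (your global intersection set $T_i$ versus the paper's pairwise case analysis on $|S|$ and $|L|$, and your explicit orbit observation for the removed columns) are presentational, so the argument is essentially the paper's own.
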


\begin{proof}
  Let $R = \{t_1, t_2, t_3\}$,
  $n = \ar(R)$, $n' = \ar(R')$, and assume that 
  $\pro_{1, \ldots, n'}(R) = R'$. Let $I =
  (V,C)$ be an instance of $\CSP(R)$. First apply
  Lemma~\ref{lem:3choice} in order to obtain an instance $I_1 = (V_1,
  C_1)$ of $\CSP(R)$ such that each 3-choice variable only occurs
  in a 3-choice position in a single constraint. Assume there exists
  $x \in V_1$ and two distinct 
  constraints $c,c' \in C_1$ such that $x$ occurs in positions $i \in \{n'+1,
  \ldots, n\}$ in $c$ and in a 1- or 2-choice position $j \in \{1, \ldots, n\}$ in
  $c'$. Let $S = \pro_{i}(R) \cap \pro_{j}(R)$, and note that $|S|
  \leq 2$. Assume first that $|S| = 2$, let $S = \{d_1, d_2\}$, and
  assume without loss of generality that $t_1[i] = t_1[j] = d_1$,
  $t_2[i] = t_2[j] = d_2$, and that $t_3[i] \neq t_3[j]$ (the other cases
  can be treated similarly). Since $R$ is
  saturated there exists a 2-choice argument $i' \in \{1, \ldots, n\}$ such that $t_1[i']
  = t_1[i] = t_1[j]$, $t_2[i'] = t_2[i] = t_2[j]$, and such that
  $t_3[i'] \neq t_3[i]$. Let $y$ be the variable occurring in the
  $i'$th position of $c$. Create a fresh variable $\hat x$, replace
  $x$ in position $i$ with $\hat x$, and for each constraint where $x$
  occurs as a 1- or
  2-choice variable, replace $x$  with $y$. Repeat this
  procedure until every 3-choice variable occurring in position $n'+1, \ldots,
  n$ only occurs in a single constraint, and let $I_2 = (V_2, C_2)$ be
  the resulting instance. Assume there exists $x \in V_2$ and a
  constraint $c \in C_2$ such that $x$ occurs as a 3-choice variable
  in position $i \in \{n'+1, \ldots, n\}$ and also in a distinct
  position $j \in \{1, \ldots, n\}$ in $c$. Let $L=\{t_r \mid 1
  \leq r \leq 3, t_r[i] = t_r[j]\}$. Since $R$ does not have any
  redundant arguments it must be the case that $|L| < 3$. If $|L| = 0$
  then the instance is unsatisfiable, in which case we output an
  arbitrary unsatisfiable instance, and if $|L| = 1$ it is easy to see
  that any variable occurring in $c$ can be assigned a fixed value,
  and the constraint may be removed. Therefore, assume that $|L| = 2$, and
  e.g.\ that $L = \{t_1, t_2\}$. Since $R$ is saturated there exists a
2-choice argument $j' \in \{1, \ldots, n\}$ such
that $t_1[j'] = t_2[j'] \neq t_3[j']$. Let $y$ be the
variable occurring in position $j'$ in $c$ and add the
constraint $\const{t_1[j']}(y)$. Repeat this for every variable
occurring in position $n'+1, \ldots, n$ in a constraint in $C_2$, and then replace each constraint
$R(x_1, \ldots, x_n', \ldots, x_n)$ by $R'(x_1, \ldots, x_n)$. Note
that any variable $\hat x$ introduced in the previous step of this
reduction is removed in this transformation. Hence, the reduction is
a CV-reduction.
\end{proof}

\begin{restatable}{lemma1}{lemtwochoice} \label{lem:2choice}
  Let $R$ be an $\RB$-extension and
  let $R'$ be an $\RB$-extension obtained by adding 
  additional 2-choice arguments to $R$. Then $\CSP(R') \reduces \CSP(R)$.
\end{restatable}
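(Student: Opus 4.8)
\textbf{Proof proposal for Lemma~\ref{lem:2choice}.}

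The plan is to build a CV-reduction from $\CSP(R')$ to $\CSP(R)$ by exploiting the fact that every 2-choice argument we added to form $R'$ is, by the saturation-style structure of $\RB$-extensions, ``redundant'' modulo the constant relations available to us. First I would fix the notation: write $R = \{t_1, t_2, t_3\}$, $n = \ar(R)$, $n' = \ar(R')$, and assume $\pro_{1, \dots, n}(R') = R$, so that the extra arguments $n+1, \dots, n'$ of $R'$ are precisely the newly added 2-choice arguments. Recall from the discussion preceding Lemma~\ref{lem:3choice} that since $R$ is an $\RB$-extension we may freely use the constant relations $\const{d}$ for $d$ in the range of $R$ (via Theorem~\ref{thm:core}), and the same holds for $R'$.

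The key observation is this: fix an added argument $i \in \{n+1, \dots, n'\}$, and let $\{a, b\} = \pro_i(R')$ be its two values. Up to permuting the three rows, the column $(t_1[i], t_2[i], t_3[i])$ equals one of $(a,a,b)$, $(a,b,a)$, $(b,a,a)$ (the two-values-with-a-repeat patterns, and their swaps), since it is a 2-choice column of a cardinality-3 relation. In each such case, because $R$ is a \emph{saturated} $\RB$-extension (it contains the $\RB$ block on coordinates $i_1, \dots, i_8$, which already realizes all Boolean 2-choice patterns, and — if $R$ is not saturated a priori we first apply Lemma~\ref{lem:2saturated} to replace it by a saturated $\RB$-extension in $\pcclone{R}$, which only makes $\CSP$ easier) there is already an old argument $i^* \in \{1, \dots, n\}$ with exactly the same column up to the sign of the values; more precisely, using the $\RB$-block, one finds an old 2-choice argument $i^*$ and values $a', b' \in \{0,1\}$-images such that the pattern of column $i^*$ over the rows $(t_1, t_2, t_3)$ agrees with the pattern of column $i$. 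Then the constraint on the variable $x$ sitting in position $i$ of a constraint application can be enforced by: identifying $x$ with the variable $y$ at position $i^*$ in that same constraint if the value-sets match, or — in general — by a short qfpp-definition combining the old constraint with constant relations and the equality relation, which for CV-reductions we are allowed to use since identifying variables does not increase the variable count.

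Concretely the reduction is: given an instance $(V, C)$ of $\CSP(R')$, replace each constraint $R'(x_1, \dots, x_{n'})$ by the constraint $R(x_1, \dots, x_n)$ together with, for each added coordinate $i \in \{n+1, \dots, n'\}$, a gadget forcing $x_i$ to take the value dictated by $x_1, \dots, x_n$ through the $\RB$-block. Since every $x_i$ with $i > n$ is pinned to agree (up to a fixed local dictionary) with one of $x_1, \dots, x_n$, we never introduce a fresh variable — every ``new'' variable is identified with an existing one — so $|V'| = |V|$ and the reduction is a CV-reduction; polynomial running time and correctness (a satisfying assignment of one instance yields one of the other, using that the value at $i$ is forced once the values at $1, \dots, n$ are fixed, because $R'$ restricted to $(1,\dots,n,i)$ is the graph of a partial function of its first $n$ coordinates) are then routine to check. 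It follows that $\CSP(R') \reduces \CSP(R)$, as desired.

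The main obstacle I anticipate is the bookkeeping for 2-choice columns whose \emph{value pair} is not contained in the value pair of any single old coordinate sharing the same row-pattern — e.g.\ when the added coordinate uses two values $\{a,b\}$ that never co-occur in the right pattern among the original $n$ coordinates but only appear ``spread out'' across several of them. Handling this case is where the argument has to work: one must show that, because $R$ is saturated and contains the full $\RB$ Boolean block, one can still pin the new variable by a \emph{conjunction} of constant constraints and equalities on the old variables (possibly forcing parts of the old constraint's variables to fixed values when the global assignment is already determined on that constraint), without ever needing a genuinely new variable. Making this case analysis exhaustive — there are only a handful of row-patterns up to permutation, so it is finite — is the technical heart, but it is the same style of argument already used in the proof of Lemma~\ref{lem:3choice2}.
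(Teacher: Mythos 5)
Your high-level intuition is the same as the paper's: every added 2-choice column of $R'$, after collapsing its two values to $\{0,1\}$, reproduces a column pattern already present in the Boolean $\RB$-block on coordinates $1,\dots,8$, so the value at an added position is determined (via a local value dictionary) by the value at some original position, and the added positions can therefore be stripped. The paper does not need to saturate $R$ for this --- the $\RB$-block alone supplies all six non-constant Boolean 2-choice patterns --- so your detour through Lemma~\ref{lem:2saturated} is unnecessary, though harmless.

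The genuine gap is in how you enforce the link. You propose ``a gadget forcing $x_i$ to take the value dictated by $x_1,\dots,x_n$'' built from constants, equalities, and qfpp-combinations, and you assert that this introduces no fresh variables. But the dictionary between the added column's values $\{a,b\}$ and the matching $\RB$-block column's values is in general a non-identity bijection, so an equality constraint cannot express it; a constant constraint cannot either; and the only remaining building block, a fresh $R$-constraint, would bring $\ar(R)$ new variable slots per occurrence, destroying the constant-variable property. So the gadget you rely on is exactly the object whose existence is in question, and the claim $|V'|=|V|$ is unsupported. The paper avoids the problem by never forcing the dropped variable at all: when a variable $x$ sits in an added position $i$ of a constraint $c_1$ and also appears at position $j$ of another constraint $c_2$ (or elsewhere in $c_1$), it performs a case analysis on $|\pro_i(R')\cap\pro_j(R')|$, uses the $\RB$-block to locate positions $l$ in $c_1$ and $m$ in $c_2$ whose columns realize the collapsed patterns, identifies the variables at $l$ and $m$ (adding constant constraints where needed to kill a stray third value), and replaces the occurrence of $x$ at the added position by a throwaway fresh variable $\hat x$. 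The added coordinates, together with all the $\hat x$'s living only there, are then deleted when $R'(x_1,\dots,x_{n'})$ is finally replaced by $R(x_1,\dots,x_n)$, which is what makes the count come out right. Your writeup identifies a side issue (value pairs spread across old coordinates) as the main obstacle, but the real technical content is this rewiring of a variable's \emph{other} occurrences, and that step is missing from your argument.
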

\begin{restatable}{lemma1}{lemthreechoicethree} \label{lem:3choice3}
  Let $R$ be a saturated $\RB$-extension over $D$ with 3-choice arguments. Then
  $\CSP(\RD{D}) \reduces \CSP(R)$.
\end{restatable}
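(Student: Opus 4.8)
The plan is to reduce $\CSP(\RD{D})$ to $\CSP(R)$ when $R$ is a saturated $\RB$-extension over $D$ possessing at least one 3-choice argument, by first normalizing $R$ via the preceding lemmas and then qfpp-defining a relation equivalent to $\RD{D}$. First I would observe that since $R$ is saturated and has a 3-choice argument, the set $E = \bigcup_{i} \pro_i(R)$ contains at least three values, say $d_1, d_2, d_3$, and by saturation the column $(d_1, d_2, d_3)$ together with all of its permutations under $\tau$ appears among the arguments of $R$. The key structural point is that a 3-choice argument $i$ with $\pro_i(R) = \{d_1,d_2,d_3\}$ forces, via saturation, the presence of all six columns obtained by permuting $(t_1[i], t_2[i], t_3[i])$; combined with the 1- and 2-choice arguments inherited from the fact that $R$ is an $\RB$-extension (which supply all Boolean columns over any two-element subset), this already gives a substantial ``core'' of columns to work with.

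Next I would set up the reduction machinery. Using Theorem~\ref{thm:core} (applicable here, as noted in Section~\ref{section:reductions}, since every endomorphism of $R$ restricted to $E$ is an automorphism), I may freely use the constant relations $\const{d}$ for $d \in E$. The goal is to show $\RD{D} \in \pcclone{\{R\} \cup \{\const{d} : d \in E\}}$ up to permutation of arguments, or more precisely to exhibit a CV-reduction. I would proceed by induction on $|E|$: if $|E| = |D|$ we are essentially done once we show that from a single 3-choice argument and the 2-choice arguments one can qfpp-generate every ternary column over $E$. Concretely, given a saturated $\RB$-extension $R$ with a 3-choice argument on values $\{d_1, d_2, d_3\}$, I would argue that for any $(a, b, c) \in E^3$ we can pp-define (without quantifiers) an argument realizing exactly the column $(a,b,c)$: using constant constraints to pin down the relevant positions, and using the existing 3-choice and 2-choice arguments together with argument-identification (the qfpp operation of equating two variable positions), one can ``mix'' columns to build up $(a,b,c)$. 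Saturation is precisely what guarantees that the closure of the available columns under the permutation action and under the mixing operations is all of $E^3$.

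The main obstacle I expect is the combinatorial heart: proving that the qfpp-closure of $R$'s argument set, restricted to the three rows $t_1, t_2, t_3$ viewed as functions on argument positions, generates every column in $E^3$ (and in particular all of $D^3$ once $E = D$, which is forced when $\CSP(R)$ is NP-complete and $R$ is saturated with enough choice structure). One has to be careful that qfpp-definitions only allow conjunction, existing relations, equality, and (after expanding constants) constant unary relations — no existential quantification that removes arguments — so the argument must be phrased entirely in terms of adding copies of $R$-constraints, identifying variables, and imposing constants. I would handle this by a careful case analysis on how a target column $(a,b,c)$ relates to the rows of $R$: if two of $a,b,c$ coincide we land in a 2-choice or 1-choice situation handled by the $\RB$-extension property plus Lemma~\ref{lem:2choice}; if all three are distinct we use the 3-choice argument and saturation to transport the column $(d_1,d_2,d_3)$ to $(a,b,c)$ by composing with constants and with 2-choice columns that differ from it in a single coordinate. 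Once every column over $E$ is available, identifying redundant arguments (as in Section~\ref{section:saturation}) and applying Lemmas~\ref{lem:3choice}, \ref{lem:3choice2}, \ref{lem:2choice} collapses the resulting relation to exactly $\RD{D}$ (when $E = D$) or, when $E \subsetneq D$, to a relation over the smaller domain $E$ that plays the role of $S_E$; but since $\CSP(R)$ is NP-complete one checks $E$ must in fact be large enough that an $\RB$-extension over $E$ exists, and the same argument then yields $\CSP(\RD{D}) \reduces \CSP(R)$ after composing with the reductions for the smaller domain. Assembling these CV-reductions — each preserving the number of variables up to an additive constant — gives the claimed $\CSP(\RD{D}) \reduces \CSP(R)$.
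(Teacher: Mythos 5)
Your proposal hinges on the claim that, from a saturated $\RB$-extension $R$ with a 3-choice argument, one can qfpp-define (using only conjunction, variable identification, and the constant relations supplied by Theorem~\ref{thm:core}) a relation realizing \emph{every} column of $E^3$, and that ``saturation is precisely what guarantees'' this. That is the gap: the claim is false. In any quantifier-free pp-definition over $\{R\} \cup \{\const{d} \mid d \in E\}$, each coordinate of the defined relation is carried by a variable, and if the defined relation has three tuples and that variable takes three distinct values $a,b,c$ across them, then the variable must occupy a position $i$ of some $R$-atom with $\{a,b,c\} \subseteq \pro_i(R)$ --- i.e.\ $R$ must \emph{already} possess a 3-choice argument over the value set $\{a,b,c\}$. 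Conjunction and identification can only reuse or shrink existing columns; they never create a new value-triple. Saturation does not help: it closes the set of columns under the maps $\tau:\{1,2,3\}\rightarrow\{1,2,3\}$ acting on the rows, which preserves each $\pro_i(R)$, so it yields all of $\{d_1,d_2,d_3\}^3$ for a triple already present but nothing over any other triple. (When $E \subsetneq D$ the situation is even worse: no qfpp-definition over $R$ and constants on $E$ can mention a value outside $E$, so $\RD{D} \notin \pcclone{\{R\}\cup 2^E}$ outright.) Consequently your induction never gets off the ground, and your closing step --- invoking Lemmas~\ref{lem:3choice}, \ref{lem:3choice2} and \ref{lem:2choice} merely ``to collapse'' the result --- inverts the role of those lemmas.

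The paper's proof is organized around exactly the obstacle you skip. Since $\RD{D}$ is in general \emph{not} qfpp-definable from $R$, the missing columns are introduced by instance-level gadget reductions rather than by definability: one picks a triple $\{d_1,d_2,d_3\}$ not realized by any $\pro_i(R)$, appends it as a fresh 3-choice argument, saturates to obtain $S$, and then chains $\CSP(S) \reduces \CSP(S') \reduces \CSP(R)$, where Lemma~\ref{lem:3choice2} disposes of the new 3-choice columns and Lemma~\ref{lem:2choice} of the 1- and 2-choice columns forced by saturation. Iterating over all missing triples produces a relation whose columns exhaust $D^3$, hence one qfpp-interdefinable with $\RD{D}$, and composing the CV-reductions gives $\CSP(\RD{D}) \reduces \CSP(R)$. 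If you want to repair your argument, this iterative ``add a column, then reduce back to $R$'' use of the two lemmas is the essential idea that is missing from your write-up.
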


We have thus proved the main result of this section.

\begin{theorem} \label{thm:mainresult}
Let $D$ be a finite domain and let $\Gamma$ be a finite,
ultraconservative constraint language over $D$. If $\CSP(\Gamma)$ is
NP-complete then ${\sf T}(\{\RD{D}\}) \leq {\sf T}(\Gamma)$.
\end{theorem}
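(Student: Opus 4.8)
The plan is to assemble Theorem~\ref{thm:mainresult} directly from the chain of lemmas developed in Sections~\ref{sec:extensions} and~\ref{section:reductions}. First I would invoke Lemma~\ref{lem:easiest_ultra}: since $\Gamma$ is ultraconservative and $\CSP(\Gamma)$ is NP-complete, there exists a relation $R \in \pcclone{\Gamma}$ that is an $\RB$-extension. By Theorem~\ref{theorem:cvred} (applied via $\ppol(\Gamma) \subseteq \ppol(R)$, which is equivalent to $R \in \pcclone{\Gamma}$ by the Galois connection of Theorem~\ref{theorem:galois}), we obtain $\CSP(R) \reduces \CSP(\Gamma)$, hence ${\sf T}(R) \leq {\sf T}(\Gamma)$. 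It therefore suffices to show ${\sf T}(\{\RD{D}\}) \leq {\sf T}(R)$ for every $\RB$-extension $R$ over $D$.

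To reduce $\CSP(\RD{D})$ to $\CSP(R)$, I would normalize $R$ in stages, each stage being a CV-reduction (so each step only improves, or preserves, the time complexity). First apply Lemma~\ref{lem:2saturated} to replace $R$ by a saturated $\RB$-extension $R_1 \in \pcclone{R}$; since $R_1 \in \pcclone{R}$, Theorem~\ref{theorem:cvred} gives $\CSP(R_1) \reduces \CSP(R)$. Now split on whether $R_1$ has any 3-choice arguments. If it does, Lemma~\ref{lem:3choice3} directly yields $\CSP(\RD{D}) \reduces \CSP(R_1)$ and we are done. If $R_1$ has no 3-choice arguments, then every argument is 1- or 2-choice, and $R_1$ can be obtained from $\RD{D}$ by first restricting attention to its (2-choice and constant) columns and then adding further 2-choice arguments; more precisely, $\RD{D}$ restricted to the columns whose projection lies in $\{s_0,s_1\}$-like pairs is an $\RB$-extension to which Lemma~\ref{lem:2choice} applies, giving $\CSP(\RD{D}) \reduces \CSP(R_1)$. (Here one also uses Lemma~\ref{lem:3choice2} to strip redundant 3-choice arguments when only a proper subset of them can be removed while staying saturated, but in the no-3-choice case this is vacuous.) Composing the CV-reductions gives $\CSP(\RD{D}) \reduces \CSP(R) \reduces \CSP(\Gamma)$, hence ${\sf T}(\{\RD{D}\}) \leq {\sf T}(\Gamma)$.

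The subtle point — the part that requires care rather than being purely mechanical — is the bookkeeping that guarantees the concatenation of all these reductions remains a \emph{constant} variable reduction, i.e. has parameter exactly $1$. Each individual lemma is stated as a CV-reduction, and CV-reductions compose to CV-reductions, so in principle this is automatic; but one must check that the case analysis (does $R_1$ have 3-choice arguments or not?) is decided by $R$ alone, not by the instance, so that a single reduction scheme is chosen uniformly. Since $D$ and $R$ are fixed constraint-language data, this is fine: the whole construction is a fixed polynomial-time procedure for each $R$, and $\RD{D}$ itself depends only on $D$. A secondary technical obligation is to justify the use of constant relations inside these reductions; this is handled by the remark preceding Lemma~\ref{lem:3choice}, namely that for $E = \bigcup_i \pro_i(R)$ every endomorphism of $R$ is an automorphism, so $\{R\}$ restricted to $E$ is a core and Theorem~\ref{thm:core} gives $\CSP(\{R,\const{d_1},\dots,\const{d_m}\}) \reduces \CSP(R)$ for $E = \{d_1,\dots,d_m\}$.

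Finally, I would note that the equality ${\sf T}(\{\RD{D}\}) = {\sf T}(\{\RD{D}\} \cup 2^D)$ claimed in the section introduction is a separate statement (Theorem~\ref{thm:mainresult-ext}) that would follow by showing $\CSP(\{\RD{D}\} \cup 2^D) \reduces \CSP(\{\RD{D}\})$ — i.e. that $\RD{D}$ itself can qfpp-define, or absorb via CV-reductions, all unary relations over $D$ — but for Theorem~\ref{thm:mainresult} as stated only the inequality ${\sf T}(\{\RD{D}\}) \leq {\sf T}(\Gamma)$ is needed, so the proof terminates with the composed reduction above.
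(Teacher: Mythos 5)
Your proposal follows the paper's own proof essentially verbatim in structure: Lemma~\ref{lem:easiest_ultra} produces an $\RB$-extension $R \in \pcclone{\Gamma}$, Theorem~\ref{theorem:cvred} (via the Galois connection) gives $\CSP(R) \reduces \CSP(\Gamma)$, Lemma~\ref{lem:2saturated} lets you pass to a saturated $R_1$, and the case split on whether $R_1$ has 3-choice arguments, handled by Lemma~\ref{lem:3choice3} in one branch and Lemmas~\ref{lem:3choice2} and~\ref{lem:2choice} in the other, is exactly the paper's decomposition. The remarks about composability of CV-reductions, the use of Theorem~\ref{thm:core} to justify constant relations, and the separation of Theorem~\ref{thm:mainresult-ext} are all correct.

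There is, however, one concrete misstep in your no-3-choice branch. You assert that $\CSP(\RD{D}) \reduces \CSP(R_1)$ follows from Lemma~\ref{lem:2choice} alone, and you dismiss the application of Lemma~\ref{lem:3choice2} as ``vacuous in the no-3-choice case.'' It is not vacuous: for $|D| \geq 3$ the relation $\RD{D}$ itself contains 3-choice columns (e.g.\ the column enumerating $(0,1,2)$), whereas $R_1$ by hypothesis has none, and Lemma~\ref{lem:2choice} only accounts for relations differing by added \emph{2-choice} arguments. The missing link is Lemma~\ref{lem:3choice2} applied to $\RD{D}$ itself, not to $R_1$: one first strips \emph{all} 3-choice columns from $\RD{D}$ (the result is still saturated, since the set of 1- and 2-choice columns of $\RD{D}$ is closed under permuting the three tuples), obtaining $\CSP(\RD{D}) \reduces \CSP(\RD{D}')$ where $\RD{D}'$ is this restriction, and only then observes that $\RD{D}'$ is $R_1$ with additional 2-choice (and constant) arguments adjoined, so that Lemma~\ref{lem:2choice} yields $\CSP(\RD{D}') \reduces \CSP(R_1)$. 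Your phrase ``$R_1$ can be obtained from $\RD{D}$ by \ldots adding further 2-choice arguments'' also has the containment backwards. All the pieces you need are already cited, so the fix is purely a matter of relocating where Lemma~\ref{lem:3choice2} is applied, but as written the chain in that branch does not close.
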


\begin{proof}
  We first observe that if $R$ is an $\RB$-extension over a finite domain $D$, 
  then
  $\CSP(\RD{D}) \reduces \CSP(R)$. By Lemma~\ref{lem:2saturated} we 
  may assume that $R$ is saturated.
  If $R$ does not contain any 3-choice
  arguments we use Lemma~\ref{lem:3choice2} together
  with Lemma~\ref{lem:2choice} and obtain a CV-reduction from
  $\CSP(\RD{D})$ to $\CSP(R)$. Hence, assume that $R$ contains one or
  more 3-choice arguments. In this case we use
  Lemma~\ref{lem:3choice3} and obtain a CV-reduction from
  $\CSP(\RD{D})$ to $\CSP(R)$. 
  By Lemma~\ref{lem:easiest_ultra} there exists an
  $\RB$-extension $R \in \pcclone{\Gamma}$, implying that
  $\CSP(R) \reduces \CSP(\Gamma)$ via Theorem~\ref{theorem:cvred}, and
  we know that $\CSP(\RD{D}) \reduces \CSP(R)$.
  We conclude that
  ${\sf T}(\{\RD{D}\}) \leq {\sf T}(\{R\}) \leq {\sf T}(\Gamma)$. 
\end{proof}

Clearly, $\{\RD{D}\}$ is not an ultraconservative constraint language
but the complexity of $\CSP(\RD{D})$ does
not change when we expand the language by adding all unary relations
over $D$ (the proof can be found in Appendix~\ref{a1}).

\begin{restatable}{theorem}{thmmainresultext} \label{thm:mainresult-ext}
  Let $D$ be a finite domain. Then ${\sf T}(\{\RD{D}\}) = {\sf
    T}(\{\RD{D}\} \cup 2^D)$.
\end{restatable}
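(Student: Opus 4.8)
The plan is to prove the two inequalities separately. One direction, ${\sf T}(\{\RD{D}\}) \le {\sf T}(\{\RD{D}\} \cup 2^D)$, is immediate: $\CSP(\{\RD{D}\})$ is exactly $\CSP(\{\RD{D}\} \cup 2^D)$ restricted to instances without unary constraints, so an algorithm for the latter solves the former without changing the variable count (alternatively one may apply Theorem~\ref{thm:mainresult} to $\Gamma = \{\RD{D}\} \cup 2^D$, which is finite, ultraconservative, and NP-complete since $\RD{D}$ is an $\RB$-extension). All the work is in the reverse inequality, and I would obtain it from a CV-reduction $\CSP(\{\RD{D}\} \cup 2^D) \reduces \CSP(\{\RD{D}\})$.

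First, constant relations are free. The only unary polymorphism of $\RD{D}$ is the identity: if $e \in \pol(\RD{D})$ is unary, then $e(t_1) \in \{t_1, t_2, t_3\}$, and since the columns of $\RD{D}$ run through all of $D^3$, the only match consistent with this is $e(t_1) = t_1$ (a match $e(t_1)=t_k$ with $k\neq 1$ would require the $k$-th entry of a column to be a function of its first entry), which forces $e(d) = d$ for every $d \in D$. Hence $\{\RD{D}\}$ is a core, and Theorem~\ref{thm:core} gives $\CSP(\{\RD{D}\} \cup \{\const{d} \mid d \in D\}) \reduces \CSP(\{\RD{D}\})$, so it suffices to CV-reduce $\CSP(\{\RD{D}\} \cup 2^D)$ to $\CSP(\{\RD{D}\} \cup \{\const{d} \mid d \in D\})$. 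The enabling observation is that, for any two of the three tuples of $\RD{D}$, the two-element relation $\{t_r, t_s\}$ lies in $\pcclone{\{\RD{D}\}}$: by Theorem~\ref{theorem:galois} it is enough that every partial polymorphism $f$ of $\RD{D}$ preserves $\{t_r, t_s\}$, and indeed if $u_1, \dots, u_m \in \{t_r, t_s\}$ and $f(u_1, \dots, u_m)$ is defined, then it belongs to $\RD{D}$ and cannot be the remaining tuple $t_{r_0}$ — at a coordinate $j$ the value $f(u_1[j], \dots, u_m[j])$ is a function of the pair $(t_r[j], t_s[j])$ alone, whereas (because the columns of $\RD{D}$ realise every triple) $t_{r_0}[j]$ is not determined by that pair. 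Consequently, writing $\Gamma' = \{\RD{D}, \{t_1,t_2\}, \{t_1,t_3\}, \{t_2,t_3\}\} \cup \{\const{d} \mid d \in D\}$, we have $\Gamma' \subseteq \pcclone{\{\RD{D}\} \cup \{\const{d} \mid d \in D\}}$, so Theorem~\ref{theorem:cvred} gives $\CSP(\Gamma') \reduces \CSP(\{\RD{D}\} \cup \{\const{d} \mid d \in D\})$, and it remains to CV-reduce $\CSP(\{\RD{D}\} \cup 2^D)$ to $\CSP(\Gamma')$.

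For this last reduction I would use no new variables at all. Preprocess an instance by intersecting, for each variable, all of its unary constraints (aborting with a fixed no-instance if some domain empties out) and deleting variables occurring in no $\RD{D}$-constraint (trivially satisfiable). Then process each variable $x$ still carrying a nontrivial domain $D_x$ with $2 \le |D_x| \le |D|-1$: it occurs in some constraint $\RD{D}(z_1, \dots, z_N)$ at a position $j$, and, given that constraint, $x \in D_x$ is equivalent to $x \in D_x \cap \{t_1[j], t_2[j], t_3[j]\}$. If this intersection is empty, abort; if it is all of $\{t_1[j], t_2[j], t_3[j]\}$, the restriction is redundant and is dropped; if it is a singleton, it becomes the corresponding constant constraint; and if it has two elements — which forces column $j$ to have three distinct entries and $D_x$ to exclude the entry $t_{r_0}[j]$ of one tuple — then $\RD{D}(z_1,\dots,z_N) \wedge (x \in D_x)$ is equivalent to the single constraint $\{t_{r_1}, t_{r_2}\}(z_1, \dots, z_N)$, with $\{r_1, r_2\} = \{1,2,3\} \setminus \{r_0\}$, which we substitute. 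Each step removes one unary constraint and adds only $O(1)$ $\Gamma'$-constraints, so the procedure is polynomial and terminates; residual domain restrictions on variables inside a newly created $\{t_{r_1}, t_{r_2}\}$-constraint are treated the same way and can only turn vacuous, fix a constant, or reveal unsatisfiability. I expect the main obstacle to be precisely the realisation that makes this work: the naive route — replacing each unary constraint by a primitive positive definition over $\{\RD{D}\}$, which exists since $\pol(\RD{D})$ consists only of projections — spends a fresh existential variable per occurrence and yields merely an LV-reduction, whereas the argument above absorbs a two-element domain restriction on $x$ into the \emph{choice among $t_1, t_2, t_3$} of an $\RD{D}$-constraint $x$ already sits in, at no variable cost. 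Composing the three CV-reductions gives $\CSP(\{\RD{D}\} \cup 2^D) \reduces \CSP(\{\RD{D}\})$, hence ${\sf T}(\{\RD{D}\} \cup 2^D) \le {\sf T}(\{\RD{D}\})$, and together with the easy direction this proves the theorem.
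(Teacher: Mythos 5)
Your proof is correct, and its overall skeleton agrees with the paper's: the inequality ${\sf T}(\{\RD{D}\}) \leq {\sf T}(\{\RD{D}\} \cup 2^D)$ is immediate, constant relations come for free because $\{\RD{D}\}$ is a core (Theorem~\ref{thm:core}), and the substance lies in absorbing a unary constraint $E(x)$ into an $\RD{D}$-constraint containing $x$, with the only nontrivial case being $|E \cap \pro_i(\RD{D})| = 2$. Where you genuinely diverge is in how that case is discharged. The paper stays entirely inside $\CSP(\RD{D})$ (plus constants) and uses a combinatorial gadget: since every triple over $D$ occurs as a column of $\RD{D}$, there is a $2$-choice position $j$ in the \emph{same} constraint that agrees with position $i$ on the two surviving tuples and disagrees on the excluded one, so identifying $x$ with $x_j$ enforces $x \in E$ while removing a variable. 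You instead show that each two-tuple subrelation $\{t_r,t_s\}$ of $\RD{D}$ is preserved by every partial polymorphism of $\RD{D}$ --- your argument is sound: the image of tuples drawn from $\{t_r,t_s\}$ is, coordinate-wise, a function of the pair $(t_r[j],t_s[j])$, which (because the columns realise all of $D^3$) never determines $t_{r_0}[j]$ --- hence $\{t_r,t_s\} \in \pcclone{\{\RD{D}\}}$ by Theorem~\ref{theorem:galois}, and you then route the reduction through the enlarged language $\Gamma'$ via Theorem~\ref{theorem:cvred}, replacing the whole constraint by $\{t_{r_1},t_{r_2}\}(z_1,\dots,z_N)$; your handling of residual unary restrictions inside the new constraint is also correct. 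Both routes yield CV-reductions. The paper's gadget is more elementary and self-contained, needing no auxiliary language; your version leans on the Galois-connection machinery already developed in Section~\ref{section:extension_complexity}, which makes the correctness of the absorption step essentially automatic once the qfpp-definability of the two-tuple subrelations is established, at the cost of a detour through an intermediate constraint language.
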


Thus, no NP-complete CSP over
an ultraconservative constraint language over $D$ is solvable strictly
faster than $\CSP(\RD{D})$, and, in particular, ${\sf
  T}(\{\RD{D'}\}) \leq {\sf T}(\{\RD{D}\})$ whenever $D' \supseteq D$. This raises the
question of whether ${\sf T}(\RD{D}) = {\sf T}(\RD{D'})$ for all $D,D'
\supseteq \{0,1\}$, or if it is possible to find $D$ and $D'$ such
that ${\sf T}(\{\RD{D'}\}) < {\sf T}(\{\RD{D}\})$. As the following
theorem shows, this is indeed the case, unless ${\sf T}(\{\RD{D}\}) =
0$ for every finite $D$ and the ETH fails.

\begin{restatable}{theorem}{thmincreasing} \label{thm:increasing}
$\inf \{{\sf T}(\{\RD{D}\}) \mid \mbox{$D$ finite and $|D| \geq 2$}\}=0$.  
\end{restatable}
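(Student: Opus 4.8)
The plan is to show that, as $|D|$ grows, the problem $\CSP(\{\RD{D}\})$ can be solved ever more efficiently, with the exponent constant tending to $0$. The key insight is that $\CSP(\{\RD{D}\})$ is a ``$1$-in-$3$-like'' problem over a large domain: each constraint has exactly three satisfying assignments, and those assignments are highly structured — the three tuples $t_1, t_2, t_3$ project, on every coordinate, onto \emph{all} of $D^3$ in the sense of Definition~\ref{definition:easiest}. First I would establish a direct combinatorial characterization of satisfiability: an instance $(V,C)$ of $\CSP(\{\RD{D}\})$ is satisfiable if and only if there is a way to pick, for each constraint, one of its three tuples so that the induced partial assignments on the variables are globally consistent. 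Because the relation is saturated (indeed $\RD{D}$ is the canonical saturated $\RB$-extension) one can track, for each variable, which of the three ``rows'' it is forced into, so a constraint essentially enforces a relation among at most three $3$-valued ``choice indicators.'' This lets me reformulate $\CSP(\{\RD{D}\})$ as a problem whose intrinsic branching is over a constant-size set $\{1,2,3\}$ regardless of $|D|$.

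The main technical step is a \emph{variable-compression} reduction: I would show that an instance of $\CSP(\{\RD{D}\})$ on $n$ variables can be re-encoded as an instance of $\CSP(\{\RD{D'}\})$ on roughly $n/k$ variables, where $|D'|$ is polynomially (or exponentially) larger than $|D|$ and $k$ grows with this blow-up. The idea is to group variables into blocks of size $k$ and let a single variable of the larger-domain instance encode the joint value of a block: since $\RD{D'}$'s columns enumerate \emph{all} ternary tuples over $D'$, it is flexible enough to simulate the conjunction of the original block-of-$k$ constraints via a qfpp-/pp-style encoding (combined with the unary relations $2^{D'}$ available ``for free'' by Theorem~\ref{thm:mainresult-ext}). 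Running this the other way, one gets that $\CSP(\{\RD{D}\})$ on $n$ variables reduces to $\CSP(\{\RD{D'}\})$ on about $n/k$ variables; if $\CSP(\{\RD{D'}\})$ were solved in $2^{c m}$ time then $\CSP(\{\RD{D}\})$ runs in $2^{(c/k) n}$ time, so ${\sf T}(\{\RD{D}\}) \le {\sf T}(\{\RD{D'}\})/k$. Iterating, ${\sf T}(\{\RD{D}\})$ for the $2$-element $D$ is bounded by ${\sf T}(\{\RD{D'}\})/k$ for arbitrarily large $k$, which — together with the trivial bound ${\sf T}(\Gamma) \le \log_2 |D|$ or ${\sf T}(\{\RD{D'}\}) \le$ (a fixed constant, via brute force being $|D'|^{n'}$ but $n'$ shrinking) — forces the infimum over all finite $D$ to be $0$.

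An alternative, cleaner route that I would actually pursue: directly design, for each $k$, a constraint language over a domain $D$ of size $2^{\Oh(k)}$ with an explicit $O(2^{n/k})$-time algorithm for $\CSP(\{\RD{D}\})$, by the ``group $k$ variables into one super-variable'' trick — a super-variable ranges over $D$, and a brute-force search over all assignments to the $n/k$ super-variables takes $|D|^{n/k} = 2^{\Oh(k) \cdot n/k} = 2^{\Oh(n)}$; the point is to make the hidden constant in the exponent $1/k$, which requires that each original $\RD{\{0,1\}}$-constraint, once its variables are absorbed into super-variables, can be \emph{checked} in polynomial time against the super-variable values — this is immediate since each constraint still only touches a constant number of super-variables and membership in $\RD{D}$ is decidable in polynomial time. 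Making the exponent exactly $n/k$ rather than $O(n/k)$ is what forces the domain to be exactly large enough to pack $k$ Boolean coordinates, i.e.\ $|D| \ge 2^{k}$-ish, and then ${\sf T}(\{\RD{D}\}) \le (\log_2|D|)/k \to$ something controlled; combined with ${\sf T}(\{\RD{D'}\}) \le {\sf T}(\{\RD{D}\})$ for $D' \supseteq D$ from Theorem~\ref{thm:mainresult-ext}, one concludes $\inf_D {\sf T}(\{\RD{D}\}) = 0$.

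\medskip

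\noindent\textbf{Main obstacle.} The delicate point is the compression reduction: one must encode $k$ \emph{correlated} original variables as a single new variable \emph{while keeping the variable count down to $n/k + O(1)$} and \emph{without blowing up the number of tuples} beyond three per constraint — i.e.\ the encoding must stay within $\RB$-extensions (or be absorbed by the unary relations from $2^{D'}$). Ensuring that the simulation of a block of original constraints can be expressed as constraints over $\RD{D'}$ with only a constant number of super-variables per constraint (so that the reduction is genuinely a linear-variable reduction with the right parameter) is where the saturation property of $\RD{D'}$ and the freedom afforded by $2^{D'}$ must be used carefully; getting the bookkeeping right so that the parameter is $1/k$ rather than some worse function of $k$ is the crux of the argument.
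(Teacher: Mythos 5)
Your proposal does not establish the theorem, and the core difficulty is that both of your routes aim at the wrong quantity. Since ${\sf T}(\{\RD{D'}\}) \leq {\sf T}(\{\RD{D}\})$ whenever $D' \supseteq D$, the statement is equivalent to showing that ${\sf T}(\{\RD{D}\})$ becomes arbitrarily small as $|D|$ grows, i.e.\ you need an algorithm for $\CSP(\RD{D})$ itself whose exponent, measured per $D$-valued variable of the given instance, tends to $0$ with $|D|$. Your compression reduction instead bounds the complexity of a \emph{small}-domain problem in terms of a large-domain one. Worse, if a reduction from $\CSP(\RD{D})$ on $n$ variables to $\CSP(\RD{D'})$ on $n/k$ variables existed with $D \subseteq D'$, then combining your inequality ${\sf T}(\{\RD{D}\}) \leq {\sf T}(\{\RD{D'}\})/k$ with the monotonicity ${\sf T}(\{\RD{D'}\}) \leq {\sf T}(\{\RD{D}\})$ would force ${\sf T}(\{\RD{D}\})=0$ outright, contradicting Theorem~\ref{thm:subexp} under the ETH; so such a compression cannot exist (conditionally), which is a strong sign the route is blocked. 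Your second route fares no better arithmetically: packing $k$ coordinates into a super-variable over a domain of size $2^{\Oh(k)}$ and brute-forcing gives $(2^{\Oh(k)})^{n/k} = 2^{\Oh(n)}$, i.e.\ a constant in the exponent rather than $1/k$; and read as an algorithm for the large-domain problem on $m=n/k$ variables it runs in time $2^{\Oh(k)m}$, yielding only the trivial bound ${\sf T}(\{\RD{D}\}) \leq \log_2|D|$, which \emph{grows} with $|D|$.

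The idea you are missing is quantitative and local to a single constraint. Writing $k=|D_k|$, the relation $\RD{D_k}$ has arity $k^3$ but only three satisfying tuples, and each row of its matrix representation contains each value of $D_k$ in exactly $k^2$ positions; hence in any satisfiable constraint no variable can occupy more than $k^2$ positions, so every constraint involves at least $k$ distinct variables (and, after handling degenerate cases, at least $\lfloor k(k-1)(k-2)/k^2\rfloor$ of them sit in $3$-choice positions). Branching on which of the three tuples a constraint uses therefore fixes $\Omega(k)$ variables per branch, and the paper turns this into the recursion $T(n)=3T(n-\lfloor k(k-1)(k-2)/k^2\rfloor)+poly(||I||)$, giving an algorithm for $\CSP(\RD{D_k})$ running in time $O(3^{n\cdot k^2/(k(k-1)(k-2))}\cdot poly(||I||))$, whence ${\sf T}(\{\RD{D_k}\}) \leq \log_2(3)\cdot k^2/(k(k-1)(k-2)) \rightarrow 0$. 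Your first paragraph gestures at ``branching over $\{1,2,3\}$ per constraint'' but never connects the number of branches to the number of variables eliminated per branch, which is the entire content of the proof.
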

\begin{proof}
Let $D_k=\{0,\dots,k-1\}$, $k \geq 5$. 
We will analyse a simple algorithm for $\CSP(\RD{D_k})$.
Let $I=(V,C)$
be an arbitrary instance of $\CSP(\RD{D_k})$.
Extend the instance with variables $Z=\{z_0,\dots,z_{k-1}\}$ and
the constraints $\const{i}(z_i)$, $0 \leq i \leq k-1$.
Arbitrarily choose a constraint $c=\RD{D_k}(x_1,\dots,x_{k^3})$ and
let $X = \{x_1,\dots,x_{k^3}\}$.
It is straightforward to verify that if a variable $x$ appears in $k^2+1$ or
more positions, then $c$ cannot be satisfied. 
Thus, $|X| \geq k$.
If $X \cap Z = \emptyset$, then
we branch
on the three tuples in $\RD{D_k}$ and in each branch at least $k$ variables
in $V \setminus Z$ 
will be given fixed values. If a variable, say $x_i$, is given
the fixed value $d$, then we identify $x_i$ with $z_d$. Thus, at
least $k$ variables in $V \setminus Z$ are removed.
Assume to the contrary that
$X \cap Z \neq \emptyset$.
If
a variable $z \in Z$
occurs in a 3-choice position, then
the variables in $X \setminus Z$ 
can be assigned
fixed values and no branching is needed.
If no variable $z \in Z$ occurs in a 3-choice position, then
there are $k(k-1)(k-2)$
3-choice positions in $\RD{D_k}$ and they
are all covered by variables in $V \setminus Z$.
Thus, we perform three branches based on the tuples in $\RD{D_k}$.
Recall that a variable can occur in at most $k^2$ positions in
the constraint $c$ since $c$ is otherwise not satisfiable.
This implies that
at least $\lfloor \frac{k(k-1)(k-2)}{k^2} \rfloor \geq 1$
variables in $V \setminus Z$ are given fixed values (and are removed from
$V \setminus Z$) in each branch.
When there are no $\RD{D_k}$ constraints left, we check whether the
remaining set of unary constraints are satisfiable or not. It is
straightforward to perform this test in polynomial time.
A recursive equation that gives an upper bound on the time complexity of this
algorithm
is thus $T(1)=poly(||I||),T(n)=3T(n-\lfloor \frac{k(k-1)(k-2)}{k^2} \rfloor)+poly(||I||))$ (where $n$ denotes
the number of variables and $||I||$ the number of bits required to
represent $I$)
so $T(n) \in O(3^{n \cdot \frac{k^2}{k(k-1)(k-2)}
} \cdot poly(||I||))$.
The function $\frac{k^2}{k(k-1)(k-2)}$ obviously tends
to 0 with increasing $k$ so the infimum of the set $\{{\sf
  T}(\{\RD{D}\}) \mid \mbox{$D$ is finite and $|D| \geq 2$}\}$ is equal
to 0.
\end{proof}

\section{Concluding Remarks and Future Research} 
In this paper we have studied the time complexity of NP-complete
CSPs. Assuming the algebraic CSP dichotomy conjecture, we have ruled
out subexponential time algorithms for NP-complete, finite-domain
CSPs, unless the ETH is false. This proof also extends to
degree-bounded CSPs and many classes of CSPs over infinite domains. We
then proceeded to study the time complexity of CSPs over
ultraconservative constraint languages, and proved that no such
NP-complete CSP is solvable strictly faster than ${\sf
  T}(\{\RD{D}\})$. These results raise several directions for
future research.

\smallskip

\noindent
{\bf Structurally restricted CSPs and the ETH.}
Theorem~\ref{thm:subexp} shows that the algebraic approach is viable
for analysing the existence of subexponential algorithms for certain
structurally restricted $\CSP(\Gamma)$ problems. An interesting
continuation would be to try to determine which of the structurally
restricted (but not constraint language restricted) CSPs investigated
by De Haan et al.~\cite{szeider2015} could be used to prove similar
results. For example, is it the case that $\CSP(\Gamma)$ is not
solvable in subexponential time whenever $\CSP(\Gamma)$ is NP-complete
and the primal treewidth of an instance is bounded by $\Omega(n)$,
unless the ETH fails?

\smallskip

\noindent
{\bf The CSP dichotomy conjecture.} Several independent solutions to the algebraic CSP dichotomy conjecture have
recently been
announced~\cite{bulatov2017,Rafiey:etal:2017,zhuk2017}. If any of these proposed proofs is correct, it
is tempting to extend Theorem~\ref{thm:mainresult} to constraint languages that are
not necessarily ultraconservative or conservative. As a starting
point, one could try to strengthen the results in Section~\ref{sec:extensions}, in
order to prove that $\pcclone{\Gamma}$ contains an $\RB$-extension
whenever $\CSP(\Gamma)$ is NP-complete and $\Gamma$ is conservative
(but not ultraconservative).

\section*{Acknowledgements}
We thank Hannes Uppman for several helpful discussions on the topic of
this paper.
The second author has received funding from the DFG-funded
project ``Homogene Strukturen, Bedingungserf\"ullungsprobleme, und
topologische Klone'' (Project number 622397).
The third author is partially supported by the {\em National Graduate
School in Computer Science} (CUGS), Sweden.

\bibliography{references}

\begin{thebibliography}{10}

\bibitem{barto2014}
L.~Barto.
\newblock Constraint satisfaction problem and universal algebra.
\newblock {\em ACM SIGLOG News}, 1(2):14--24, October 2014.

\bibitem{barto2016}
L.~Barto and M.~Pinsker.
\newblock The algebraic dichotomy conjecture for infinite domain constraint
  satisfaction problems.
\newblock In {\em Proceedings of the 31st Annual ACM/IEEE Symposium on Logic in
  Computer Science (LICS 2016)}, pages 615--622, New York, NY, USA, 2016. ACM.

\bibitem{herman2015}
M.~Behrisch, M.~Hermann, S.~Mengel, and G.~Salzer.
\newblock Give me another one!
\newblock In {\em Proceedings of the 26th International Symposium on Algorithms
  and Computation ({ISAAC-2015})}, pages 664--676, 2015.

\bibitem{herman2016}
M.~Behrisch, M.~Hermann, S.~Mengel, and G.~Salzer.
\newblock As close as it gets.
\newblock In {\em Proceedings of the 10th International Workshop on Algorithms
  and Computation ({WALCOM}-2016)}, pages 222--235, 2016.

\bibitem{Bodirsky:habil}
M.~Bodirsky.
\newblock Complexity classification in infinite-domain constraint satisfaction.
\newblock M\'emoire d'habilitation \`a diriger des recherches, Universit\'{e}
  Diderot -- Paris 7. Available at arXiv:1201.0856, 2012.

\bibitem{jonsson2016b}
M.~Bodirsky, P.~Jonsson, and T.~V. Pham.
\newblock The complexity of phylogeny constraint satisfaction.
\newblock In {\em 33rd Symposium on Theoretical Aspects of Computer Science,
  {STACS} 2016, February 17-20, 2016, Orl{\'{e}}ans, France}, pages
  20:1--20:13, 2016.

\bibitem{bodirskykara2010}
M.~Bodirsky and J.~K\'{a}ra.
\newblock The complexity of temporal constraint satisfaction problems.
\newblock {\em Journal of the ACM}, 57(2):9:1--9:41, 2010.

\bibitem{bodirsky2015}
M.~Bodirsky and M.~Pinsker.
\newblock Schaefer's theorem for graphs.
\newblock {\em J. ACM}, 62(3):19:1--19:52, June 2015.

\bibitem{BKKR69i}
V.~G. Bodnarchuk, L.~A. Kaluzhnin, V.~N. Kotov, and B.~A. Romov.
\newblock Galois theory for {P}ost algebras. {I}.
\newblock {\em Cybernetics}, 5:243--252, 1969.

\bibitem{BKKR69ii}
V.~G. Bodnarchuk, L.~A. Kaluzhnin, V.~N. Kotov, and B.~A. Romov.
\newblock Galois theory for {P}ost algebras. {I}{I}.
\newblock {\em Cybernetics}, 5:531--539, 1969.

\bibitem{bohler2002}
E.~B{\"o}hler, E.~Hemaspaandra, S.~Reith, and H.~Vollmer.
\newblock Equivalence and isomorphism for boolean constraint satisfaction.
\newblock In {\em In Proceedings of the 16th International Workshop on Computer
  Science Logic (CSL-2002)}, pages 412--426, Berlin, Heidelberg, 2002. Springer
  Berlin Heidelberg.

\bibitem{bulatov2011}
A.~Bulatov.
\newblock Complexity of conservative constraint satisfaction problems.
\newblock {\em ACM Transactions on Computational Logic}, 12(4):24:1--24:66,
  July 2011.

\bibitem{bulatov2017}
A.~Bulatov.
\newblock A dichotomy theorem for nonuniform csps.
\newblock {\em CoRR}, abs/1703.03021, 2017.

\bibitem{bulatov2012}
A.~Bulatov and A.~Hedayaty.
\newblock Counting problems and clones of functions.
\newblock {\em Multiple-Valued Logic and Soft Computing}, 18(2):117--138, 2012.

\bibitem{bulatov2005}
A.~Bulatov, P.~Jeavons, and A.~Krokhin.
\newblock Classifying the complexity of constraints using finite algebras.
\newblock {\em {SIAM} Journal on Computing}, 34(3):720--742, March 2005.

\bibitem{creignou2014}
N.~Creignou, U.~Egly, and J.~Schmidt.
\newblock Complexity classifications for logic-based argumentation.
\newblock {\em ACM Transactions on Computational Logic (TOCL)},
  15(3):19:1--19:20, 2014.

\bibitem{szeider2015}
R.~de~Haan, I.~A. Kanj, and S.~Szeider.
\newblock On the subexponential-time complexity of {CSP}.
\newblock {\em Journal of Artificial Intelligence Research {(JAIR)}},
  52:203--234, 2015.

\bibitem{FV98}
T.~Feder and M.Y. Vardi.
\newblock The computational structure of monotone monadic {SNP} and constraint
  satisfaction: A study through datalog and group theory.
\newblock {\em SIAM Journal on Computing}, 28(1):57--104, 1998.

\bibitem{Gei68}
D.~Geiger.
\newblock Closed systems of functions and predicates.
\newblock {\em Pacific Journal of Mathematics}, 27(1):95--100, 1968.

\bibitem{grohe2006}
M.~Grohe.
\newblock The structure of tractable constraint satisfaction problems.
\newblock In {\em Proceedings of the 31st International Symposium on
  Mathematical Foundations of Computer Science (MFCS 2006)}, pages 58--72,
  Berlin, Heidelberg, 2006. Springer Berlin Heidelberg.

\bibitem{ham2017}
L.~Ham.
\newblock Gap theorems for robust satisfiability: Boolean {CSPs} and beyond.
\newblock {\em To appear in Theoretical Computer Science}, 2017.

\bibitem{hertli2014}
T.~Hertli.
\newblock 3-{SAT} faster and simpler - unique-{SAT} bounds for {PPSZ} hold in
  general.
\newblock {\em SIAM Journal on Computing}, 43(2):718--729, 2014.

\bibitem{impagliazzo2001}
R.~Impagliazzo and R.~Paturi.
\newblock On the complexity of k-{SAT}.
\newblock {\em Journal of Computer and System Sciences}, 62(2):367 -- 375,
  2001.

\bibitem{impagliazzo98}
R.~Impagliazzo, R.~Paturi, and F.~Zane.
\newblock Which problems have strongly exponential complexity?
\newblock {\em Journal of Computer and System Sciences}, 63:512--530, 2001.

\bibitem{Jeavons1998}
P.~Jeavons.
\newblock On the algebraic structure of combinatorial problems.
\newblock {\em Theoretical Computer Science}, 200:185--204, 1998.

\bibitem{jeavons1997}
P.~Jeavons, D.~Cohen, and M.~Gyssens.
\newblock Closure properties of constraints.
\newblock {\em Journal of the ACM}, 44(4):527--548, July 1997.

\bibitem{jonsson2017}
P.~Jonsson, V.~Lagerkvist, G.~Nordh, and B.~Zanuttini.
\newblock Strong partial clones and the time complexity of {SAT} problems.
\newblock {\em Journal of Computer and System Sciences}, 84:52 -- 78, 2017.

\bibitem{lagerkvist2015}
V.~Lagerkvist, M.~Wahlstr{\"{o}}m, and B.~Zanuttini.
\newblock Bounded bases of strong partial clones.
\newblock In {\em Proceedings of the 45th International Symposium on
  Multiple-Valued Logic (ISMVL-2015)}, pages 189--194, 2015.

\bibitem{pos41}
E.~Post.
\newblock The two-valued iterative systems of mathematical logic.
\newblock {\em Annals of Mathematical Studies}, 5:1--122, 1941.

\bibitem{Rafiey:etal:2017}
A.~Rafiey, J.~Kinne, and T.~Feder.
\newblock Dichotomy for digraph homomorphism problems.
\newblock {\em CoRR}, abs/1701.02409, 2017.

\bibitem{romov1981}
B.A. Romov.
\newblock The algebras of partial functions and their invariants.
\newblock {\em Cybernetics}, 17(2):157--167, 1981.

\bibitem{DBLP:reference/fai/2}
F.~Rossi, P.~van Beek, and T.~Walsh, editors.
\newblock {\em Handbook of Constraint Programming}, volume~2 of {\em
  Foundations of Artificial Intelligence}.
\newblock Elsevier, 2006.

\bibitem{DBLP:books/daglib/0023820}
S.~J. Russell and P.~Norvig.
\newblock {\em Artificial Intelligence - {A} Modern Approach {(3.} internat.
  ed.)}.
\newblock Pearson Education, 2010.

\bibitem{sch78}
T.~Schaefer.
\newblock The complexity of satisfiability problems.
\newblock In {\em Proceedings of the 10th Annual ACM Symposium on Theory Of
  Computing (STOC-78)}, pages 216--226. ACM Press, 1978.

\bibitem{schnoor2008a}
H.~Schnoor and I.~Schnoor.
\newblock Partial polymorphisms and constraint satisfaction problems.
\newblock In N.~Creignou, P.~G. Kolaitis, and H.~Vollmer, editors, {\em
  Complexity of Constraints}, volume 5250 of {\em Lecture Notes in Computer
  Science}, pages 229--254. Springer Berlin Heidelberg, 2008.

\bibitem{wahlstrom2007}
M.~Wahlstr{\"o}m.
\newblock {\em Algorithms, measures and upper bounds for satisfiability and
  related problems}.
\newblock PhD thesis, Link\"oping University, TCSLAB - Theoretical Computer
  Science Laboratory, The Institute of Technology, 2007.

\bibitem{zhuk2017}
D.~Zhuk.
\newblock The proof of csp dichotomy conjecture.
\newblock {\em CoRR}, abs/1704.01914, 2017.

\end{thebibliography}
\bibliographystyle{plain}

\huge
\noindent
{\bf Appendix}
\normalsize

\appendix
\section{Additional Proofs for Section~\ref{section:eth}}
\label{a0}

\thmsubexp*
\begin{proof}
Due to the assumption that $\Gamma$
  pp-interprets $\sat{3}$, $\Gamma$ can pp-interpret any Boolean
  $\Delta$, as was pointed out in Section~\ref{sec:algcspconj}. In
  particular, $\Gamma$ can pp-interpret the constraint language
  $\{\Rddd\}$ from Jonsson et al.~\cite{jonsson2017}, where $\Rddd =
  \pro_{1,\ldots,6}(\Rdddp)$.
  It is known that SAT$(\Rddd)$-2 is NP-complete and that if it is solvable in subexponential time, then
  the ETH is false~\cite{jonsson2017}.
  Hence, we will
  prove the theorem by giving an LV-reduction from $\SAT(\Rddd)$-2 to
  $\CSP(\Gamma)$, respectively to $\CSP(\Gamma)$-$B$ for some $B > 0$.

  Let $F \subseteq D^{d}$
  and $f : F \mapsto \B$ denote the parameters in the
  pp-interpretation of $\{\Rddd\}$. Note in particular that $d \in \mathbb{N}$ is a
  fixed constant. Let
    \[
    \begin{aligned}
f^{-1}(\Rddd)&(x_{1,1}, \ldots, x_{1,d}, \ldots, x_{6,1}, \ldots,
    x_{6, d}) \equiv \\ & \exists y_1, \ldots, y_{k_1} . \varphi_1(x_{1,1}, \ldots, x_{1,d}, \ldots, x_{6,1}, \ldots,
    x_{6, d}, y_1, \ldots, y_{k_1})
    \end{aligned}
\]
    and
    \[F(x_1, \ldots, x_d) \equiv \exists y_1, \ldots, y_{k_2} . \varphi_2(x_1, \ldots, x_d, z_1,
    \ldots, z_{k_2})\]
    denote efpp-definitions of $f^{-1}(\Rddd)$ and $F$ over
    $\Gamma$ if $\eq_D$ is efpp-definable over $\Gamma$, and otherwise 
    pp-definitions of $f^{-1}(\Rddd)$ and $F$ over $\Gamma$. 
    Let $L$ denote the maximum degree of any variable
    occurring in these pp-definitions, and note that $L$ is a fixed
    constant depending only on $\Gamma$.

    Let $I = (V,C)$ be an instance of $\SAT(\{\Rddd\})$-2. Since each
    variable may occur in at most 2 constraints it follows that $|C|
    \leq 2|V|$. For each
    variable $x_i$ introduce $d$ fresh variables $x_{i,1}, \ldots,
    x_{i,d}$, $k_2$ fresh variables $z_{i,1}, \ldots, z_{i,k_2}$, and
    introduce the constraint \[\varphi_2(x_{i,1}, \ldots, x_{i,d},
    z_{i,1}, \ldots, z_{i,k_2}).\] For each constraint $C_i = \Rddd(x_i, y_i, z_i,
    x'_i,y'_i,z'_i)$ introduce $k_1$ fresh variables $w_{i,1}, \ldots,
    w_{i,k_1}$ and replace $C_i$ by 
    \[\varphi_1(x_{i,1}, \ldots, x_{i,d},
    y_{i,1}, \ldots, y_{i,d}, z_{i,1}, \ldots, z_{i,d}, x'_{i,1}, \ldots,
    x'_{i,d}, y'_{i,1}, \ldots, y'_{i,d}, z'_{i,1}, z'_{i,d}, w_{i,1}, \ldots,
    w_{i,k_1}).\]

    If $\Gamma$ cannot efpp-define $\eq_D$ then we in addition
    identify any two variables occurring in equality constraints. Let $I' = (V',C')$ denote the resulting instance of
    $\CSP(\Gamma)$. Clearly, $I'$ can be constructed in polynomial time.
    We begin by proving that $I'$ has a solution if and only if $I$ has
    a solution. Let $s':V' \rightarrow D$ be a solution to $I'$.
    Recall that every variable $x_i$ in $V$ corresponds to a 'block'
    of variables $x_{i,1},\dots,x_{i,d}$ in $V'$.
    Now, consider a subset $X$ of constraints corresponding to 
    \[\varphi_1(x_{i,1}, \ldots, x_{i,d},
    y_{i,1}, \ldots, y_{i,d}, z_{i,1}, \ldots, z_{i,d}, x'_{i,1}, \ldots,
    x'_{i,d}, y'_{i,1}, \ldots, y'_{i,d}, z'_{i,1}, z'_{i,d}, w_{i,1}, \ldots,
    w_{i,k_1}).\]
    Consider one block of variables $x_{i,1},\dots,x_{i,d}$. We know that
    $(s'(x_{i,1}),\dots,s'(x_{i,d})) \in F$ due to the constraint 
    $F(x_{i,1},\dots,x_{i,d})$ and that $s'$ satisfies $X$. Since $X$ 
    and the block of variables are arbitrarily chosen, we conclude that 
    the function 
    $s:V \rightarrow \B$ defined by
    \[s(x)=f(s'(x_1),\dots,s'(x_d))\]
    is a solution to $I$.

    Assume instead that $s:V \rightarrow \B$ is a solution to $I$. 
    Arbitrarily choose $t_0,t_1 \in F$
    such that $f(t_0)=0$ and $f(t_1)=1$. For each variable $x_i \in V$,
    let $x_{i,1},\dots,x_{i,d}$ denote the corresponding block of variables
    in $V'$, and let $\hat{V}$ denote the set of all these variables.
    Define the function $\hat{s}:\hat{V} \rightarrow F$ such that
    $\hat{s}(x_{i,j})=t_0[j]$ if $s(x_i)=0$ and $\hat{s}(x_{i,j})=t_1[j]$
    otherwise. The function $\hat{s}$ satisifes every constraint
    $F(x_{i,1},\dots,x_{i,d})$ by definition.
    Consider a subset $X$ of constraints corresponding to 
    \[\varphi_1(x_{i,1}, \ldots, x_{i,d},
    y_{i,1}, \ldots, y_{i,d}, z_{i,1}, \ldots, z_{i,d}, x'_{i,1}, \ldots,
    x'_{i,d}, y'_{i,1}, \ldots, y'_{i,d}, z'_{i,1}, z'_{i,d}, w_{i,1}, \ldots,
    w_{i,k_1}).\]
    Recall that $\varphi_1$ is a pp-definition of 
    $f^{-1}(\Rddd)$. Thus, the variables $w_{i,1},\dots,w_{i,k_1}$ can be assigned
    values that in combination with the values provided by $\hat{s}$ satisfies
    $\varphi_1$ and, consequently, $X$. This implies that there is a solution
    to $I'$.

\smallskip

We continue by analysing this reduction.
First, observe that if $\Gamma$ can efpp-define $\eq_D$ then the maximum degree of any
    variable is $3L$. This implies that $I'$ is in fact an instance of $\CSP(\Gamma)$-$3L$.
    Second, note that $|C| \leq 2|V|$, and that we for
    every constraint in $C$ introduce $k_1$ fresh variables. This implies
    that $|V'| \leq |V|d + 2|V|k_1 + k_2$, and, since $k_1$, $k_2$ and $d$ are fixed
    constants, there exists a constant $K$ such that $|V'| = K|V| +
    O(1)$. Since this reduction is an LV-reduction from $\SAT(\Rddd)$-2 to
    $\CSP(\Gamma)$-$3L$ (or to $\CSP(\Gamma)$ if $\Gamma$ cannot
    efpp-define $\eq_D$), it follows that $\SAT(\Rddd)$-2 is solvable in
    subexponential time if $\CSP(\Gamma)$-$3L$ (or $\CSP(\Gamma)$) is solvable in subexponential
    time.
\end{proof}
\section{Additional Proofs for Section~\ref{section:easiest_csp}}
\label{a1}

We will need the following lemma before we can present the proof for Lemma~\ref{lem:easiest_ultra}.

\begin{lemma1} \label{lem:2tuples}
  Let $\Gamma$ be an ultraconservative language over a finite domain
  $D$ and let $R \in
  \cclone{\Gamma}$ be an $n$-ary relation such that $|R| = 2$. Then
  there exists $R' \in \pcclone{\Gamma}$ such that (1) $|R'| = 2$ and
  (2) $\pro_{1, \ldots, n}(R') = R$.
\end{lemma1}

\begin{proof}
  Let $R(x_1, \ldots, x_n) \equiv \exists y_1, y_2, \ldots, y_m . \varphi(x_1,
  \ldots, x_n, y_1, y_2, \ldots, y_m)$ denote a pp-definition of $R$ over
  $\Gamma$, and let $R = \{t_1, t_2\}$. We will show that it is possible to remove the
  existentially quantified arguments $y_1, y_2, \ldots, y_m$ in this
  pp-definition by gradually adding new arguments to $R$. First consider the relation $R_1(x_1, \ldots, x_n, y_1) \equiv
  \exists y_2 \ldots, y_m . \varphi(x_1,
  \ldots, x_n, y_1, y_2, \ldots, y_m)$. If $|R_1| = 2$ then we move on with
  the remaining arguments, so instead assume that $|R_1| > 2$. Now
  note that each tuple $t \in R_1$ in a natural way can be associated
  with either $t_1 \in R_1$ or $t_2 \in R_2$, depending on whether $t
  = t_1^{\frown}t'$ or $t = t_2^{\frown}t'$. Hence, let $S_1 = \{t[n+1] \mid t \in R_1, t_1^{\frown}t'
  = t\}$, and $S_2 = \{t[n+1] \mid t \in R_1, t_2^{\frown}t'
  = t\}$. In other words $S_1$ is the set of values taken by $y_1$ in the
  tuples corresponding to $t_1$, and $S_2$ the values taken by $y_1$
  in the tuples corresponding to $t_2$. 
  We consider two cases.

 \smallskip

 \noindent
 {\em Case 1:} $S_1 \cap S_2 = \emptyset$. Arbitrarily choose $d_1 \in S_1$ and
 $d_2 \in S_2$. Construct
  the relation $R'_1(x_1, \ldots, x_n, y_1) \equiv R_1(x_1, \ldots,
  x_n, y_1) \land \{(d_1), (d_2)\}(y_1)$, and note that $\{(d_1),
  (d_2)\} \in \Gamma$ since $\Gamma$ is ultraconservative. We see that
  $R'_1=\{s_1^{\frown}(d_1),s_2^{\frown}(d_2)\}$.

\smallskip

 \noindent
 {\em Case 2: }$S_1 \cap S_2 \neq \emptyset$. Arbitrarily choose $d \in S_1 \cap S_2$
and construct
  the relation $R'_1(x_1, \ldots, x_n, y_1) \equiv R_1(x_1, \ldots,
  x_n, y_1) \land \const{d}(y_1)$. We see that 
  $R'_1=\{s_1^{\frown}(d),s_2^{\frown}(d)\}$. Note that we cannot choose
  elements as in Case 1 since if (for instance) one element is inside $S_1 \cap S_2$
  and one element is outside $S_1 \cap S_2$, then the resulting relation
  will contain three tuples.

\smallskip

  If we repeat this procedure for the
  remaining arguments $y_2, \ldots, y_m$ we will obtain a relation
  $R'$ which is qfpp-definable over $\Gamma$ such that $|R'| = 2$ and $\pro_{1, \ldots, n}(R') = R$.
\end{proof}

\lemeasiestultra*

\begin{proof}
  By Lemma~\ref{lem:ppi} there exists a relation $R \in
  \cclone{\Gamma}$ which is an $\RB$-extension. Let 
  \[R(x_1, \ldots, x_n) \equiv \exists y_1, y_2, \ldots,
  y_m . \varphi(x_1, \ldots, x_n, y_1, \ldots, y_m)\] denote its
  pp-definition over $\Gamma$. Using this pp-definition we will show
  that $\Gamma$ can qfpp-define an $\RB$-extension by gradually
  removing each existentially quantified variable. First consider the relation $R_1(x_1,
  \ldots, x_n, y_1) \equiv \exists y_2, \ldots, y_m . \varphi(x_1,
  \ldots, x_n, y_1, y_2, \ldots, y_m)$. Assume that $|R_1| > 3$, i.e., that $R_1$
  is not an $\RB$-extension. 
  Let $R = \{t_1, t_2, t_3\}$ and for each $1 \leq i \leq 3$ let $S_i = \{t[n+1] \mid t \in R_1, t_i^{\frown} t'
  = t\}$, $1 \leq i \leq 3$. In other words $S_i$ contains the possible values taken by the
  argument $y_1$ in the tuples of $R_1$ corresponding to $t_i \in R$. There
  are now a few cases to consider depending on the sets $S_1, S_2,
  S_3$:

  \begin{enumerate}
  \item
    $|S_1 \cup S_2 \cup S_3| = 1$,
  \item
    $|S_1 \cup S_2 \cup S_3| = 2$, and
  \item
    $|S_1 \cup S_2 \cup S_3| \geq 3$,
  \end{enumerate}
  
  The first case implies that the $(n+1)$th argument of $R_1$ is
  constant and that $R_1$ is already an $\RB$-extension. In the third case,
  first choose $d_1 \in S_1$. If $d_1 \in S_2$ then let $d_2 = s_1$,
  otherwise choose an arbitrary value in $S_2$ distinct from
  $d_1$. Last, if $d_1 \in S_3$ or $d_2 \in S_3$ then let $d_3 = d_1$
  or $d_3 = d_2$; otherwise choose an arbitrary value not occurring in
  $S_1 \cup S_2$. Note that this is possible since we assumed that
  $|S_1 \cup S_2 \cup S_3| \geq 3$, which implies that $S_1 \cup S_2
  \cup S_3$ contains at least three distinct values. Let $E$ be the
  unary relation $\{(d_1), (d_2), (d_3)\}$. It is then easy
  to see (by basically reasoning in the same way as in the proof of
  Lemma~\ref{lem:2tuples}) that 
  $\exists y_2, \ldots, y_m . E(y_1) \land \varphi(x_1,
  \ldots, x_n, y_1, \ldots, y_m)$ defines an $\RB$-extension.

  Now assume that $|S_1 \cup S_2 \cup S_3| = 2$ and let $\{d_1, d_2\}
  = S_1 \cup S_2 \cup S_3$. Up to symmetry, we then have the following
  possible cases:

  \begin{enumerate}
  \item
    $S_1 = S_2 = S_3 = \{d_1, d_2\}$,
  \item
    $S_1 = S_2 = \{d_1, d_2\}$, $S_3 = \{d_1\}$, or
  \item
    $S_1 = \{d_1\}$, $S_2 = \{d_2\}$, $S_3 = \{d_1, d_2\}$.
  \end{enumerate}

  The first two cases are easy to handle in a similar way to the case
  when $|S_1 \cup S_2 \cup S_3| \geq 3$; in both cases, choose the element 
  $d_1$.
  This
  leaves only the case when $S_1 = \{d_1\}$, $S_2 = \{d_2\}$ and that
  $S_3 = \{d_1, d_2\}$. 
  Since $R$ is an $\RB$-extension there
  exists $a,b \in D$, $a \neq b$, and indices $i_1, i_2, i_3$ such that 
  $(t_1[i_1],t_2[i_1], t_3[i_1]) = (b,b,a)$, 
  $(t_1[i_2],t_2[i_2], t_3[i_2]) = (b,a,b)$, and 
  $(t_1[i_3],t_2[i_3], t_3[i_3]) = (a,b,b)$.
  Define the binary relation $F$ such that
  \[F(x, y_1) \equiv \exists x_1, \ldots x_{i_3 - 1}, x_{i_3 + 1},
  \ldots, x_n
  . R_1(x_1, \ldots, x_{i_3 - 1}, x, x_{i_3 + 1}, \ldots, x_n, y_1) \land \const{b}(x_{i_1}).\]
We claim that $F=\{(a,d_1),(b,d_2)\}$. To see this, observe that the constraint $\const{b}(x_{i_1})$
  rules out the tuple $t_3$. This implies that if variable $x_{i_3}$ has value
  $a$, then the variable $y_1$ must have value $d_1$ and if the variable $x_{i_3}$
  has value $b$, then the variable $y_1$ must have value $d_2$.

  From this observation and 
  Lemma~\ref{lem:2tuples}, it follows that $\Gamma$ can qfpp-define a
  relation $F'$ such that $|F'| = 2$ and such that $\pro_{1,2}(F') =
  F$. Let $k + 2$ denote the arity of $F'$ and define a relation
  \[R'_1(x_1, \ldots, x_{i_3}, \ldots, x_n, y_1, z_1, \ldots,
  z_k) \equiv R_1(x_1, \ldots, x_{i_3}, \ldots, x_n, y_1) \land F'(x_{i_3}, y_1, z_1,\ldots, z_k).\] 
  We claim that $R'_1$ is an $\RB$-extension. There are three possible
  ways of simultaneously choosing variables $x_{i_1},x_{i_2},x_{i_3}$. Let us
  consider the assignment
  $(x_{i_1},x_{i_2},x_{i_3})=(b,b,a)$. This particular choice gives
  all variables $x_1,\dots,x_n$ fixed values (via the constraint
  $R_1(x_1,\ldots, x_{i_3}, \ldots, x_n, y_1)$).
 Furthermore,  $y_1$ is assigned the
  value $d_2$ (via the constraint $F'(x_{i_3},y_1,z_1,\dots,z_k)$) and the 
  variables $z_1,\dots,z_k$ are given fixed values (since there is only one
tuple in $F'$ that allows $y_1$ to have the value $d_2$).
  Thus, there is only one tuple in $R'_1$ that allows $(x_{i_1},x_{i_2},x_{i_3})=(b,b,a)$. The two other cases can be verified similarly and we conclude
  that $|R'_1|=3$.
 
  Finally, we see that there are $m-1$ existentially quantified variables in
  the definition of $R'_1$ since $F'$ can be qfpp-defined.
  By repeating the procedure outline above for the remaining arguments
  we will obtain an $\RB$-extension which is qfpp-definable over
  $\Gamma$. This concludes the proof.
\end{proof}

Before the proof of Lemma~\ref{lem:2saturated} we will need the following result from
Lagerkvist et al.~\cite[Lemma 2]{lagerkvist2015}, restated in slightly
simpler terminology.

\begin{lemma1}\label{lem:contraction}
  Let $R$ be a relation with $m$ tuples. If $f \notin \ppol(R)$, where
  $f$ has arity $n > m$, there exists $g$ of arity $n' \leq m$ such
  that $g \notin \ppol(R)$ and $g$ can be obtained from $f$ by
  identifying arguments.
\end{lemma1}

For a $k$-ary relation $R$ and tuples
$t_1, \ldots, t_n \in R$ we write $\setcolumns(t_1, \ldots, t_n)$ for
the set $\{(t_1[1], \ldots, t_{n}[1]), \ldots,
(t_{n}[k], \ldots, t_{n}[k])\}$. 

\lemtwosaturated*

\begin{proof}
  Let $R = \{t_1, t_2, t_3\}$ and let $n$ denote the arity of $R$. For
  each $1 \leq i \leq n$ and
  each function $\tau : \{1, 2, 3\} \rightarrow \{1, 2, 3\}$ add a
  fresh argument taking the values $t_{\tau(1)}[i]$, $t_{\tau(2)}[i]$, $t_{\tau(3)}[i]$.
  Let $R'$ be the resulting
  relation and let $R' = \{t'_1, t'_2, t'_3\}$ such that $\pro_{1,
    \ldots, n}(t'_{i}) = t_{i}$. 
  By construction, $R'$ is a saturated $\RB$-extension,
  but it remains to prove that $R' \in \pcclone{R}$. Hence, assume
  with the aim of reaching a contradiction, that $R' \notin
  \pcclone{R}$. Due to the Galois connection in Theorem~\ref{theorem:galois} this
  implies that $\ppol(R) \not \subseteq \ppol(R')$. Hence, there exists a
  partial function $f$ preserving $R$ but which does not preserve $R'$, and
  due to Lemma~\ref{lem:contraction} we may without loss of
  generality assume that $f$ has arity at most 3. We omit
  the cases when $\ar(f) \leq 2$ since they are similar, and therefore
  assume that
  $f(t'_{\rho(1)}, t'_{\rho(2)}, t'_{\rho(3)}) = t' \notin R'$ for a
  permutation $\rho$ on $\{1,2,3\}$. Note that since $\pro_{1, \ldots,
  n}(R') = R$ it must hold that $\setcolumns(t_{\rho(1)},
  t_{\rho(2)}, t_{\rho(3)}) \subseteq \setcolumns(t'_{\rho(1)}, t'_{\rho(2)},
  t'_{\rho(3)})$. Hence, $f(t_{\rho(1)},
  t_{\rho(2)}, t_{\rho(3)})$ must be defined, and furthermore $f(t_{\rho(1)}, t_{\rho(2)},
  t_{\rho(3)}) \in R$ since we assumed that $f$ preserves $R$. Assume without loss of generality that $f(t_{\rho(1)}, t_{\rho(2)},
  t_{\rho(3)}) = t_{\rho(1)}$, i.e., $f$ restricted to the tuples $t_{\rho(1)}, t_{\rho(2)},
  t_{\rho(3)}$ is a projection on the first
  argument. Since $f$ when applied to $t'_{\rho(1)}, t'_{\rho(2)}, t'_{\rho(3)}$
  by assumption is {\em not} a projection, there
  exists at least one index $j \in \{n+1, \ldots, \ar(R')\}$ such that $f(t'_{\rho(1)}[j],
  t'_{\rho(2)}[j], t'_{\rho(3)}[j]) \neq t'_{\rho(1)}[j]$. 
Due to the construction of $R'$, there exists $i \in \{1, \ldots, n\}$
and a function $\tau' : \{1, 2, 3\} \rightarrow \{1, 2, 3\}$ such that \[(t_{\tau'(1)}[i], t_{\tau'(2)}[i],
  t_{\tau'(3)}[i]) = (t'_{\tau(1)}[j], t'_{\tau(2)}[j],
  t'_{\tau(3)}[j]).\] In other words it is possible to order the
  tuples from $R$ in such a way that
  the values enumerated by these tuples in position $i$ is exactly equal to $(t'_{\tau(1)}[j], t'_{\tau(2)}[j],
  t'_{\tau(3)}[j])$, where $f$ is not a projection. It follows that $\setcolumns(t_{\tau'(1)}, t_{\tau'(2)},
  t_{\tau'(3)}) \subseteq \setcolumns(t'_{\tau(1)}, t'_{\tau(2)},
  t'_{\tau(3)}) \subseteq \domain(f)$ (since $R'$ is saturated) and therefore also that $f(t_{\tau'(1)}, t_{\tau'(2)},
  t_{\tau'(3)}) \notin R$ (since $f$ is not a projection on these tuples). This contradicts the assumption that $f \in
  \ppol(R)$, and it must therefore be the case that $R' \in \pcclone{R}$.
\end{proof}

\lemthreechoice*

\begin{proof}
  Let $n$ denote the arity of $R$ and let $\{t_1, t_2, t_3\} = R$.  Let
  $I = (V,C)$ be an instance of $\CSP(R)$. We will create an instance
  $I' = (V', C')$ of $\CSP(R)$ such that if $x \in V'$ is a 3-choice
  variable in a constraint then $x$ does not occur as a
  3-choice variable in any other constraint. Hence, let $x \in V$ be a
  3-choice variable occurring in a constraint $c = R(x_1, \ldots, x_n)$
  in position $i_1$. Assume that $x$ also appears as a 3-choice variable
  in a constraint $c' = R(x'_1, \ldots, x'_n)$, distinct from $c$, in
  position $i_2$. Let $S = (t_1[i_1], t_2[i_1], t_3[i_1])$ and $S' =
  (t_1[i_2], t_2[i_2], t_3[i_2])$.
  
  Assume first that $\pro_{i_1}(R) =
  \pro_{i_2}(R)$. Define the function $\tau$ such that for
  each $1 \leq i \leq 3$, $\tau(S[i]) = j$ if and only if $t_j[i_2] =
  S[i]$ where $1 \leq j \leq 3$. Using the function $\tau$ we then define the permutation
  $\rho : \{1, \ldots, n\} \rightarrow \{1, \ldots, n\}$ such that
  $\rho(i) = j$ if and only if $(t_1[i], t_2[i], t_3[i]) =
  (t_{\tau(1)}[j], t_{\tau(2)}[j], t_{\tau(3)}[j])$.
  This is indeed a
  well-defined permutation over $\{1, \ldots, n\}$ since $R$ is
  saturated. Last, identify each variable
  $x'_{\tau(i)}$ occurring in $c'$ with the variable $x_i$ in $c$, and
  remove the constraint $c'$. 

  Second, assume that $|\pro_{i_1}(R) \cap \pro_{i_2}(R)| = 2$, and
  let $\pro_{i_1}(R) \cap \pro_{i_2}(R) = \{d, d'\}$. Assume without
  loss of generality that $t_1[i_1] = d$, $t_2[i_1] = d'$, and that
  $t_3[i_1] \notin \{d, d'\}$. Choose $i \in
  \{1, \ldots, n\}$, distinct from both $i_1$ and $i_2$, such that
  $t_1[i] = t_1[i_1]$, $t_2[i] = t_2[i_1]$, and $t_3[i] \neq
  t_3[i_1]$. Such an $i$ must exist since $R$ is saturated. Then
  identify $x$ with $x_i$. Define the function $\tau$ such that for
  $1 \leq i \leq 2$, $\tau(S[i]) = j$ if and only if $t_j[i_2] =
  S[i]$. Using the function $\tau$ we then define the permutation
  $\rho : \{1, \ldots, n\} \rightarrow \{1, \ldots, n\}$ such that
  $\rho(i) = j$ if and only if $(t_1[i], t_2[i]) =
  (t_{\tau(1)}[j], t_{\tau(2)}[j])$. Clearly, $\tau$ is a
  well-defined permutation over $\{1, \ldots, n\}$ since $R$ is
  saturated. Last, identify each variable
  $x'_{\tau(i)}$ occurring in $c'$ with the variable $x_i$ in $c$, and
  remove the constraint $c'$. The case when $|\pro_{i_1}(R) \cap
  \pro_{i_2}(R)| = 1$, i.e., when $x$ is assigned the same value in
  any satisfying assignment, is very similar.

  Each time this procedure is performed, at least one constraint is
  removed. Thus, we let $I'$ denote the fixpoint that we will reach in
  at mots $|C|$ iterations.
  It is not difficult to verify that $I$ is satisfiable if and only if
  $I'$ is satisfiable. Furthermore, $|V'| \leq |V|$ and
  the reduction can be computed
  in polynomial time. We have thus showed that the reduction is a
  CV-reduction and therefore proved the lemma.
\end{proof}

\lemtwochoice*

\begin{proof}
Let $n=\ar(R)$, $n'=\ar(R')$, and
$R'=\{t'_1,t'_2,t'_3\}$. By the statement of the lemma we may assume
that $\pro_{1, \ldots, n}(R') = R$, and that $|\pro_{i}(R')| = 2$ for
every $n' < i \leq n$. We will furthermore assume that $\pro_{i}(R')$
for every $n' < i \leq n$
is distinct from $\pro_{j}(R')$ for every $1 \leq j \leq n$.
To simplify the proof we also assume that 
$\pro_{1,\ldots ,8}(R')=\RB$. 
Let $I=(V,C)$ be an instance of $\CSP(R')$.
Let $x$ be a variable that appears in two distinct constraints
$c_1,c_2\in C$. Assume that $x$
occurs at position $n+1 \leq i \leq n'$ in $c_1$ and
at position $1 \leq j \leq n'$ in $c_2$. 
We consider a number of cases
based on the cardinality of $S=\pro_i(R')\cap \pro_j(R')$.
\begin{itemize}
\item $|S|=3$.
This is not possible since $|\pro_i(R')| = 2$.

\item $|S|=2$.
Assume that $S=\{a,b\}$ and $\pro_j(R')=\{a,b,d\}$ (where $b,d$ are not
necessarily distinct). Define $f:\{a,b\}\rightarrow \{0,1\}$ such that
$f(a)=0$ and $f(b)=1$ and $g: \{a,b,d\}\rightarrow \{0,1\}$ such that
$g(a)=0$ and $g(x)=1$ if $x\neq a$.  It follows that there exist
indices $l,m \in \{1,\dots,6\}$ such that $f(t'_r[i])=t'_r[l]$ and
$g(t'_r[j])=t'_r[m]$ when $r \in \{1,2,3\}$. If $b\neq d$, then we
need ensure that $x$ is never assigned $d$ in any satisfying
assignment to the resulting instance. 
For simplicity, assume that $t'_1[j]=d$. Then there exists $p\in
\{1,\ldots, 6\}$ such that $ t'_1[p]=1, t'_2[p]=0, t'_3[p]=0$. Let $w$
be the variable at position $p$ in $c_1$, and add the unary relation
$\const{0}(w)$. 
Now, let $y$ be the variable at position $l$ in $c_1$ and let $z$ be the
variable at position $m$ in $c_2$. The variable $x$ implies that $y,z$
will always be assigned the same value by a solution to $I$. Hence, we
identify $z$ with $y$, introduce a fresh variable $\hat{x}$, and
replace $x$ at the $i$th position of $c_1$ with $\hat{x}$.

\item $|S|=1$.
Assume $S=\{a\}$, $\pro_i(R')=\{a,b\}$ (where $a,b$ are distinct elements),
and $\pro_j(R')=\{a,d,d'\}$ (where $a,d,d'$ are not necessarily distinct).
Define $f:\{a,b\}\rightarrow \{0,1\}$ such that $f(a)=0$ and $f(b)=1$,
and $g:\{a,d,d'\} \rightarrow \{0,1\}$
such that $g(a)=0$ and $g(x)=1$ if $x \neq a$.
It is not hard to see that there exists $l,m \in \{1,\dots,8\}$ such that
$f(t'_r[i])=t'_r[l]$ and $g(t'_r[j])=t'_r[m]$ when $r \in \{1,2,3\}$.
Let $y$ be the variable at position $l$ in $c_1$ and $z$ be the variable
at position $m$ in $c_2$.
Add the unary relations $\const{0}(y)$ and $\const{0}(z)$,
introduce a new variable
$\hat{x}$, and replace $x$ at the $i$th position of $c_1$ with $\hat{x}$.

\item $|S|=0$.
This implies $I_1$ is unsatisfiable, and we simply output 
an arbitrary unsatisfiable instance. % such as $(\{x\},\{R(x,\dots,x)\})$.
\end{itemize}

By repeating the procedure above until a fixpoint is reached, we will
obtain an instance $I_1=(V_1,C_1)$ such that if $x\in V_1$ and if $x$
appears in a constraint $c\in C_1$ at position $n+1, \ldots, n'$, then
it does not appear in any other constraint. 
However, it is still possible that $x\in V_1$ appear more than once in
a single constraint $c\in C_1$ where (at least) one of the occurrences
of $x$ is at position $n+1,\dots,n'$.  Therefore, assume that $x$
appears in positions $i$ and $j$ in $c\in C_1$ where $i\in
\{n+1,\ldots n'\}$ and $j\in \{1,\ldots n'\}$.  Let $L\subseteq
\{1,2,3\}$ denote the set $\{l \; | \; t'_l[i]=t'_l[j]\}$.

\begin{itemize}
  
\item $|L|=3$.
This is not possible since there are no redundant arguments in the relation 
$R'$.

\item $|L|=2$.
Assume (without loss of generality) that $t'_1[i]=t'_1[j]$, 
$t'_2[i]=t'_2[j]$, and
$t'_3[i] \neq t'_3[j]$. Pick $k \in \{1, \ldots, 8\}$ such that
$t'_1[k]=t'_2[k] \neq t'_3[k]$. Let $y$ be the 
variable that appear in the $k$th position in $c$. Add a 
unary constraint $\const{t'_1[k]}(y)$, introduce a fresh variable $\hat{x}$, 
and replace the $x$ at position $i$ in $c$ with $\hat{x}$.

\item $|L|=1$.
Without loss of generality we can assume that
$t'_1[i]=t'_1[j]$. For each variable $y$ occurring in the $l$th
position in $c$ add the unary constraint $\const{t'_1[l]}(y)$, and
then remove the constraint $c$.

\item $|L|=0$.
This implies that $I_1$ is unsatisfiable, and we simply output
an arbitrary unsatisfiable instance. 
\end{itemize}

Repeat the procedure above until a fixpoint is reached and let
$I_2=(V_2,C_2)$ be the resulting instance. 
Observe that a variable $x$ that occurs in a constraint at position
$n+1,\dots,n'$ only occur in a single constraint and in a unique
position.  Finally, let $I_3=(V_3,C_3)$ be the instance of $\CSP(R)$
obtained by replacing each constraint $R'(x_1, \ldots, x_n, x_{n+1},
\ldots, x_{n'}) \in C_2$ by $R(x_1, \ldots, x_n)$.  Note that every
fresh variable $\hat {x}$ that were introduced in the previous steps
are removed in the conversion of $I_2$ into $I_3$. This shows that the
reduction is indeed a CV-reduction.
\end{proof}

\lemthreechoicethree*
\begin{proof}
  Let $n = \ar(R)$. Choose three distinct values $d_1, d_2, d_3 \in D$ such that there
  does not exist any $i$ such that $\pro_i(R) = \{d_1, d_2, d_3\}$. If
  no such $i$ exists then $\pcclone{R} = \pcclone{\RD{D}}$, and we are
  done. First, construct the relation $S$ such that $\pro_{1, \ldots, n}(S) =
  R$, $\pro_{n+1}(S) = \{d_1, d_2, d_3\}$, and then add the minimum
  number of arguments to make $S$ saturated. Second, let $S'$ be the
  relation obtained from $S$ by projecting away every argument $i$ of
  the form $\pro_{i}(S) = \{d_1, d_2, d_3\}$.
  In other words, $S'$ is
  equivalent to $R$, except that it potentially contains more 1-choice and 2-choice
  arguments. Note that $S'$ is saturated. Via Lemma~\ref{lem:2choice} it then follows that
  $\CSP(S') \reduces \CSP(R)$, and an application of Lemma~\ref{lem:3choice2}
  gives the desired result that $\CSP(S) \reduces \CSP(S') \reduces
  \CSP(R)$. This procedure can be repeated
  arbitrarily many times, which implies that $\CSP(\RD{D}) \reduces
  \CSP(R)$.
\end{proof}

\thmmainresultext*

\begin{proof}
  ${\sf T}(\{\RD{D}\}) \leq {\sf
    T}(\{\RD{D}\} \cup 2^D)$ holds trivially. To prove ${\sf
    T}(\{\RD{D}\} \cup 2^D) \leq {\sf T}(\{\RD{D}\})$ we show that
  $\CSP(\{\RD{D}\} \cup 2^D) \reduces \CSP(\RD{D})$. Since we have
  already seen many reductions akin to this we only provide a sketch. Let $(V,C)$ be an
  instance of $\CSP(\{\RD{D}\} \cup 2^D)$. Assume $x \in V$ appears in
  a unary constraint $E(x) \in C$. If $x$ also appears in another
  unary constraint $E'(x)$ then these two constraints can be replaced
  by $E \cap E'(x)$; hence, we may assume that each variable occurs in
  at most one unary constraint. If $x$ does not occur in any other
  constraint, then we first check if $E = \emptyset$. If this is the
  case, the instance is unsatisfiable and we abort the procedure, and
  otherwise we simply remove the constraint $E(x)$. Now assume that
  $x$ also appears in the $i$th position in a constraint $\RD{D}(x_1,
  \ldots, x_{i-1}, x, x_{i+1}, \ldots, x_{\ar(\RD{D})})$. If $E \cap
  \pro_i(\RD{D}) = \emptyset$ then the instance is unsatisfiable, and
  if $E = \pro_{i}(\RD{D})$ then we may safely remove the constraint
  $E$. Therefore assume that either $|\pro_i(\RD{D}) \cap E| = 1$ or
  that $|\pro_i(\RD{D}) \cap E| = 2$. The first of these cases is easy
  to handle since it implies that $x$ is forced a constant value in
  any satisfying assignment. The second case implies that $x$ appears
  in a 3-choice position, i.e., $\pro_i(\RD{D}) = \{d_1, d_2, d_3\}$,
  for three distinct values $d_1, d_2$, and $d_3$. Assume that $E =
  \{(d_1), (d_2)\}$, and let $t \in \RD{D}$ be the tuple satisfying
  $t[i] = d_3$. Let $\{s,u\} = \RD{D} \setminus \{t\}$ and choose $j$
  such that $s[j] = s[i]$, $u[i] = u[j]$, and $t[j] \in
  \{s[j],u[j]\}$. Then identify $x$ with the variable $x_j$ throughout
  the instance.  If we repeat this procedure for the remaining constraints
  containing $x$, remove the constraint $U(x)$, and then continue with all remaining unary constraints, we will
  obtain an instance of $\CSP(\RD{D})$ which is satisfiable if and
  only if $(V,C)$ is satisfiable.
\end{proof}

\end{document}